\newcommand{\e}{\ensuremath{e}}
\newcommand{\F}{\ensuremath{\mathcal{F}}}
\newcommand{\T}{\ensuremath{\mathcal{T}}}
\newcommand{\E}{\ensuremath{\mathcal{E}}}
\newcommand{\etal}{\textit{et al.}\xspace}
\newcommand{\mypar}[1]{\smallskip\noindent{\bfseries\boldmath#1}}
\newcommand{\tor}{\!\to\!}
\newtheorem{property}{Property}
\newtheorem{theorem}{Theorem}
\newtheorem{lemma}{Lemma}
\newtheorem{observation}{Observation}
\title{Dynamic Embeddings of Dynamic Single-Source Upward Planar Graphs}
\author{Ivor van der Hoog\thanks{
    Algorithms, Logic and Graphs. Technical University of Denmark, Denmark. \texttt{idjva@dtu.dk}.
  }, Irene Parada\thanks{Geometric Computing, Utrecht University, Netherlands, \texttt{	i.m.deparada@uu.nl}.}, Eva Rotenberg\thanks{Algorithms, Logic and Graphs. Technical University of Denmark, Denmark. \texttt{erot@dtu.dk}.}}
 \date{}
\begin{document}

%
\thispagestyle{empty}
\maketitle

\setcounter{page}{0}
\begin{abstract}
A directed graph $G$ is \emph{upward planar} if it admits a planar embedding such that each edge is $y$-monotone. Unlike planarity testing, upward planarity testing is NP-hard except in restricted cases, such as when the graph has the single-source property (i.e. each connected component only has one source).

In this paper, we present a dynamic algorithm for maintaining a combinatorial embedding $\mathcal{E}(G)$ of a single-source upward planar graph subject to edge deletions, edge contractions, edge insertions upwards across a face, and single-source-preserving vertex splits through specified corners.
We furthermore support changes to the embedding $\mathcal{E}(G)$ on the form of subgraph flips that mirror or slide the placement of a subgraph that is connected to the rest of the graph via at most two vertices. 

All update operations are supported as long as the graph remains upward planar, and all queries are supported as long as the graph remains single-source.
Updates that violate upward planarity are identified as such and rejected by our update algorithm.
We dynamically maintain a linear-size data structure on $G$ which supports incidence queries between a vertex and a face, and upward-linkability of vertex pairs.
If a pair of vertices are not upwards-linkable, we facilitate one-flip-linkable queries that point to a subgraph flip that makes them linkable, if any such flip exists. 

We support all updates and queries in $O(\log^2 n)$ time.
\end{abstract}

\thispagestyle{empty}
\newpage
\section{Introduction}

When visualizing acyclic directed graphs, it is desirable to display the underlying hierarchy. 
A directed graph is \emph{upward planar} if it admits a drawing that is both \emph{upward} (every edge has  monotonically increasing $y$-coordinates) and \emph{planar} (without crossings). 
Upward planar drawings highlight the hierarchy of the directed graph and upward planarity is a natural analogy of planarity for directed graphs. 
We can test if a graph admits a planar embedding in linear time since 1974~\cite{HopcroftT74planarity}.
In sharp contrast, testing upward planarity is, in general, NP-complete~\cite{GargT01testing}. 

In this work we consider a dynamic $n$-vertex single-source directed graph $G$ with an upward embedding $\E(G)$ and design a data structure that supports updates to the graph (that preserve upward planarity and $G$ being single-source), flips in the embedding as well as uplinkability and one-flip uplinkability queries in $O(\log n^2)$ time. 
We remark that the bounds we obtain in this more challenging upward-planar setting match the ones of the planar case~\cite{holm2017dynamic} 
and generalize the dynamic data structure for single-source directed graphs with a fixed embedding and outer face~\cite{RextinH17dynamicup}.
In particular, we allow not only changes to the embedding, we allow changes to the outer face including the out-edges of the source.

For planar graphs, a corresponding dynamic data structure that supports similar queries and changes to the embedding~\cite{holm2017dynamic} has been proven instrumental for (fully) dynamic planarity testing of a dynamic graph $G$~\cite{DBLP:conf/soda/HolmR20}. 
An additional motivation for this work is the hope that with additional insights and techniques, it would serve as a stepping stone towards a fully-dynamic algorithm for upward planar graphs.



\mypar{Upward planarity.}
Upward planar graphs are spanning subgraphs of planar $st$-graphs~\cite{BattistaT88stgraphs,Kelly87stgraphs,Platt76stgraphs}. 
While testing upward planarity is NP-complete for general graphs~\cite{GargT01testing}, 
polynomial algorithms have been devised for several classes of directed graphs. 
One of the most relevant such results is the linear algorithm for single-source directed graphs~\cite{BertolazziBMT98optimal,bruckner2019spqr,HuttonL96source}. 
Other classes for which upward planarity can be tested in polynomial time are 
graphs with a fixed (upward) embedding~\cite{BertolazziBLM94triconnected},  
outerplanar graphs~\cite{Papakostas94outerplanar}, 
and series–parallel graphs~\cite{DidimoGL09SPgraphs}. 
Upward planarity testing is fixed-parameter tractable in the number of triconnected components and cut vertices~\cite{Chan04fpt,HealyL06fpt}, and in the number of sources~\cite{ChaplickGFGRS22param}. 
Regarding dynamic algorithms, 
it can be checked in $O(\log n)$ amortized time whether an embedded single-source upward planar directed graph with a fixed external face can stay as such in the presence of edge insertions and edge deletions without changing the embedding~\cite{RextinH17dynamicup}. 
There are no existing worst-case results for dynamic upward embeddings and no results for dynamic upward embeddings subject to flips in the embedding.
The study of upward planar graphs continues to be a prolific area of research, including recent developments in 
parameterized algorithms for upward planarity~\cite{ChaplickGFGRS22param}, 
bounds on the page number~\cite{JungeblutMU22pagenr},
morphing~\cite{LozzoBFPR20morph}, 
and extension questions~\cite{bruckner2019spqr,LozzoBF20extend}. 
In particular, it can be tested in $O(n^2)$ time whether a given drawing can be extended to an upward planar drawing of an $n$-vertex single-source directed graph $G$~\cite{bruckner2019spqr}. 
For $st$-graphs the running time was recently improved to $O(n \log n)$~\cite{LozzoBF20extend}. 
This contrasts with the linear-time extension algorithm for the planar undirected case~\cite{AngeliniBFJKPR15extending}.

\mypar{Dynamic maintenance of planar (embedded) graphs.} 
Dynamic maintenance of graphs and their embedding is a well-studied topic in theoretical computer science \cite{DBLP:conf/icalp/BattistaT90,DBLP:journals/siamcomp/BattistaT96,DBLP:conf/soda/Eppstein03,DBLP:journals/jcss/EppsteinGIS96,DBLP:conf/stoc/GalilIS92,DBLP:conf/esa/HenzingerP95,holm2017dynamic,DBLP:conf/stoc/HolmR20,DBLP:conf/soda/HolmR20,DBLP:conf/stoc/Poutre94,DBLP:conf/icalp/Westbrook92}. 
In this area, we typically study some graph $G = (V, E)$ subject to adding or removing edges to the edge set $E$. 
A combinatorial embedding $\E(G)$ specifies the outer face and for each vertex $v \in V$ a cyclical ordering of the faces incident to  $v$.
One famous dynamic maintenance of an embedding is the work by Eppstein~\cite{DBLP:conf/soda/Eppstein03} who studies maintaining an embedding $\E(G)$ subject to edge deletions and insertions across a specified face, plus the genus of $\E(G)$ as a crossing free drawing, in $O(\log n)$ time. 
Henzinger, Italiano, and La Poutre give a dynamic algorithm for maintaining a plane embedded graph subject to edge-deletions and insertions across a face, while supporting queries to whether a pair of vertices presently share a face in the embedding, in $O(\log ^2 n)$ time per operation~\cite{DBLP:conf/esa/HenzingerP95}. 
Holm and Rotenberg~\cite{holm2017dynamic} expand on the result of \cite{DBLP:conf/esa/HenzingerP95} by additionally supporting 
\emph{flips} (operations which change the combinatorial embedding but not $G$). They show how to support all operations in $O(\log^2 n)$ time, rejecting operations that would violate the planarity of $\E(G)$.
In addition, their data structure supports \emph{linkability} queries which, for a pair of vertices, report a sequence of faces across which they are linkable, or singular flips in the embedding that would make the two vertices linkable.

For dynamic planarity testing, that is, maintaining a bit indicating whether the dynamic graph is presently planar, there has been a body of work 
\cite{DBLP:conf/icalp/BattistaT90,DBLP:journals/siamcomp/BattistaT96,DBLP:journals/jcss/EppsteinGIS96,DBLP:conf/stoc/GalilIS92, DBLP:conf/stoc/HolmR20,DBLP:conf/soda/HolmR20, DBLP:reference/algo/Patrascu08,DBLP:conf/stoc/PatrascuD04,DBLP:conf/stoc/Poutre94,DBLP:conf/icalp/Westbrook92}.
The current state-of-the-art algorithm for incremental planarity testing is by Holm and Rotenberg~\cite{DBLP:conf/soda/HolmR20} in $O(\log^3 n)$ worst case time per edge insertion.
Here, they crucially rely upon an $O(\log^2 n)$ fully dynamic algorithm for the dynamic maintenance of a planar embedding $\E(G)$~\cite{holm2017dynamic}.

\mypar{Contribution and organisation.}
Let $G$ be a single-source, upward planar directed graph and $\E(G)$ an upward planar combinatorial embedding.
Section~\ref{sec:preliminaries} contains our preliminaries: and formalizes our setting and the type of updates and queries that we allow.
In Section~\ref{sec:datastructure} we present our linear size dynamic data structure that maintains $G$, and a combinatorial representation of $\E(G)$ subject to an edge insertion across a face, edge deletion, vertex splitting, edge contraction, and `flip'-operations that perform local changes to the embedding,
in $O(\log ^2 n)$ time per operation. 
We illustrate for an edge insertion between two corners $c_u$ and $c_v$ in the embedding how to update our data structure and verify whether inserting the edge $c_u \tor c_v$ violates upward planarity.
In Section~\ref{sec:queries} we extend this data structure to support \emph{uplinkable queries}: i.e. queries to whether a directed edge may be inserted across a face in the current embedding such that the resulting combinatorial embedding remains upward planar. 
Specifically, our algorithm is capable of reporting all suitable faces (and corners) of the combinatorial embedding across which an edge from $u$ to $v$ may be inserted. In the negative case, Section~\ref{sec:queries} furthermore shows how to support queries to whether minor changes to the embedding would allow us to insert the edge; namely, the \emph{one-flip uplinkable} query which answers whether
there is a sequence of flips involving at most two faces after which 
$u$ and $v$ are uplinkable, and outputs such a sequence if one exists.


\newpage
\section{Preliminaries}
\label{sec:preliminaries}

Let $G = (V, E)$ be a directed graph (\emph{digraph}) where $V$ is a set of vertices and $E$ is a set of ordered pairs of vertices: each pair $(u, v)\in E$ representing an edge directed from $u$ to $v$. 
We introduce the concepts and data structures that will be used in the rest of the paper.

\mypar{(Upward) combinatorial embeddings.} 
A \emph{combinatorial embedding} $\E(G)$ of $G$ specifies for every vertex $v \in V$ a (counter-clockwise) cyclical ordering of the edges incident to~$v$. 
Note that this defines a set of faces $F$ where for every $f \in F$ there is a cyclical ordering of vertices incident to $f$, and for convenience we explicitly store this. 
An \emph{upward (combinatorial) embedding} additionally stores for every vertex $v \in V$ for any consecutive pair of edges whether their angle is reflex. 
An (upward combinatorial) embedding $\E(G)$ is upward planar whenever there exists an upward planar drawing $D(G)$ matching $\E(G)$. 

\mypar{Corners and the face-sink graph.}
In a digraph a \emph{source} (resp. a \emph{sink}) is a vertex with only incoming (resp. outgoing) edges; 
the other vertices are called \emph{internal}.
For any embedding, we define a \emph{corner} of the embedding for every 4-tuple $(v, f, e_1, e_2)$ where $v\in V$ is incident to a face $f$, $e_1, e_2 \in E$ are both incident to $v$ and $f$, and $e_1, e_2$ are consecutive in the counter cyclical order around $v$.\footnote{For the trivial graph consisting of a singleton vertex, there is a corner which is just the two-tuple $(v,f)$.}
A \emph{sink corner} is a corner where $e_1$ and $e_2$ are both directed towards $v$.
A \emph{top corner}
is a sink corner where the angle between $e_1$ and $e_2$ is convex.
If a vertex $v$ is incident to a sink corner of $f$ which is not a top corner, we refer to $v$ as a \emph{spike} of $f$.
A \emph{critical vertex} is an internal vertex
in~$G$ incident to at least one sink corner.
Bertolazzi \etal~\cite{BertolazziBMT98optimal} define for any (combinatorial) embedding $\E(G)$ of a single-source digraph $G$, its face-sink graph  $\F(\E(G))$. 
We slightly modify their definition: 
we define $\F(\E(G))$ as bipartite graph between the vertices $V$ and faces $F$ of $\E(G)$ where there is a directed edge from $v\in V$ to $f\in F$ whenever they share a spike, and a directed edge from $f$ to $v$ whenever they share a top corner. Bertolazzi \etal show:

\begin{theorem}[Theorem 1 in \cite{BertolazziBMT98optimal}, Fact 2 + 3]
\label{thm:facesink}
Let $G$ be a single-source digraph.
An upward embedding $\E(G)$ of $G$ is upward planar if and only if:
\begin{itemize}[noitemsep, nolistsep]
    \item $\F(\E(G))$ is a forest of trees $\T^*, \T_1, \T_2, \ldots \T_m$ where:
    \item $\T^*$ has as root the outer face of $\E(G)$ which is incident to the unique source of $G$,
    \item $\T_1, \ldots \T_m$ each have as root the unique critical vertex in $\T_i$. 
\end{itemize}
\end{theorem}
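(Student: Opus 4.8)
The plan is to relate combinatorial features of $\F(\E(G))$ to geometric features of a (given or hypothetical) upward planar drawing, and to prove the two directions separately; we may assume $G$ is connected, so $s$ is genuinely unique. Two facts about any upward planar drawing $D(G)$ will be used repeatedly: by $y$-monotonicity of edges, at every vertex the incoming edges are contiguous in the rotation and the outgoing edges are contiguous (bimodality), so every vertex has at most one reflex corner; and, since $G$ is acyclic (being upward planar) and single source, every vertex is reachable from $s$ along a $y$-monotone directed path, so $s$ is the unique globally lowest vertex and hence lies on the outer face $f_{\mathrm{out}}$.

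\textbf{Necessity.} Fix an upward planar drawing $D(G)$ realizing $\E(G)$ and classify the degrees in $\F$. At an internal vertex $v$ all incoming edges lie strictly below $v$, so each sink corner at $v$ has angle $<\pi$ and is a top corner; hence $v$ is never a spike, has $\F$-out-degree $0$, and has $\F$-in-degree equal to (its number of incoming edges)$\,-1$, which is $\ge 1$ precisely when $v$ is critical. A sink has exactly one reflex corner — the one containing the upward vertical direction — and it is a sink corner, so a sink is a spike of exactly one face and has $\F$-out-degree $1$. An internal face $f$ attains its topmost point at a convex sink corner, and a switch/turning-number count along $\partial f$ (total turning $+2\pi$) shows this top corner is unique, so $f$ has $\F$-out-degree exactly $1$; whereas $f_{\mathrm{out}}$ has $\F$-out-degree $0$, since a convex sink corner of $f_{\mathrm{out}}$ would force the downward wedge there to be a bounded region (its boundary closes up because every backward path traces back to $s$), contradicting unboundedness. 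Thus every node of $\F$ has out-degree $\le 1$, and the only out-degree-$0$ nodes are $f_{\mathrm{out}}$ and internal vertices. For acyclicity, a directed cycle would alternate sinks and internal faces, and if $t\tor f\tor t'$ lies on it then $t$ is a spike of $f$, so $y(t)<\max_{f}y$, while $t'$ is the unique top of $f$, so $y(t')=\max_{f}y$; thus $y$ strictly increases along every directed path of length two, ruling out directed cycles, and since every out-degree is $\le 1$ this also rules out undirected cycles, so $\F$ is a forest. Each non-trivial tree has a unique out-degree-$0$ node as its root: the one containing $f_{\mathrm{out}}$ is $\T^*$, and $f_{\mathrm{out}}$ is incident to $s$; every other non-trivial tree is rooted at an internal vertex of positive in-degree, i.e. a critical vertex, which is the unique one in its tree (the remaining nodes — isolated vertices incident to no sink corner, such as $s$ — form trivial trees and are immaterial).

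\textbf{Sufficiency.} Conversely, assume $\F(\E(G))$ has the stated forest structure. I would realize $\E(G)$ by augmenting $(G,\E(G))$, without changing the embedding of $G$, to a planar $st$-graph: place a new sink $t^{*}$ in $f_{\mathrm{out}}$ joined to each spike of $f_{\mathrm{out}}$, and inside every internal face $f$ add edges from each spike of $f$ (and dually from each convex source switch other than the bottom of $f$) routed toward the unique top corner $t(f)$ of $f$. The forest-and-roots hypothesis is exactly what lets these saturating edges be oriented upward consistently — heights increase along root paths in $\F$, so no added edge closes a directed cycle — whence the augmented digraph $G'$ is acyclic, has $s$ as its only source and $t^{*}$ as its only sink, is biconnected, and carries an embedding extending $\E(G)$ in which every face has exactly one top and one bottom corner; that is, $\E(G')$ is an $st$-planar embedding. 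By the classical fact that every planar $st$-graph admits an upward planar drawing realizing any of its planar embeddings, $G'$ has such a drawing, and deleting the added edges and $t^{*}$ leaves an upward planar drawing realizing $\E(G)$.

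I expect the crux to be the sufficiency direction: making the per-face augmentation precise — deciding inside each face which spike connects to what, and verifying the result is simultaneously planar (consistent with $\E(G)$), bimodal, acyclic, biconnected and single-source/single-sink — is where the forest hypothesis must be used in full, essentially a re-derivation that the saturating edges never close a cycle. On the necessity side the only non-immediate ingredient is the uniqueness of an internal face's top corner, which I would obtain from the standard switch/turning-number bookkeeping for upward faces.
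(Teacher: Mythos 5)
First, note that the paper does not prove this statement: it is imported (in a lightly re-phrased, directed form of the face-sink graph) from Bertolazzi et al., so there is no in-paper proof to compare against; your proposal reconstructs the classical argument, and its architecture — necessity by classifying in/out-degrees in $\F(\E(G))$ from a drawing, sufficiency by saturating $(G,\E(G))$ to a planar $st$-graph and invoking the fact that planar $st$-graphs admit upward drawings realizing any embedding — is indeed the standard route. But as written it has two genuine gaps.

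In the necessity direction, the claim that an internal face has a \emph{unique} top corner does not follow from ``a switch/turning-number count along $\partial f$ (total turning $+2\pi$)''. That count only gives, for an internal face with $k$ sink and $k$ source switches, $\#\text{small}-\#\text{large}=2$, i.e.\ $k+1$ small angles among all $2k$ switches; this proves \emph{existence} of a convex sink switch but is perfectly consistent with two of them (a face whose upper boundary is an `M' — two convex sink corners at the peaks, one reflex source corner at the valley — satisfies the count, and such faces genuinely occur in upward planar drawings with two sources). Uniqueness is precisely where the single-source hypothesis must be used (intuitively, the valley between two top corners of $f$ is either a second source or has an incoming edge that must be embedded inside $f$, splitting it), and your argument never invokes it at this step. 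Since ``every face has at most one parent'' is what makes $\F(\E(G))$ a forest of in-trees with the stated roots, this is the crux of necessity, not bookkeeping. (Your outer-face argument, by contrast, does use the single source in the right way.) In the sufficiency direction the proof is, by your own admission, only a plan, and the one-line justification for acyclicity — ``heights increase along root paths in $\F$, so no added edge closes a directed cycle'' — is circular: there are no heights until the drawing being constructed exists. The real argument must process the forest combinatorially (e.g.\ by induction saturating faces/trees of $\F(\E(G))$ from the leaves), verifying at each added edge that acyclicity, the embedding, bimodality, and single-source/single-sink are preserved, and moreover that the construction respects the reflex/convex flags prescribed by the upward embedding $\E(G)$, since the theorem asserts realizability of that specific upward embedding, not merely of the rotation system. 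Until that is carried out, the ``if'' direction is not proved.
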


\mypar{Articulation slides, twists and separation flips.}
For any graph $G$ with embedding $\E(G)$, an \emph{articulation point} is a vertex $w$ such that its removal separates a connected component $G'$ of $G$ into at least two connected components $C_1,C_2,\ldots$: the \emph{articulation} components. 
In a planar graph, articulation points have two corners incident to the same face.
Given an articulation point $w$, let $C_w(v)$ be the component containing $v$ in $G\setminus w$. The following operations change the embedding while leaving the graph fixed.
An \emph{articulation slide} $F_s = (f, w, v)$ changes $\E(G)$ as follows:
we cut $C_w(v)$ from $w$ and embed it into $f$ merging $C_w(v)$ with $w$ (Figure~\ref{fig:flips}(a)). 
We refer to this operation as \emph{sliding} 
into $f$.

An \emph{articulation twist} $F_t = (w, v)$ changes $\E(G)$ by isolating the component $C_w(v)$ by cutting through $w$, mirroring its embedding, and merging it back with $w$ 
(Figure~\ref{fig:flips}(b)). 

A \emph{separation pair} in a graph $G$ is any pair of vertices whose removal separates a connected component into at least two components, and there exists some cycle whose vertices get separated into different components.
In a planar graph, $(x, y)$ are a separation pair if they are both incident to the same two faces.
The \emph{separation flip} $F = (f,g,x, y, v)$ modifies $\E(G)$ as follows: we split $x$ and $y$ along the corners incident to $f$ and $g$,
mirror the embedding of the component containing $v$,
and merge it back into $G$ 
(Figure~\ref{fig:flips}(c)).

\begin{figure}[tb]
\centering
\includegraphics{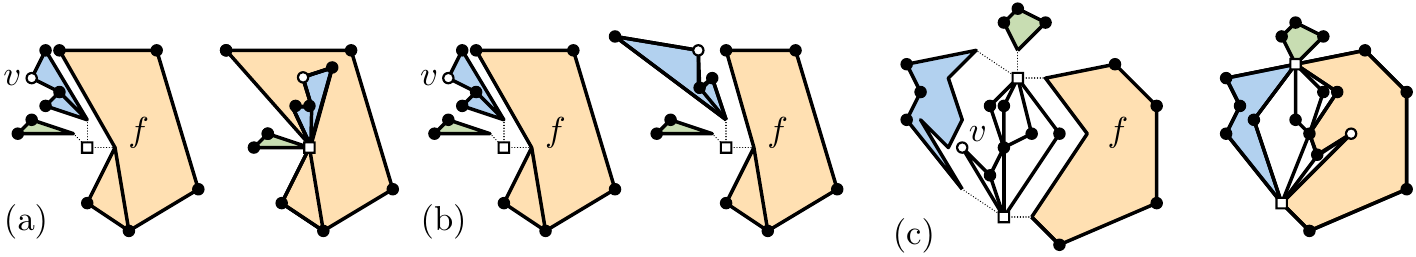}
\caption{
(a) An articulation vertex $w$ with $v$ in the blue component $B$.
$F_s = (f, w, v)$ slides the blue component $B$ into $f$. 
(b) $F_t = (w, v)$ mirrors $B$. 
(c) The square vertices form a separation pair. 
We show the separation flip after which $v$ becomes incident to $f$. 
\vspace{-0.4cm}}
\label{fig:flips}
\end{figure}

\mypar{Projected vertices and  conflicts.}
Let $f$ be a face in $\E(G)$ and $v$ be a vertex incident to~$f$. 
We denote by $\pi_f(v)$ the \emph{projected articulation vertex};\label{def:piv}
the articulation point that isolates the maximal (by inclusion) subgraph containing $v$ from the sink of $f$ (if any such exists).
Let $c_u, c_v$ be two corners incident to a face $f$ where $c_u$ is not the top corner. 
We say that $c_u \tor c_v$ is \emph{conflicted} (Figure~\ref{fig:insertion}(a)) in $f$ if: 
the subwalk $\pi_b$ incident to $f$ from $c_v$ through $c_u$ (that does not include the top corner of $f$ or source of $G$) is:
directed towards $c_u$, or intersects a sink corner before it intersects a source corner.
We say $u \tor v$ is conflicted in $f$ whenever for all corners $c_u$ incident to $u$ and $c_v$ incident to $v$, $c_u \tor c_v$ is conflicted.

\begin{lemma}
\label{lemma:conflict}
Let $c_u$ and $c_v$ be corners sharing a face $f$.
If $c_u$ is a top corner, or $c_u \tor c_v$ is conflicted in $f$, the edge $c_u \tor c_v$ violates upward planarity.
Otherwise, we can insert $c_u \tor c_v$ and create corners $c_u^{b}, c_v^{b}$ 
neighbouring $c_u$ and $c_v$ 
incident to $\pi_b$ at an acute angle.
\end{lemma}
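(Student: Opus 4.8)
The plan is to use Theorem~\ref{thm:facesink} as the certificate of upward planarity and to argue locally about how inserting $c_u \tor c_v$ modifies the face-sink graph $\F(\E(G))$. First I would set up notation: inserting $c_u \tor c_v$ splits $f$ into two faces $f_1$ and $f_2$, where $f_1$ is the face bounded by the new edge together with the subwalk $\pi_b$ (the one not containing the top corner of $f$ or the source), and $f_2$ is the face containing the old top corner of $f$. The key observation is that within a single face, the sink corners, top corner, and spikes are determined purely by the local angles, so I can track exactly which sink/top corners survive, which are destroyed, and which are created at $c_u$ and $c_v$.

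For the \emph{negative} direction, I would argue in two cases. If $c_u$ is a top corner of $f$: then the new edge leaves $u$ upward on the side where $u$ already had its convex sink corner, which forces the edge to be embedded into a reflex wedge, i.e. $f_1$ would have two top corners (or, dually, $\T^*$/some $\T_i$ would acquire a vertex of in-degree two or a cycle) — in any case the face-sink graph is no longer the required forest, contradicting Theorem~\ref{thm:facesink}. If $c_u\tor c_v$ is conflicted: then along $\pi_b$, reading from $c_v$ toward $c_u$, one meets a sink corner before any source corner (or the subwalk is directed entirely toward $c_u$). In either sub-case, after the insertion the face $f_1$ (bounded by $\pi_b$ and the new edge directed into $c_u$, with $c_v$ now an internal-looking corner) contains a sink corner that is not a top corner — a spike — whose only possible parent in $\F$ within $f_1$ is $c_u$, but $c_u$ is now a sink corner of $f_1$ pointing the wrong way; tracing this out shows $\F(\E(G))$ either disconnects a tree from its critical-vertex root or creates a vertex with two sink-children, again violating the forest/root structure of Theorem~\ref{thm:facesink}. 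The cleanest phrasing is: a conflict forces $\pi_b$ to contribute a sink inside $f_1$ that cannot be routed to a top corner of $f_1$, so $f_1$ has no valid top corner for its spikes, breaking the tree condition.

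For the \emph{positive} direction, suppose $c_u$ is not a top corner and $c_u \tor c_v$ is not conflicted. Then $\pi_b$ is not directed toward $c_u$ and meets a source corner before any sink corner when read from $c_v$; consequently we may route the new edge upward from $c_u$ into $f$, landing at $c_v$, and the two new corners $c_u^b$ and $c_v^b$ on the $\pi_b$-side are both acute (convex, non-sink) — this is exactly because the first extremal corner encountered along $\pi_b$ is a source, so locally the wedge at $c_u^b$ and at $c_v^b$ opens toward the interior of $f_1$ at an acute angle. I would then verify that $\F(\E(\cdot))$ remains a forest with the correct roots: $f_2$ inherits the old top corner of $f$, so whichever tree $f$ lived in is undisturbed on that side; $f_1$ is a new face whose spikes are a subset of those of $f$ (minus any that migrated to $f_2$), and its top corner is well-defined; the edges of $\F$ incident to $f$ are partitioned between $f_1$ and $f_2$ without creating cycles or extra roots, because the new edge $c_u\tor c_v$ introduces exactly one new internal-vertex incidence structure and the acuteness of $c_u^b,c_v^b$ guarantees no new spike is created at $u$ or $v$ inside $f_1$. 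By Theorem~\ref{thm:facesink} the resulting embedding is upward planar, and the projected articulation vertex $\pi_f(\cdot)$ is precisely what identifies $\pi_b$, so the corners are incident to $\pi_b$ as claimed.

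The main obstacle I anticipate is the positive-direction case analysis of \emph{which spikes and top corners migrate} when $f$ splits into $f_1$ and $f_2$, and checking that no tree of $\F$ gets its root (a critical vertex, or the source-incident outer face) duplicated or orphaned — in particular, handling the subtlety that $c_v$ itself might previously have been a spike of $f$ whose parent was some face on the $\pi_b$-side, and showing that after insertion it is correctly reattached. A careful bookkeeping of $\F$'s edges around $f$, together with the definition of $\pi_f(v)$ and the "source before sink along $\pi_b$" condition, should close this; the reflex/convex angle bookkeeping at $c_u^b$ and $c_v^b$ is the routine part.
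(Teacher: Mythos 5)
Your route---reading upward-planarity off the face-sink graph via Theorem~\ref{thm:facesink} and tracking which sink/top corners migrate when $f$ splits---is genuinely different from the paper's proof, which argues directly about heights in any upward drawing realizing $\E(G)$: a top corner at $c_u$ has both edges incoming to $u$ and opens downward, so an outgoing edge through that wedge cannot be drawn $y$-monotone; if $\pi_b$ is directed toward $c_u$ the insertion closes a directed cycle; and if $\pi_b$ meets a sink corner $c_s$ before any source corner, that sink is forced above $c_v$ (hence above $c_u$), again ruling out a $y$-monotone edge $c_u \tor c_v$. For the positive case the paper simply routes the new edge near $\pi_b$, using the first source corner on $\pi_b$ to place everything else slightly below it. Your face-sink instinct is not wrong in spirit---it is essentially the machinery the paper deploys later in the proof of Lemma~\ref{lem:uplinkable}, where it is really needed.

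As written, though, your argument has concrete errors. You describe the inserted edge as ``directed into $c_u$,'' but $c_u \tor c_v$ is directed \emph{out} of $u$; consequently $c_u^b$ sees an outgoing edge and is never a sink corner, so the ``spike whose only possible parent is $c_u$, but $c_u$ is a sink corner of $f_1$'' reasoning does not start. The assertion that the top-corner case produces a face $f_1$ with two top corners is stated, not proved, and the actual obstruction there is local and geometric at $u$, not a global forest violation. The cycle sub-case (where $\pi_b$ is a directed path from $c_v$ to $c_u$) is folded into the sink-corner sub-case, where your reasoning does not apply. Finally, you are proving too much in the positive direction: the lemma only claims that the edge can be inserted so that $c_u^b, c_v^b$ on the $\pi_b$ side are acute; it does \emph{not} claim that upward planarity of the resulting embedding survives---characterizing exactly when it does is the content of Lemma~\ref{lem:uplinkable}---so the forest-preservation bookkeeping you sketch is answering a different statement.
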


\begin{proof}
If $c_u$ is the top corner of $f$, then $c_u \tor c_v$ cannot be drawn $y$-monotone increasing. 
If $c_u \tor c_v$ is conflicted because the subwalk $\pi_b$ is monotone increasing then $c_u \tor c_v$ creates a cycle.
If $c_u \tor c_v$ is conflicted due to some sink corner $c_s$, then the subwalk from $c_v$ to $c_s$ must be monotonely increasing.
Thus, for all drawings $D(G)$ of $\E(G)$, $c_s$ must be higher than $c_v$ (and thus when inserting $c_u \tor c_v$, higher than $c_u$). 
It follows that the edge $c_u \tor c_v$ cannot be drawn $y$-monotone increasing. 

If $c_u \tor c_v$ is not conflicted, the path $\pi_b$ is either monotone decreasing or includes a source corner $c_s'$ before a sink corner. 
Let $c_s'$ be the first such source corner.
In any drawing $D(G)$ of $\E(G)$, the corner $c_v$ is higher than $c_s'$: all other sink corners on $\pi_b$ slightly lower than the corner preceding $c_s'$. So, we can insert $c_u \tor c_v$ at acute angles near $\pi_b$.
\end{proof}

\newpage
\mypar{Update operations.}
We say that an update to $\E(G)$ violates upward planarity if 
it cannot be accommodated such that $\E(G)$ remains upward planar.
Two corners $c_1, c_2$ in $\E(G)$ are \emph{uplinkable} in $\E(G)$ whenever they share a face in $\E(G)$ and 
inserting the edge from $c_1$ to $c_2$ does not violate the upward planarity. 
We dynamically maintain $\E(G)$ subject to the following \emph{combinatorial} and \emph{embedding updates} (see Figure~\ref{fig:operations}) as long as they do not violate upward planarity. 
Combinatorial updates 
modify $G$ (and thus $\E(G)$):
\begin{itemize}[noitemsep, nolistsep]
    \item \textbf{\boldmath Insert$(c_u, c_v)$} for two corners $c_u$ and $c_v$, inserts the edge $c_u \tor c_v$ into $\E(G)$. 
    \item \textbf{\boldmath Delete$(e)$} for an edge $e \in E$, removes it from $\E(G)$.
    \item \textbf{\boldmath Cut$(c_1, c_2)$} for two corners $c_1$ and $c_2$ incident to a vertex $v$, replaces $v$ by $v_1$ and $v_2$ 
    Each of these two vertices becomes incident to a unique consecutive interval of edges incident to $v$ that is bounded by $c_1$ and $c_2$.
    \item \textbf{\boldmath Contract$(e)$} contracts an edge $e \in E$, merging the two endpoints.
    \item \textbf{\boldmath Mirror$(v)$} Mirrors the embedding of the subgraph containing $v$.
\end{itemize}

\noindent When $G$ has one source per connected component, we also support 
embedding updates that only change $\E(G)$ through flips, twists, and slides:
\begin{itemize}[noitemsep, nolistsep]
    \item \textbf{\boldmath Articulation-slide$(w,f,x)$} 
    executes the articulation slide $F_s = (f, w, v)$. 
    \item \textbf{\boldmath Articulation-twist$(w, x)$} executes the articulation twist $F_t = (w, v)$. 
    \item \textbf{\boldmath Separation-flip$(f, g, x, y, v)$} 
    executes the separation flip $F = (f, g, x, y, v)$. 
\end{itemize}
\noindent
Note that, by combining these operations, 
we may perform 
the changes to the embedding known as 
\emph{(Whitney) flips}. These
can transform between any two planar embeddings. 
Similarly, 
the operations above allow to transform between any two upward embeddings of a single-source digraph with the same leftmost edge around the source~\cite[Lemma 6]{bruckner2019spqr}.
\begin{figure}[tb]
\centering
\includegraphics[scale=0.9]{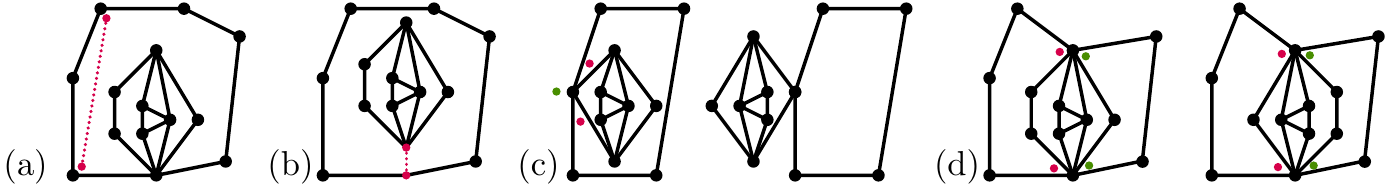}
\caption{
Operations: (a) insertion, (b) cut, (c) articulation-slide, and (d) separation-flip.
}\vspace{-0.4cm}
\label{fig:operations}
\end{figure}

\mypar{Queries.}
Our data structure supports the following queries:
\begin{itemize}[noitemsep, nolistsep]
    \item \textbf{\boldmath UpLinkable$(u, v)$} identifies in $O(\log^2 n)$ time all faces in $\E(G)$ across which $u \tor v$ is uplinkable. It can report these $k$ faces in $O(k)$ additional time.  
    \item \textbf{\boldmath Slide-UpLinkable$(u, v)$} identifies all slides $F_s = (f, w, v)$ and $F_s' = (f', w, u)$ where after the articulation slide, $u \tor v$ is uplinkable in $f$ (or $f'$). 
    \item \textbf{\boldmath Twist-UpLinkable$(u, v)$} reports the at most $2$ articulation twists $F_t = (w, v)$ and $F_t'  = (w, u)$ where afterwards, $u \tor v$ are uplinkable. 
    \item \textbf{\boldmath Separation-UpLinkable$(u, v)$} reports for $u \tor v$ not uplinkable, at least one separation flip $F = (f, g, x, y, v)$ or $F' = (f', g', x', y', u)$ after which $u$ and $v$ are uplinkable.
    \item \textbf{\boldmath One-Flip-UpLinkable$(u, v)$} reports whether there exists 
    two faces $f$ and $g$ and any sequence 
    of flips with only $f$ and $g$ as face arguments, after which $u \tor v$ are uplinkable (if this is the case a constant number of flips will suffice). 
    Note that we do not require that $u$ and $v$ don't share a face, contrary to its undirected analog~\cite{holm2017dynamic}.
\end{itemize}
\newpage
\section{Dynamic Planar Upward Embeddings}
\label{sec:datastructure}
Let $G$ be a single-source upward planar digraph and $\E(G)$ be some combinatorial (upward planar) embedding of $G$. The foundation of our data structure is the data structure by Holm and Rotenberg~\cite{holm2017dynamic} which can maintain a planar $\E(G)$ in $O(\log^2 n)$ subject to all updates. 
In addition, we maintain our data structure in $O(\log^2 n)$ time per update which can reject updates that would violate the upward planarity of $\E(G)$. 

Let $G$ be a digraph and $\E(G)$ be an upward planar combinatorial embedding of $G$. 
We dynamically maintain the face-sink graph $\F(\E(G))$ of $G$ in $O(\log^2 n)$ time per update through the following data structure where we store for every:
\begin{enumerate}[(a), noitemsep, nolistsep]
    \item face $f$, a balanced binary tree $T_f$ on the corners incident to $f$, ordered around $f$.
    \item face $f$, a balanced binary tree $T_f^*$  on the sink corners incident to $f$. $T^*_f$ stores the root-to-leaf paths to the unique top corner in $T^*_f$ (if it exists). \newline In addition, we store a tree $T_f'$ of source corners incident to $f$.
    \item vertex $v$, the tree $T_v$ of corners incident to $v$, plus the tree $T^*_v$ of sink corners incident to $v$, plus a Boolean indicating whether $v$ is critical.
    \item tree $\T$ in $\{ \T^*, \T_1, \ldots T_m \}$ of $\F(\E(G))$ a balanced tree over $\T$ (e.g. a top tree Apx.~\ref{ap:toptrees}).
\end{enumerate}
\noindent
Each corner $c$ in $\E(G)$ maintains a pointer to their location in the above data structures. If $c$ is a sink corner, it maintains a Boolean indicating whether its angle is reflex.

\begin{theorem}
\label{thm:dynamicdatastructure}
We can dynamically maintain $\E(G)$ and $\F(\E(G))$ subject to our updates in $O(\log^2 n)$ time per update 
(rejecting updates that violate the upward planarity of $\E(G)$).
\end{theorem}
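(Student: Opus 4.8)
The plan is to treat the planar data structure of Holm and Rotenberg~\cite{holm2017dynamic} as a black box that already maintains a planar $\E(G)$ together with the corner trees $T_f$ and $T_v$ under all our updates, each update being realised by $O(1)$ splits, joins and re-rootings of these balanced trees in $O(\log^2 n)$ total time. The new work is then threefold: (i) maintaining the \emph{upward} annotations, i.e.\ the reflex/convex flag of every sink corner; (ii) maintaining the face-sink graph $\F(\E(G))$ together with its decomposition into the trees $\T^*,\T_1,\dots,\T_m$ of Theorem~\ref{thm:facesink}; and (iii) using Theorem~\ref{thm:facesink} and Lemma~\ref{lemma:conflict} to detect and reject updates that would violate upward planarity. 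The organising observation is that every edge of $\F(\E(G))$ is charged to a distinct sink corner --- a spike edge $v\tor f$ or a top edge $f\tor v$ --- so $\F(\E(G))$ has $O(n)$ vertices and $O(n)$ edges, and the incident-edge set of a face node $f$ is naturally cyclically ordered around $f$; we store that set keyed by position around $f$, inside $T_f^\ast$.

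Next I would go through the updates one at a time and describe each as an $O(1)$-length list of \emph{elementary moves}. Delete, Contract, Cut and Insert each merge two faces into one, split one face into two, or merge/split a vertex, and in every case the reflex/convex flag changes for only $O(1)$ corners (those incident to the affected edge or vertex), so only $O(1)$ edges of $\F$ genuinely appear, disappear, or reverse orientation. The subtle point is a \emph{face split} --- an Insert across $f$, or a side effect of a Cut: the node $f$ must be replaced by two nodes $f_1,f_2$ and \emph{all} incident $\F$-edges redistributed, but the split is by cyclic position, hence it is two balanced-tree splits of the incident-edge set stored in $T_f^\ast$ (in $O(\log n)$ time), and a face merge is the corresponding join. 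The embedding updates Articulation-slide, Articulation-twist and Separation-flip detach a subgraph at one or two vertices, re-embed it (possibly mirrored), and re-attach it; via Holm--Rotenberg this is $O(\log^2 n)$ splices of the corner trees, and one checks that, since mirroring preserves the convexity of every interior corner and the detached block of edges is contiguous at each cut vertex, only $O(1)$ corners (at the cut vertices and host faces) change type, so the relocated subgraph carries its $\F$-edges along, with only $O(1)$ of them redirected to a new host face plus an $O(\log n)$ interval move of that face's $\F$-edge set.

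Maintaining the forest $\{\T^\ast,\T_1,\dots,\T_m\}$ itself I would do with a dynamic-forest (top-tree) structure on $\F(\E(G))$ in which each $\T_i$ is one tree; it supports \textsc{link}, \textsc{cut}, connectivity and ``report the current root'' in $O(\log n)$, augmented so that a face node can be split together with a cyclic interval of its incident $\F$-edges --- exactly the batched move above. For rejection we invoke Theorem~\ref{thm:facesink}: an update keeps $\E(G)$ upward planar iff $\F(\E(G))$ stays a forest whose tree roots are the outer face (for its component) and, for every other component, its unique internal critical vertex. Before committing we test with connectivity queries whether any newly added $\F$-edge closes a cycle, rejecting if so; after the elementary moves we recompute, per affected component, its root and check it against the required type, using a per-vertex criticality Boolean that can only change for the $O(1)$ vertices whose sink-corner multiset changed. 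For Insert we may instead apply Lemma~\ref{lemma:conflict} directly: check whether $c_u$ is the top corner of $f$ (via the stored root-to-leaf path in $T_f^\ast$) and whether $c_u\tor c_v$ is conflicted by locating, along the subwalk $\pi_b$, the first source corner and first sink corner --- augmenting $T_f$, $T_f'$ and $T_f^\ast$ with subtree min/max position markers --- and the projected articulation vertex $\pi_f(v)$ read off the embedding structure, all in $O(\log^2 n)$.

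I expect the main obstacle to be precisely the interaction between face splits/merges and subgraph relocations on one side and the forest structure on the other: a single elementary update can move $\Omega(\mathrm{deg})$ edges of $\F$ at once, so the data structure must support ``split this node and partition its incident-edge interval'' and ``relocate this whole subtree's worth of $\F$-edges'' in polylogarithmic, not edge-by-edge, time, which forces a careful interleaving of the balanced trees holding each face's $\F$-edges in cyclic order with the top trees over the $\T_i$. Once every update is shown to decompose into $O(1)$ such batched moves plus $O(1)$ genuine corner-type changes, the remaining correctness argument is bookkeeping against Theorem~\ref{thm:facesink} and Lemma~\ref{lemma:conflict}, and the time bound follows since each elementary move costs $O(\log n)$ in the balanced trees and $O(\log n)$ in the forest structure, dominated by the $O(\log^2 n)$ spent inside the Holm--Rotenberg substructure.
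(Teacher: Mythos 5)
Your proposal is correct and matches the paper's own proof in all essentials: the same auxiliary structures (per-face and per-vertex corner trees with sink/source-corner subtrees and top-corner markers, criticality Booleans, and embedding-respecting top trees over the components of $\F(\E(G))$ supporting batched split/join of a face's child interval), the same use of Lemma~\ref{lemma:conflict} for insertion and Theorem~\ref{thm:facesink} for accept/reject, and the same observation that each update reduces to $O(1)$ corner-type changes plus $O(1)$ interval moves costing $O(\log n)$--$O(\log^2 n)$. The only cosmetic difference is that you verify acyclicity of $\F(\E(G))$ by explicit connectivity queries, whereas the paper argues it structurally (only child-to-parent edges are added, and a face acquiring two top corners is rejected outright), which changes nothing in correctness or running time.
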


\begin{proof}[Proof Sketch]
Let the update change our graph $G$ into some graph $G'$.
By Holm and Rotenberg~\cite{holm2017dynamic}, we can maintain $\E(G)$ in $O(\log^2 n)$ time as long as $\E(G)$ remains planar. 
In addition, we show that w can maintain $\F(\E(G))$ and our data structure in $O(\log^2 n)$ time.
Our data structure then verifies whether $\E(G')$ is upward planar by testing whether the conditions for Theorem~\ref{thm:facesink} are met: and rejects the update accordingly (undoing all changes to $\E(G)$ in $O(\log^2 n)$). We show how to handle edge insertions (Figure~\ref{fig:insertion}(b)):

\mypar{Insert$(c_u, c_v)$}
We consider an edge insertion from $c_u$ to $c_v$ (between vertices $u$ and $v$).
Edge deletion is its inverse and is handled analogously. Given $(c_u, c_v)$, we identify in $O(\log n)$ time the face $f$ incident to $c_u$  as follows:  traverse a pointer from $c_u$ to the unique tree $T_f$ that contains $c_u$ as a leaf. Then we  traverse to the root of $T_f$. We do the same for $c_v$ and detect whether $c_u$ and $c_v$ share a face $f$ in $O(\log n)$ time.

We apply Lemma~\ref{lemma:conflict}: we first test if $c_u$ is the top corner of $f$.
If not, denote by $c^*$ the top corner.
We find the subwalk $\pi_b$ incident to $f$ from $c_u$ to $c_v$ which does not include $c^*$ (or the sink of $G$) in $O(\log n)$ time.
We use $T_f^*$ and $T_f'$ to test if the path $\pi_b$ includes a source corner before a sink corner.
If $\pi_b$ contains neither then it must be a directed path and we check if it is directed from $c_v$ to $c_u$ or vice versa in $O(1)$ additional time.
If it contains at least one sink corner, we test if $c_u \tor c_v$ is conflicted by comparing it to the first source corner on the path.
If $c_u \tor c_v$ is not conflicted, the edge splits $c_u$ into $c_b, c_t$, where $c_b$ is at an acute angle. 
We update our data structure:

\textbf{(a)} The edge splits $f$ into two faces $(f_b, f_t)$ (each a new node in $\F(\E(G'))$).
Let $f_b$ be incident to $\pi_b$ (and thus contain $c_b$).
We construct $T_{f_b}$: $f_b$ is incident to a subsequence $C(1)$ of corners incident to $f$. We obtain $C(1)$ from $T_f$ in $O(\log n)$ time as $O(\log n)$ balanced subtrees. We can merge these balanced binary trees into $T_{f_b}$ in $O(\log^2 n)$ total time.

\textbf{(b)} We construct $T_{f_b}^*$ (and $T_{f_t}^*$) with an identical procedure.
What remains, is to identify the top corners incident to $f_b$ and $f_t$. 
If $c_v$ was the top corner of $f$ then $c_v$ was acute. 
Per definition of acute corners, both $c_b$ and $c_t$ must be two acute corners incident to $f_b$ and $f_t$ respectively. These become new top corners of $f_b$ and $f_t$. 
Otherwise, by Lemma~\ref{lemma:conflict}, $c_b$ is an acute corner and a top corner of $f_b$. 
We test in $O(\log n)$ time whether the top corner of $f$ (if any exists) is incident to $f_b$. If it is, $f_b$ is incident to two top corners and upward planarity must be violated. 
Otherwise, we update the root-to-leaf paths in $T_{f_b}^*$ and $T_{f_t}^*$ accordingly. 
We construct $T_{f_b}'$ and $T_{f_t}'$ analogously.

\textbf{(c)} We update $T_u$ and $T_v$ in an analogous manner.
Finally, we update the Booleans of $u$ and $v$: 
it may be that because of the insertion the vertex $u$ became an internal vertex in $G$.  
Moreover, it may be that $v$ is a vertex internal in $G$ and that after the insertion, $v$ became incident to a sink corner (and hence critical). We obtain $u$ (and $v$) in $O(\log n)$ time, test whether the vertices are internal in $O(1)$ additional time, test whether $T^*_u$ and $T^*_u$ are empty in $O(1)$ time, and adjust the Booleans accordingly.

\textbf{(d)} Finally, we update $\F(\E(G))$.
The faces $f_b$ and $f_t$ each are a new node in $\F(\E(G))$. 
The parent nodes of these faces correspond to top corners in $T_{f_b}^*$  and $T_{f_t}^*$. Given a top corner $c^*$, we identify the vertex $w$ incident to $c^*$ in $O(\log n)$ time.
By (b), $w$ is incident to $f_t$ and we set $w$ to be the parent of $f_t$. We update the corresponding tree in $\F(\E(G))$ in $O(\log n)$ time (Property~\ref{prop:toptree}). 
In the graph $\F(\E(G))$, the children of $f_b$ and $f_t$ can be obtained by separating the children of $f$ around two corners in any ordered embedding of $\F(\E(G))$, which can be done in $O(\log n)$ time (Property~\ref{prop:toptree}). 
Since we only inserted child-to-parent edges,  $\F(\E(G))$ is still a forest.
We test whether each tree $\T$ of $\F(\E(G))$ has a unique critical vertex (or the outer face) as its root. Specifically, this operation affects at most two trees $\T_a, \T_b$ of $\F(\E(G))$ and at most two vertices.
We check if these constantly many objects are still valid in $O(\log n)$ time and if not, then we reject the update.

\begin{figure}[h]
\centering
\includegraphics{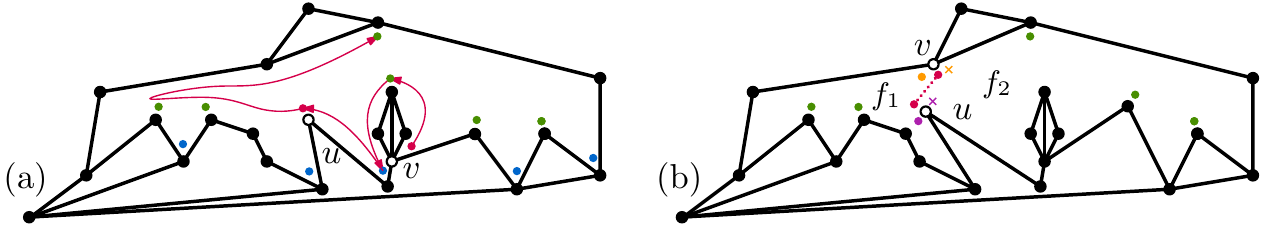}
\caption{(a) Insertion between two corners (red).
The path from $c_2 \tor c_1$ excluding the top corner first goes through a sink corner (green) and then through a source corner (blue). This implies that $c_1 \tor c_2$ is conflicted.
(b) By Lemma~\ref{lemma:conflict}, if $u \tor v$ are not conflicted we insert the edge such that the corner incident to $f_1$ is acute.
}
\label{fig:insertion}
\end{figure}

\newpage

\noindent
Finally we show how to do (along similar lines): 
Delete$(e)$ / Cut$(c_1, c_2)$ / Contract$(e)$ /  
Articulation-Slide$(f, w, v)$ / Articulation- Twist$(w, v)$ / Separation-flip$(f, g, x, y, v)$ / Mirror$(v)$.   
For each of these operations, we show how to maintain properties (a), (b), (c) and (d) of our data structure. We show that maintaining these properties allows us to verify whether the update violates upward planarity.

\mypar{Delete$(e)$}
Observe that an edge-deletion can only violate our conditions whenever the deleted edge $u \tor v$ creates a new source $v$.
We can test this in $O(\log n)$ time using $T_v$ and test if the deletion splits $G$ into two connected components. If we find that one connected component has two sources, we reject the update. Otherwise, all updates to our data structure are simply the inverse of the updates for edge insertion and handled analogously.

\mypar{Cut$(c_1, c_2)$ / Contract$(e)$.}
Let $(c_1, c_2)$ be incident to some vertex $v$. 
The Cut operation creates two new vertices $v_1$ and $v_2$ from a vertex $v$ and joins faces $f_1$ and $f_2$ into some new face $f$.
In the special case where $f_1 = f_2$, the cut operation must introduce a new source:
we test if $v_1$ and $v_2$ are connected: if so the update is rejected. Otherwise, we gained a new connected component with a unique source.
we show how to update our data structure for the cut operation. Contraction is its inverse and is handled analogously:

\textbf{(a)} + \textbf{(b)}
The cut operation deletes four corners which we can delete from our data structure in $O(\log n)$ time (Property~\ref{prop:toptree}). 
We then create the trees $T_f$, $T_f^*$ and $T_f'$ in $O(\log n)$ time by merging the corresponding trees of $f_1$ and $f_2$ (Property~\ref{prop:toptree}). 
Finally, the cut operation introduces two new corners. 
We test in $O(1)$ time whether these new corners are top, spike or source and we insert them in the corresponding trees.

\textbf{(c)} The trees $T_{v_1}, T_{v_1}^*, T_{v_2}, T_{v_2}^*$ are analogously created in  $O(\log^2 n)$ time by splitting the trees $T_v$ and $T_v^*$.
 Then, we test whether $v_1$ and $v_2$ are internal vertices of $G$ in $O(\log n)$ additional time by traversing the cyclical ordering on the edges outgoing of $v$. 

\textbf{(d)} The faces $f_1$ and $f_2$ become one node $f$ in $\F(\E(G))$, which inherits all their children. 
We test in $O(\log n)$ time, using the updated $T_f^*$, if the face $f$ is incident to a unique top corner. 
If not, then either $f_1$ or $f_2$ was the outer face, or the face-sink graph must be invalid and the update is rejected. 
Otherwise, we have identified the unique tree $\T$ in $\F(\E(G))$ that contains $f$.
We test in $O(\log n)$ time whether the new vertex $v_2$ has a sink corner incident to $f$ (making it a child of $f$). 
Finally, we update the balanced binary tree over the tree $\T$ containing $f$ in $O(\log n)$ time (Property~\ref{prop:toptree}). We show Mirror$(v)$ at the end.

\mypar{Articulation-Slide$(f, w, v)$ / Articulation-Twist$(w, v)$.}
Note that all combinatorial changes incurred by an articulation slide or twist, also occur in a separation flip.
Specifically, an articulation slide may alter at most four corners in $\E(G)$, and causes a subwalk around a face $g$ to become a subwalk around a face $f$.
An articulation twist selects mirrors a component.
A separation flip performs all these three combinatorial changes, twice.
Our procedure for the separation flip will specify all the changes to our data structure for these occurrences, and thus for the aritculation slide and twist also.

\newpage
\mypar{Separation-flip$(f, g, x, y, v)$.}
Let the face $g$ share corners $c_g^x$ and $c_g^y$ with $x$ and $y$, respectively.
Similarly, let $f$ share corners $c_f^x$ and $c_g^y$.
We illustrate the operation by Figure~\ref{fig:separationflip}. We update our data structure as follows:

\textbf{(a)}
The corners $c_g^x$ and $c_g^y$ bound a contiguous sequence $C(1)$ of corners which:
before the update are incident to $g$ and afterwards incident to $f$. 
We obtain this subsequence as at most $O(\log n)$ binary subtrees of $T_{g}$, and merge them into $T_{f}$ in $O(\log^2 n)$ total time.
We do the same for the corners between $c_f^x$ and $c_f^y$. 

\textbf{(b)}
We update $T_{g}^*, T_g'$ and $T_{f}^*, T_f'$ in a similar fashion where $C^*(1)$ is the sequence of sink corners in between $c_g^x$ and $c_g^y$. 
It remains to ensure that both trees have a marked root-to-leaf path to the unique top corner $T_g^*$ and $T_f^*$.
If $f$ and $g$ switched top corners, then they must have laid on $C^*(1)$ (and the corresponding $C^*(2)$). 
We test in $O(\log n)$ time and adjust the root-to-leaf paths in both $T_{g}^*$ and $T_{f}^*$ accordingly. 

\textbf{(c)} For the tree $T_x$ the separation flip selects and inverts a contiguous subsequence of corners incident to the vertex $x$.
We support this operation through the following trick: 
we maintain for each $T_x$ both its clockwise rotation of corners, and its counterclockwise rotation in two trees $T_x$ and $T_x'$. 
Denote by $C_x$ the subsequence of corners in between $c_g^x$ and $c_f^x$: 
this subsequence consists of at most $O(\log n)$ subtrees in $T_x$ and $T_x'$.
To invert this order, we simply interchange these subtrees and rebalance both trees in $O(\log^2 n)$ total time.
We do the same for the tree $T_y$ (with $T_y'$). 
For all other trees in the flip component $C_2$ separated by $(x, y)$: we still have a tree for their clockwise and counterclockwise rotation (although they have interchanged) and we hence do not have to update them. 

\textbf{(d)}
We update $\F(\E(G))$ as follows:
Let $f$ be contained in $\T_f$ and $g$ be contained in $\T_g$.
All spike corners $C^*(1)$ receive $f$ as their parent instead of $g$.
This operation is \emph{directly} supported by our choice of balanced tree over $\T_f$ and $\T_g$ (Property~\ref{prop:toptree}) in $O(\log n)$ time.
The separation flip destroys the corners $c_g^x, c_g^y, c_f^x, c_f^y$ and replaces them with four new corners.
We compute these in $O(1)$ time and check whether they are sink corners.
If so, then we insert the corresponding relation into $\F(\E(G))$ in $O(\log n)$ time per new corner. 
Finally, if $f$ and $g$ swapped parents in $\F(\E(G))$, we execute this swap in $O(\log n)$ time.

\mypar{Mirror$(v)$}: The mirror operation is described by bullet \textbf{(c)} of the separation flip. 
\end{proof}

\begin{figure}[h]
\centering
\includegraphics{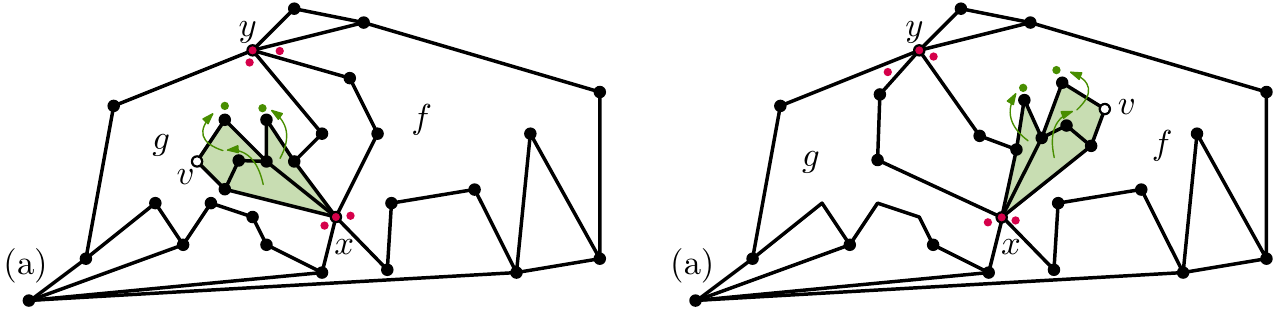}
\caption{
A separation flip $F = (f, g, x, y, v)$.
Note that the green subtree under $g$ in the face-sink graph becomes a subtree under $f$ instead.
}
\label{fig:separationflip}
\end{figure}

\newpage
\section{Supporting uplinkability queries}
\label{sec:queries}
We show how we support uplinkability queries.
Specifically, for each of our queries we spend  $O(\log^2 n)$ such that afterwards, we can report in $O(k)$ time the first $k$ items of the output.
The exception is the Separation-UpLinkable$(u, v)$ query where in $O(\log^2 n)$ time we return one separation flip.
We maintain the same data structure as in Section~\ref{sec:datastructure} which includes the structure Holm and Rotenberg~\cite{holm2017dynamic} which maintained a planar (undirected) graph $G'$ and its combinatorial embedding $\E(G')$ subject to linkablility queries. 

Let $G'$ be any planar graph and $\E(G')$ be a planar embedding of $G'$.
Holm and Rotenberg~\cite{holm2017dynamic} dynamically maintain $\E(G')$ supporting the following queries in $O(\log^2 n)$ time:\footnote{
The latter two queries were encompassed in their definition of \emph{one-flip-linkable})}

\begin{itemize}[noitemsep, nolistsep]
    \item \textbf{\boldmath Linkable$(u, v)$} returns for two vertices $u,  v$ all corner pairs across which $u$ and $v$ are linkable in a data structure. This data structure can, for any $i$, return the $i$'th face in $S$ in $O(\log n)$ time and report all faces up to the $i$'th one in $O(i)$ time.
    \item \textbf{\boldmath Slide-Linkable$(u, v)$} returns 
    all articulation slides $F_s = (f, \pi(v), v)$ where after $F_s$, $u$ and $v$ share a face $f$.
    Their output can, for any $i$, return the $i$'th slide (around $\pi(v)$) in $O(\log n)$ time and report $i$  consecutive articulation slides in $O(i)$ time.
    \item \textbf{\boldmath Separation-Linkable$(u, v)$} returns for two vertices $u, v$ for which Linkable$(u, v)$ and Slide-Linkable$(u, v)$ are empty, a separation flip after which $u$ and $v$ share a face.
\end{itemize}

\mypar{The key difference between linkability and uplinkability.}
In a planar embedding $\E(G)$, two vertices are linkable across any face they share.
In the upward planar setting this is not true (Figure~\ref{fig:uplinkable} in Apx.~\ref{ap:queries}) which significantly complicates uplinkability queries.
To answer UpLinkable$(u, v)$ and Slide-UpLinkable$(u, v)$ 
we need to identify the subset of faces across which $u \tor v$ are uplinkable and become uplinkable after a slide, respectively. 
For Separation-UpLinkable$(u, v)$, we find from all possible separation flips, at least one making $u$ and $v$ uplinkable.
Before we start our analysis we observe the following result from the definition of conflicting vertices (any path $\pi_b$ must include the mentioned $\pi_b^*$):
\begin{observation}
\label{obs:shortestpath}
Let $u$ and $v$ share a face $f$ and $u \tor v$ be not conflicted. 
Denote by $\pi_b^*$ the shortest subwalk incident to $f$ from $v$ to $u$ which excludes the top corner of $f$ and the source of $G$. 
Then $\pi_b$ cannot reach a sink corner before reaching a source corner or $u$. 
\end{observation}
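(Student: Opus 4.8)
The plan is to argue by contradiction. Suppose that, traversed from its endpoint at a corner $c_v$ of $v$, the shortest valid subwalk $\pi_b^*$ reaches a sink corner $c_s$ strictly before it reaches any source corner and strictly before it reaches a corner of $u$; I will derive that $u \tor v$ is conflicted in $f$, contradicting the hypothesis. Recall that $u \tor v$ is conflicted in $f$ precisely when \emph{every} pair of corners $c_u$ (not the top corner) and $c_v$ of $f$ has $c_u \tor c_v$ conflicted, i.e.\ the subwalk of $\partial f$ from $c_v$ through $c_u$ avoiding the top corner $c^*$ of $f$ and the source $s$ of $G$ is either directed towards $c_u$ or meets a sink corner before a source corner. (If $u$ is incident to $c^*$ there is no such subwalk, so those pairs may be set aside.) Hence it suffices to show that, under the above assumption, \emph{every} valid $v$-to-$u$ subwalk $\pi_b$ of $\partial f$ meets a sink corner before any source corner.

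The first ingredient is the parenthetical remark preceding the statement: $\pi_b^*$ occurs as a contiguous subwalk of every valid $v$-to-$u$ subwalk $\pi_b$ of $\partial f$. Since $f$ has a unique top corner $c^*$ by Theorem~\ref{thm:facesink} and $G$ has the single source $s$, every valid subwalk is confined to one maximal stretch of $\partial f$ avoiding $c^*$ and $s$; within such a stretch, the occurrences of $u$ and of $v$ lying closest together delimit exactly $\pi_b^*$, and any valid walk between $u$ and $v$ inside that stretch must run through this innermost band. Write $\pi_b = A \cdot \pi_b^* \cdot B$, where $A$ is the prefix of $\pi_b$ from its first $v$-corner to the initial corner $c_v$ of $\pi_b^*$, and $B$ the remainder. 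The second, and crucial, ingredient is that $A$ contains no source corner at all: a source corner is a vertex whose incident edges all leave it, hence it is the unique source $s$ of $G$; but $\pi_b$, and so its prefix $A$, avoids $s$.

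Now the contradiction follows. By assumption, $\pi_b^*$ meets the sink corner $c_s$ before any source corner of $\pi_b^*$. Since $A$ contributes no source corner, the first source corner met anywhere along $\pi_b$ (if one exists) lies in $\pi_b^* \cdot B$, hence strictly after $c_s \in \pi_b^*$. Therefore $\pi_b$ meets a sink corner ($c_s$) before any source corner, so the corner pair at its endpoints is conflicted. As $\pi_b$ ranges over all valid $v$-to-$u$ subwalks of $\partial f$, every corner pair of $u$ and $v$ in $f$ is conflicted, so $u \tor v$ is conflicted in $f$ --- contradicting the hypothesis. Consequently $\pi_b^*$ reaches no sink corner before a source corner or before $u$.

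I expect the main obstacle to be making the parenthetical containment $\pi_b^* \subseteq \pi_b$ fully rigorous when $u$ or $v$ occurs several times on $\partial f$ (so $\partial f$ is not a simple cycle and several incomparable valid subwalks can coexist) and for the outer face, where $c^*$ is absent and $s$ plays its structural role. I would handle this by first replacing $u$ and $v$ with the projected articulation vertices $\pi_f(u)$ and $\pi_f(v)$ from Lemma~\ref{lemma:conflict}, which pins down canonical corners and forces the repeated occurrences into a cut-vertex pattern compatible with the band description above; the remaining steps are routine unwindings of the definitions of \emph{conflicted}, \emph{top}, \emph{sink}, and \emph{source corner}.
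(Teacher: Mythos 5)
Your overall strategy is the one the paper sketches: use the containment $\pi_b^* \subseteq \pi_b$ for a non-conflicted witness $\pi_b$ and argue that the prefix $A$ of $\pi_b$ before $\pi_b^*$ cannot supply the source corner needed to rescue $\pi_b$. However, your justification of the second step contains a genuine error. You write that ``a source corner is a vertex whose incident edges all leave it, hence it is the unique source $s$ of $G$.'' This conflates two different notions. A \emph{source corner} of a face $f$ (the notion dual to the paper's \emph{sink corner}) is a corner $(w,f,e_1,e_2)$ whose two bounding edges $e_1,e_2$ are \emph{both directed away from $w$}; it says nothing about $w$'s other incident edges, which may very well be incoming. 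For instance, in a drawing with edges $s\tor a$, $s\tor b$, $a\tor b$, $a\tor c$, $b\tor c$, the corner of $a$ in the inner triangular face $a,b,c$ is a source corner of that face even though $a$ is not a source of $G$. Consequently, your inference ``$A$ avoids $s$, hence $A$ contains no source corner'' does not follow, and this is precisely the step the whole argument rests on: if $A$ could contain a source corner before any sink corner, $\pi_b$ could be non-conflicted while $\pi_b^*$ hits a sink first. To close the gap you would need a structural argument about what a subwalk of $\partial f$ from one $v$-corner to another $v$-corner can look like in an upward planar single-source drawing (e.g.\ that the component of $G\setminus v$ it wraps around lies entirely above $v$ because everything there reaches $s$ only through $v$), which is more than a definitional unwinding. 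You correctly flag the containment $\pi_b^*\subseteq\pi_b$ as also needing care when $u$ or $v$ occurs several times on $\partial f$, but the source-corner misstatement is the more serious defect since it is asserted as established fact rather than a known loose end.
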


\mypar{UpLinkable$(u, v)$} reports all faces across which $u \tor v$ doesn't violate upward planarity.

\begin{lemma}
\label{lem:uplinkable}
Let $G$ be a single-source digraph, $\E(G)$ be upward planar, and let $f$ be a face incident to $u$ and $v$.
If $v \not < u$,  
the edge $u \tor v$ may be inserted across $f$ if and only if: 
\begin{enumerate}[noitemsep, nolistsep]
    \item[(i)] $u$ is not the vertex corresponding to the top corner of $f$, 
    \item[(ii)] $u\!\rightarrow\! v$ is not conflicted in $f$, and
    \item[(iii)]
$f$ is incident to an edge directed towards $v$.
\end{enumerate} 
If conditions $(i)$--$(iii)$ hold, $v \not< u$ if and only if:
\begin{enumerate}[noitemsep, nolistsep]
    \item[(iv)] there is no directed path on the boundary of $f$ from $v$ to $u$.
\end{enumerate} 
\end{lemma}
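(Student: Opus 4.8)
The plan is to prove the biconditional in two bundles: first the equivalence "$u \tor v$ insertable across $f$ (given $v \not< u$) $\iff$ (i)--(iii)", and then, under the assumption that (i)--(iii) already hold, the equivalence "$v \not< u$ $\iff$ (iv)". This split is natural because (i)--(iii) are exactly the local obstructions to drawing $u \tor v$ $y$-monotone inside $f$, while (iv) is the global acyclicity obstruction, and the lemma deliberately factors them apart.

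For the first bundle I would lean directly on Lemma~\ref{lemma:conflict}. Necessity of (i) and (ii): Lemma~\ref{lemma:conflict} already says that if $u$ is the top corner, or if $c_u \tor c_v$ is conflicted for the relevant corners, the edge violates upward planarity; I would phrase this at the vertex level by noting that ``$u \tor v$ not conflicted in $f$'' means some corner pair $(c_u,c_v)$ on $f$ is non-conflicted, which is the corner we actually use. Necessity of (iii): if no edge of $\partial f$ points towards $v$, then every corner of $v$ on $f$ is a source corner; inserting $u \tor v$ would make $c_v^b$ a sink corner at $v$ that sits on the subwalk $\pi_b$ arbitrarily high, contradicting the height argument in the proof of Lemma~\ref{lemma:conflict} (equivalently: the new in-edge at $v$ has nowhere to come from that is below $v$ while respecting the existing out-edges around that corner). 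Sufficiency: given (i)--(iii), pick the non-conflicted corner pair $(c_u,c_v)$ guaranteed by (ii) with $c_v$ chosen incident to the in-edge guaranteed by (iii); Lemma~\ref{lemma:conflict} then directly produces the insertion with $c_u^b,c_v^b$ at acute angles near $\pi_b$, so the edge can be realized $y$-monotone in some drawing of $\E(G)$.

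For the second bundle, assume (i)--(iii) hold; I must show $v < u$ (a directed path $v \leadsto u$ exists in $G$) iff there is a directed path from $v$ to $u$ along $\partial f$. One direction is trivial: a directed boundary path is a directed path in $G$, so (iv) failing gives $v < u$. For the converse I would argue contrapositively using the structure forced by (i)--(iii): suppose there is no directed boundary path from $v$ to $u$ on $f$, and suppose for contradiction a directed path $P: v \leadsto u$ exists in $G$. Consider the cycle $C$ formed by $P$ together with the edge $u \tor v$ we are inserting; since $\E(G)$ is upward planar and the insertion is into face $f$, $C$ bounds a region whose boundary interacts with $\partial f$ — and here I would invoke Observation~\ref{obs:shortestpath} on the shortest subwalk $\pi_b^*$ from $v$ to $u$ on $f$: because $u \tor v$ is not conflicted, $\pi_b^*$ meets a source corner (or $u$) before any sink corner, meaning $\pi_b^*$ is ``going down'' from $v$; but a directed path $v \leadsto u$ in $G$ would force, via the face-sink/forest structure of Theorem~\ref{thm:facesink}, that some portion of $\partial f$ between $v$ and $u$ is itself directed, i.e.\ it collapses $\pi_b^*$ to a directed boundary path, contradicting our assumption. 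The cleanest way to close this is: among all directed $v\leadsto u$ paths, take the one that is ``closest to $f$'' (leftmost/rightmost with respect to the rotation system at each vertex, staying on the side of $f$); planarity of $\E(G)$ forces this extremal directed path to run along $\partial f$, giving the directed boundary path.

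The main obstacle I anticipate is precisely that last step — turning ``there exists a directed path $v \leadsto u$ in $G$'' into ``there exists a directed path $v\leadsto u$ on the boundary of $f$.'' This is a planarity-plus-acyclicity argument (an internal directed path can't be trapped strictly inside the region cut off by $f$ without either crossing $\partial f$ or creating a directed cycle with part of $\partial f$), and making it rigorous requires carefully choosing the extremal such path and checking it cannot leave $\partial f$; I would isolate this as a short sublemma. Everything else reduces to bookkeeping with Lemma~\ref{lemma:conflict}, Observation~\ref{obs:shortestpath}, and the definition of conflicted corners.
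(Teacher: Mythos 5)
Your proposal has genuine gaps in both bundles. For the first bundle, you cannot get sufficiency of (i)--(iii) out of Lemma~\ref{lemma:conflict} alone: that lemma only provides the local insertion (new corners at acute angles along $\pi_b$), not that the resulting embedding is still upward planar --- and if it did, condition $(iii)$ would be redundant, which should have been a warning sign. The paper instead establishes both directions by a case analysis on the two edges forming $c_v$ and checks, case by case, that the face-sink forest of Theorem~\ref{thm:facesink} remains valid (the cases with an in-edge at $c_v$) or becomes invalid (the cases without). Your necessity argument for $(iii)$ is also incorrect as stated: when every edge of $f$ at $v$ is outgoing, inserting the in-edge $u\tor v$ creates \emph{no} sink corner at $v$ (each new corner pairs the incoming new edge with an outgoing old edge), so there is no ``new sink corner sitting arbitrarily high.'' The actual obstruction is the opposite phenomenon: the lower face $f_b$ created by the insertion acquires no top corner at all, so it becomes a parentless face in $\F(\E(G))$, which Theorem~\ref{thm:facesink} forbids (and when $v$ is the source, the insertion simply creates a cycle). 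Your plan to ``pick the non-conflicted corner pair with $c_v$ incident to the in-edge guaranteed by $(iii)$'' also silently assumes that the corner witnessing $(ii)$ and the corner witnessing $(iii)$ can be taken to coincide, which needs an argument.

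For the second bundle, the direction ``$v<u$ implies a directed path from $v$ to $u$ on $\partial f$'' is exactly the step you flag as the main obstacle and leave as an unproved sublemma; your extremal-path sketch is not carried out and the appeal to the face-sink forest is too vague to close it. The paper's argument here is short and different: by $(ii)$, $(iii)$ and Observation~\ref{obs:shortestpath}, the boundary walk $\pi_b^*$ starts with an edge directed towards $v$, so it is not itself a $v$-to-$u$ path; if a directed path $\pi^*$ from $v$ to $u$ existed off the boundary of $f$, then $\pi^*$ together with $\pi_b^*$ would enclose $f$, hence enclose the top corner of $f$ (or enclose the outer face, which is absurd); since the top corner must lie higher than $u$ in any upward drawing, $\pi^*$ could not be drawn $y$-monotone --- a contradiction. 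You would need to either adopt that height argument or actually prove your extremal-path sublemma; as written, the proposal does not constitute a proof.
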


\begin{proof}

Assume $v \not < u$. 
If $u$ lies on the top corner of $f$, or if $u \tor v$ is conflicted in $f$ then Lemma~\ref{lemma:conflict} implies that inserting $u \tor v$ across $f$ violates upward planarity. 
Otherwise, denote by $c_u$ and $c_v$ two corners (incident to $u$ and $v$ respectively) where $c_u \tor c_v$ is not conflicted. 
Denote by $\pi_b$ the subwalk incident to $f$ from $c_u$ and $c_v$ excluding the top corner and the unique source.
Denote by $\pi_t$ the complement of $\pi_b$. 
 The edge $c_2 \tor c_1$ splits $f$ into two faces $f_t$ and $f_b$ (incident to $\pi_t$ and $\pi_b$ respectively).\footnote{We associate $t$ and $b$ with ``top'' and ``bottom''; in the outer face, this sense of direction is reversed.}
We make a case distinction based on the edges incident to $c_v$. Per case, we show whether $c_u \tor c_v$ may be inserted:

\textbf{\boldmath Case 1: $c_v$ is incident to two outgoing edges - NO.} 
See Figure~\ref{fig:uplinkableproof}(a + b).
Consider the face $f$ in the face-sink graph and the tree $\T$ that contains $f$.
After the insertion, $f_b$ cannot be incident to a top corner: $f_b$ is per construction not incident the top corner of $f$ and the insertion creates no additional sink corners. 
Thus, $f_b$ becomes a root of a tree in the face-sink graph. However, by Theorem~\ref{thm:facesink} the only roots in the face-sink graph may be critical vertices or the outer face. Thus, inserting $c_u \tor c_v$ invalidates the face-sink graph.

\textbf{\boldmath Case 2: $c_v$ is incident to one outgoing edge and one incoming edge - YES.}
The outgoing edge must be part of $\pi_t$. 
We claim that the face-sink graph remains valid after inserting $c_u \tor c_v$
(Figure~\ref{fig:uplinkableproof}(c)):
the corner $c_v$ is split into two corners $c_t$ and $c_b$ where $c_b$ is an acute sink corner: thus a top corner. 
After the insertion, all children of $f$ become children of either $f_t$ or $f_b$, with possibly the exception of $u$: when $u$ was a spike, it now became a critical vertex.
The face $f_t$ is either the outer face, or incident to the top corner of $f$ and thus any subtree rooted at $f_t$ is part of a valid tree in $\mathcal{F}(\E(G))$. 
The face $f_b$ is incident to a unique top corner which is incident to $v$. The vertex $v$ must be  a critical vertex of $G$  and the subtree of $f_b$ is thus also part of a valid tree in $\mathcal{F}(\E(G))$.

\textbf{\boldmath Case 3: $c$ is incident to two incoming edges - YES.}
Assume that $c$ is the top corner of $f$. 
By inserting $u \tor v$, we split $c$ into two top corners incident to both $f_u$ and $f_l$. 
Thus, in the face-sink graph, $f$ gets replaced by $f_u$ and $f_v$ who partition its children and $\mathcal{F}(\E(G))$ is thus still valid. 
Assume otherwise that $c$ is a spike in $f$. 
Because $u \tor v$ is not conflicted, the insertion splits $c$ into a top corner (incident to $f_b$) and spike corner (incident to $f_t$). 
Thus, in the face sink-graph, $f$ is simply replaced by the path $f_b \tor v \tor f_t$ (where $f_b$ and $f_t$ partition the children of $f$). Thus, $\mathcal{F}(\E(G))$ remains valid. 

\textbf{\boldmath Case 4: $c$ is incident to exactly one outgoing edge - NO.}
Here, $v$ is the unique source of $G$. 
Inserting $u \tor v$ introduces a directed cycle in $G$, violating upward planarity. 

\textbf{\boldmath Case 5: $c$ is incident to exactly one incoming edge - YES.}
This case is identical to case where we create a top corner incident to $f_b$. Thus, in $\mathcal{F}(\E(G))$, $f$ is replaced by the path $f_b \tor v \tor f_t$ (and $f_b$ and $f_t$ partition the children of $f$). 

There exists a corner pair $(c_u, c_v)$ to which Case $2$, $3$, or $5$ applies, if and only if there exists at least one directed towards $v$ incident to $f$ which concludes the first part.

For the second part of the proof, 
assume conditions $(i)$--$(iii)$ and consider the path $\pi_b^*$.
Conditions $(ii)$ and $(iii)$ together with  Observation~\ref{obs:shortestpath}
imply that $\pi_b^*$ starts with an edge directed towards $v$ (and thus it cannot be a path from $v$ to $u$). 
If $v < u$ then there must exist a directed path $\pi^*$ from $u$ to $v$. The path $\pi^*$ together with $\pi_b^*$ encloses some face $g$.
Assume for the sake of contradiction that $g \neq f$ (i.e. $\pi^*$ is not on the boundary of $f$) then $f$ must be contained in $g$.
If $f$ is the outer face, it cannot be contained in any other face: contradiction.
If $f$ is not the outer face, then the top corner of $f$ is contained within $g$. 
However in any upward drawing of $G$, the top corner must be higher than $u$, this implies that the directed path $\pi^*$ cannot be drawn $y$-monotone increasing: contradiction.
%
\end{proof}
\begin{figure}[bt]
\centering
\includegraphics[scale=1]{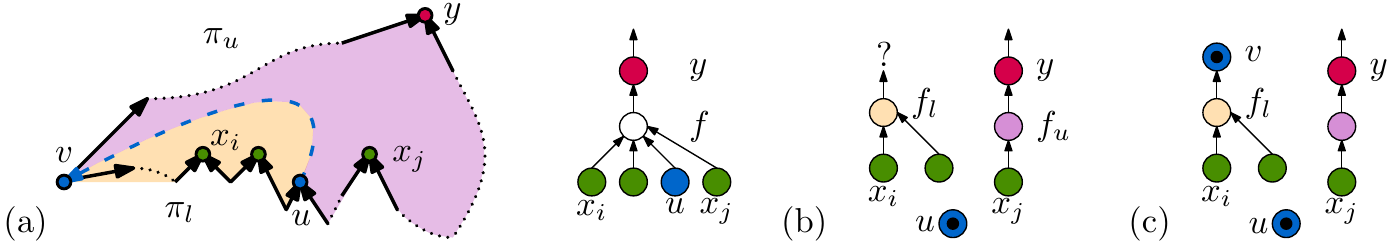}
\caption{
(a) $f$ is split into $f_l$ and $f_u$ (orange and purple). We show the original tree $T$ in $\mathcal{F}(\E(G))$.
(b) The children of $f$ are partitioned by $f_l$ and $f_u$, except for $u$ which becomes 
critical. The  face $f_l$ becomes parent-less. (c) $f_l$ receives  the critical vertex $v$ as its parent.
}\vspace{-0.4cm}
\label{fig:uplinkableproof}
\end{figure}

\noindent
We briefly show how Lemma~\ref{lem:uplinkable} enables uplinkability queries:

\begin{lemma}\label{lemma:uplinkable_query}
We can support the UpLinkable$(u, v)$ query in $O(\log^2 n + k)$ time.
\end{lemma}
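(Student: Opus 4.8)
The plan is to combine the Linkable$(u,v)$ machinery of Holm and Rotenberg~\cite{holm2017dynamic} with the local characterization of Lemma~\ref{lem:uplinkable}, so that from the set $S$ of faces shared by $u$ and $v$ we filter out exactly those across which $u\tor v$ violates upward planarity. First I would call Linkable$(u,v)$, which in $O(\log^2 n)$ time produces a data structure over the set $S$ of corner pairs $(c_u,c_v)$ across which $u$ and $v$ share a face; this structure supports indexed access to the $i$-th face in $O(\log n)$ and reporting a prefix of length $i$ in $O(i)$. The goal is to show that, for each such face $f$, conditions (i)--(iv) of Lemma~\ref{lem:uplinkable} can be tested in $O(\log n)$ time using the trees $T_f,T_f^*,T_f'$ maintained in Section~\ref{sec:datastructure}, and that the set of \emph{uplinkable} faces forms a structured (in particular, efficiently enumerable) subset of $S$.

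The key steps, in order: (1) For a candidate face $f$ incident to corners $c_u$ (of $u$) and $c_v$ (of $v$), test condition (i) — whether $u$ holds the top corner of $f$ — in $O(\log n)$ using the marked root-to-leaf path in $T_f^*$. (2) Test condition (ii), that $u\tor v$ is not conflicted in $f$, exactly as in the Insert$(c_u,c_v)$ routine of Theorem~\ref{thm:dynamicdatastructure}: identify the subwalk $\pi_b$ from $c_u$ to $c_v$ avoiding the top corner in $O(\log n)$, then use $T_f^*$ and $T_f'$ to locate the first sink corner and first source corner along $\pi_b$ and compare their positions. (3) Test condition (iii), that $f$ is incident to an edge directed towards $v$, by querying $T_v^*\cap T_f$ (a sink corner of $v$ on $f$) or, more directly, checking whether the corners of $v$ incident to $f$ have an incoming edge — $O(\log n)$ via the pointers stored at the corners of $v$. (4) Test condition (iv), that there is no directed path on the boundary of $f$ from $v$ to $u$; since conditions (ii)--(iii) already force $\pi_b^*$ to start with an edge directed towards $v$ (Observation~\ref{obs:shortestpath}), it suffices to check the complementary walk $\pi_t$, again using $T_f^*/T_f'$ to decide in $O(\log n)$ whether it is a monotone directed $v$-to-$u$ path. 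Finally, wrap all of this so that enumeration costs $O(k)$: rather than testing each of the $|S|$ faces (which could be $\omega(k)$), I would argue that the conditions cut $S$ down to a subset recoverable in $O(\log n)$ preprocessing plus $O(1)$ per reported face — e.g. the faces failing (iii) or (iv) occupy contiguous ranges in the linkability structure's ordering, or can be skipped by augmenting that structure with the spike/top information so the traversal jumps directly to qualifying faces.

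The main obstacle I anticipate is the last point: getting $O(\log^2 n + k)$ rather than $O(\log^2 n \cdot |S|)$. Holm--Rotenberg guarantee efficient \emph{prefix} reporting over $S$, but the uplinkable faces need not be a prefix, so I must show either that they form $O(1)$ contiguous intervals in the relevant ordering, or that the face-sink forest structure lets me enumerate them with $O(1)$ overhead per face. I expect this to follow from the fact that conditions (i)--(iv) depend only on local data at $f$ that is already indexed in $T_f^*, T_f'$ and in the top-tree over the face-sink tree $\T$ containing $f$ (Property~\ref{prop:toptree}), so that one can, in $O(\log n)$, build a secondary balanced structure over the qualifying faces; the remaining care is purely bookkeeping. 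Everything else — the per-face $O(\log n)$ tests — is a routine repackaging of the Insert routine and of Lemma~\ref{lem:uplinkable}.
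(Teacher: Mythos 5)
Your overall route is the same as the paper's: call Linkable$(u,v)$ to get the ordered structure $S$ of shared faces, then filter by the conditions of Lemma~\ref{lem:uplinkable}, testing each condition in $O(\log n)$ time via $T_f^*$, $T_f'$, $T_u$, $T_v$. But the step you yourself flag as the ``main obstacle'' is exactly the crux of the lemma, and you leave it as a conjecture (``I expect this to follow\dots'', ``purely bookkeeping'') rather than proving it. Testing every face of $S$ costs $O(|S|\log n)$, which is not $O(\log^2 n + k)$; to get the claimed bound one must show that the qualifying faces form a contiguous interval of $S$ identifiable by binary search. The paper proves this concretely: (a) the faces of $S$ in which $u$ is the top corner are precisely those bounded by two edges directed towards $u$, hence they are consecutive in the rotation around $u$ and form a contiguous subset $S_1$ of $S$; (b) symmetrically, the faces incident to an edge directed towards $v$ form a contiguous subset $S_3$; and (c) under the standing assumption $v \not< u$, if $|S|>1$ and $f$ has an edge directed towards $v$, then $u \tor v$ cannot be conflicted in $f$, so condition $(ii)$ imposes no further restriction when $|S|>1$. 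These three facts are what reduce the filtering to $O(\log n)$ binary searches plus $O(k)$ reporting; none of them appears in your write-up, and your alternative suggestion of ``augmenting the linkability structure'' or building ``a secondary balanced structure over the qualifying faces'' in $O(\log n)$ is not substantiated (you cannot afford to touch each qualifying face at construction time).

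A secondary, smaller divergence: you test condition $(iv)$ per candidate face by inspecting the complementary walk $\pi_t$. The paper instead exploits the second half of Lemma~\ref{lem:uplinkable}: once $(i)$--$(iii)$ hold, $(iv)$ is equivalent to the face-independent statement $v \not< u$, so it suffices to check $(iv)$ on a single face of the candidate output and then report all of $S^*$ or none of it. Your per-face check is not wrong in spirit, but folding $(iv)$ into the per-face filter again raises the question of why the surviving faces stay contiguous, which you would then also have to argue. In short: correct skeleton, same approach as the paper, but the contiguity argument that delivers the $O(\log^2 n + k)$ bound is missing, and that is the substance of the proof.
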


\begin{proof}
First, assume that $v \not < u$ in $G$ (we lift this assumption at the end). By Linkable$(u, v)$, we obtain in $O(\log^2 n)$ time all faces incident to both $u$ and $v$ as the data structure $S$. 
For all faces $f \in S$, by Lemma~\ref{lem:uplinkable}, $u \tor v$ may be inserted into $\E(G)$ if and only if properties $(i)$--$(iii)$ hold.
We can test for $f$ these properties in $O(\log n)$ time using our data structure:
\begin{enumerate}[noitemsep, nolistsep]
    \item[$(i)$]   using $T_f^*$ we identify the top corner of $f$ in $O(\log n)$ time,
    \item[$(ii)$] by binary search along $u$ and $v$ in $T_u$ and $T_v$ we find the path $\pi_b^*$ in $O(\log n)$ time;
    using $T_f^*$ and $T_f'$ we then test if  a sink precedes a source on $\pi_b^*$.
    \item[$(iii)$] using $T_v'$ we find an edge in $f$ incident towards $v$ in $O(\log n)$ time.
\end{enumerate}

If $S$ contains more than one element, we note that the set $S_1 \subseteq S$ of faces $f$ which have property $(i)$ must be a contiguous subset of $S$: all faces $f'$ where $u$ is the top corner of $f'$ must have two edges directed towards $u$ incident to $f'$.
Similarly, the set $S_3 \subset S$ of faces $f$ which have property $(iii)$ must be a contiguous subset of $S$.
Finally, if $S$ contains more than one element and $f \in S$ has an edge directed towards $v$, then $u \tor v$ can only be conflicted in $f$ if $v < u$ which we assumed to be not true. 
It follows that if $S$ contains more than one element then the set $S^*$ of faces which have properties $(i)$, $(ii)$ and $(iii)$ must be a contiguous subset of $S$ which we can identify in $O(\log n)$ additional time. 
By Holm and Rotenberg~\cite{holm2017dynamic}, we can report the first $k$ elements in this output in $O(k)$ additional time. 

What remains is to lift the assumption that $v \not < u$. 
Suppose that the above output $S^*$ is not empty.
Then we simply select one of the faces $f$ of the output. 
by Lemma~\ref{lem:uplinkable}, it is enough to 
test condition $(iv)$, that is, whether there is a directed path from $v$ to $u$ on the boundary of $f$.
This can be done in $O(\log n)$ time by testing whether there are any sink  or source corners  between them. We report either all or none of $S^*$ accordingly.
\end{proof}

\noindent
Before we summply the proof for the remaining queries, we wish to highlight the main approach and challenges for answering these queries (Figure~\ref{fig:finalfigure} (a+ b + c):

\mypar{Slide-UpLinkable$(u, v)$:} by Slide-Linkable$(u, v)$, we find (an implicit representation of) all slides $F_s = (f, w, v)$ around $w$ (where $u$ is not in the same component as $y$ in $G \backslash \{ w \}$) after which $u$ and $v$ share a face, as an ordered set by $\Omega_v$  in $O(\log^2 n)$ time.
We first identify the set $\Omega_v' \subseteq \Omega_v$ of slides that do not violate upward planarity: we prove (Lemma~\ref{lemma:legal_arti}) that $\Omega_v'$ is a contiguous subset. 
Then, we want to identify the subset $\Omega^*_v \subseteq \Omega_v'$ of slides which are part of our output. We show that the faces in $\Omega'_v$ where all conditions of Lemma~\ref{lemma:uplinkable_query} are met, must be a contiguous subset and we output $\Omega^*_v$ accordingly. 
Doing a near-identical procedure for slides $F_s' = (f, w', u)$ then concludes our result.

\mypar{Twist-UpLinkable$(u, v)$:} By using a subroutine of Slide-Linkable$(u, v)$, we identify an implicit representation of the set $\Lambda_v$ of twists $F_t = (w, v)$ in $O(\log^2 n)$ time.
We show that $u \tor v$ are either uplinkable after all of $\Lambda_v$ or none of $\Lambda_v$; as a consequence, it is uplinkable across $f$ after any twist if and only if it is uplinkable after a twist in the vertex $\pi_f(v)$ (defined on page~\pageref{def:piv}). We use Lemma~\ref{lem:uplinkable} to verify this for one twist 
through $\pi(v)$, and output it 
accordingly. For twists $F_t' = (w', u)$ we do an identical procedure. 

\mypar{Separation-UpLinkable$(u, v)$:} By using a variant of Separation-Linkable$(u, v)$ we consider all separation flips $F = (f, g, x, y, v)$ where $u$ is incident to the face $f$, and $v$ to $g$. 
Of these flips, we obtain the unique flip $F^* = (f, g, x^*, y^*, v)$ where $x^*$ and $y^*$ are closest to $v$, and the subgraph containing $v$ is minimal with respect to inclusion.
We prove (Lemma~\ref{lemma:only_sep_you_need}) that if performing $F^*$ violates upward planarity, then any such $F$ must violate upward planarity. 
We show the same statement holds for uplinkability of $u \tor v$. We perform $F^{\ast}$, and use Lemma~\ref{lem:uplinkable} to test whether we may now insert $u \tor v$, thus deciding the question.

\mypar{One-Flip-UpLinkable$(u, v)$:} we show that if there exists faces $f$ and $g$, and any sequence of flips, slides, or twists (as above) involving only $f$ and/or $g$ after which $u \tor v$ may be inserted, then this sequence has constant size. 
If any such sequence exists we output it in $O(\log ^2 n)$ time. 
Upward planarity is more complicated than planarity in this regard, as strictly more than one flip may be necessary when the vertices already share a face.


\begin{figure}[h]
\centering
\includegraphics[]{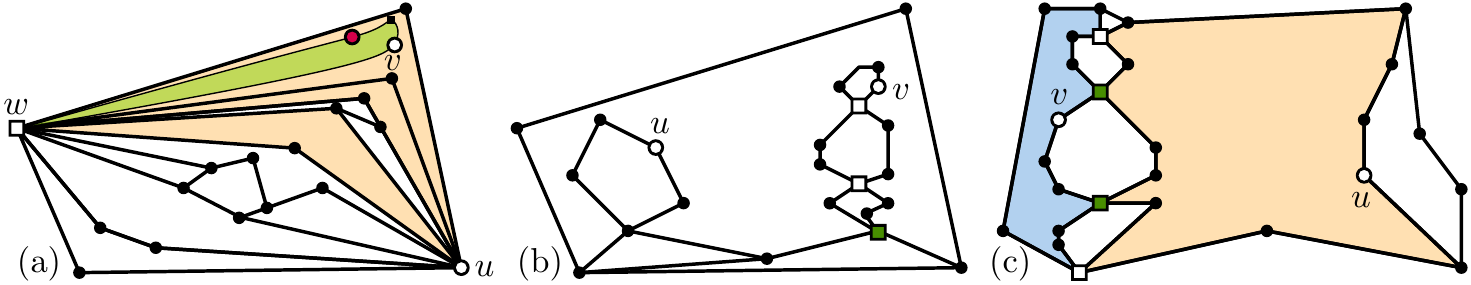}
\caption{
(a) For articulation slides, the subset $\Omega^*_v$ where afterwards, $u \tor v$ may be inserted is shown in yellow.
(b) There exists a twist $F_t = (w, v)$ for every square vertex $w$. 
We only need to consider last such $w$ (green). 
(c) There exists a separation flip $F$ for each pair of squares.
We show only need to consider the green pair $(x^*, y^*)$.
}
\label{fig:finalfigure}
\end{figure}
\subsection{The remaining queries}

Finally, we show how to support the remaining uplinkability queries, one by one:

\begin{figure}[h]
\centering
\includegraphics{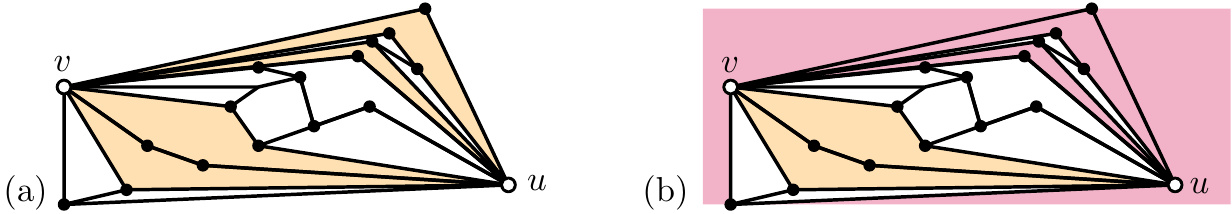}
\caption{(a)  $u$ and $v$ share four faces which are not the outer face of their biconnected component. (b) 
The directed edge $u \tor v$ cannot be  inserted across red shared faces.
}
\label{fig:uplinkable}
\end{figure}

\mypar{Slide-UpLinkable$(u, v)$} reports all articulation slides $F_s' = (f, w', u)$ and $F_s = (f, w, v)$ where after the slide, $u \tor v$ is uplinkable across $f$ in the embedding.
\begin{observation}
\label{obs:unique_arti}
For all $F_s = (f, w, v)$, the vertex $w$ is identical.
\end{observation}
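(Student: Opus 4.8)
The plan is to show that the second coordinate $w$ of every slide $F_s = (f, w, v)$ reported by Slide-UpLinkable$(u, v)$ is forced to be one particular articulation point, namely the Holm--Rotenberg projected articulation point $\pi(v)$ that their Slide-Linkable$(u,v)$ query uses. First I would record the structural requirements that any reported slide must satisfy: $w$ is an articulation point, $v$ lies in $C_w(v)$ while $u$ does not (otherwise the slide moves $u$, contradicting that $u$ stays fixed and $f$ is incident to $u$), after the slide $v$ becomes incident to $f$ (this is what makes $u$ and $v$ share $f$), and the conditions of Lemma~\ref{lem:uplinkable} hold for $u \tor v$ across $f$. In particular $w$ separates $u$ from $v$ in $G$, so $w$ lies on the path between the blocks of $u$ and the blocks of $v$ in the block-cut tree of $G$.

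Next I would argue that, among the articulation points on this block-cut path, only one can actually bring $v$ onto a face newly shared with $u$. A slide $F_s = (f, w, v)$ re-embeds $C_w(v)$ together with its attachment at $w$ without changing its internal embedding; afterwards $v$ is incident to $f$ only if $v$ lies on the boundary of the face of the embedding of $C_w(v) \cup \{w\}$ that is opened into $f$, and any such face is necessarily incident to $w$ as well. Hence $w$ must be an articulation point on the path such that $v$ and $w$ lie on a common face of the induced embedding of $C_w(v) \cup \{w\}$. If $w'$ were a second such articulation point, say closer to $u$, then $C_w(v)$ is strictly contained in $C_{w'}(v)$ and $w \in C_{w'}(v)$; the face of $C_{w'}(v) \cup \{w'\}$ witnessing $w'$ and $v$ restricts, on the $w$-side of the cut at $w$, to a face of $C_w(v) \cup \{w\}$ incident to both $w$ and $v$. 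This is exactly where I would invoke the Slide-Linkable characterization of~\cite{holm2017dynamic}: their query lists all undirected-linkability-creating slides around the single articulation point $\pi(v)$, parameterised only by the target face, and our reported slides form a sublist of theirs, obtained by intersecting with the upward conditions checked through Lemma~\ref{lem:uplinkable} and Lemma~\ref{lemma:uplinkable_query}.

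Since Slide-UpLinkable$(u, v)$ only reports slides that, in particular, make $u$ and $v$ share a face while keeping $u$ fixed, every such slide is one of the slides $F_s = (f, \pi(v), v)$ enumerated by Slide-Linkable$(u, v)$, and therefore $w = \pi(v)$ for all of them, which is the claim (see Figure~\ref{fig:finalfigure}(a)). The main obstacle is the middle step: cleanly proving that a slide around an articulation point farther from $v$ can never create a common face of $u$ and $v$ beyond those created by sliding the inclusion-minimal valid component $C_{\pi(v)}(v)$. I would handle this either by a nestedness argument on the induced embeddings of the nested components $C_w(v) \subseteq C_{w'}(v)$, using that the relative cyclic order of faces around an articulation point is preserved when a sub-component attached at a different vertex is slid, or, more economically, by citing that~\cite{holm2017dynamic} already canonicalises these slides to the single articulation point $\pi(v)$, so that nothing beyond a restatement and a restriction to the upward-planar case is needed here.
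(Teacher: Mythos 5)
Your proposal is correct and matches the paper's treatment: the paper states this as an unproved observation and justifies it exactly as you do, by noting (cf.\ Lemma~\ref{lemma:find_arti-flips}) that every slide reported by Slide-UpLinkable$(u,v)$ is among the slides of Holm and Rotenberg's Slide-Linkable$(u,v)$, which are all parameterised around the single projected articulation point $\pi(v)$. Your additional nestedness sketch is not carried to completion, but it is also not needed once you invoke that citation, so no gap remains.
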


\noindent
Denote by $\Omega_v$ the articulation slides $F_s = (f, w, v)$ (where $u$ is not in the same component as $v$ in $G \backslash \{ w \}$) where afterwards $u$ and $v$ share the face $f$. 
We do the following:
we first show how to obtain $\Omega_v$ in $O(\log^2 n)$ time, ordered around $w$. 
We then identify the subset $\Omega_v'$ of slides $F_s$ that do not violate upward planarity.
Finally, we identify the subset $\Omega^*_v$ of slides $F_s^*$ where after $F_s^*$, $u \tor v$ can be inserted without violating upward planarity.

\begin{lemma}
\label{lemma:find_arti-flips}
We can identify the set $\Omega_v$ in $O(\log^2 n)$ time.
In addition, given any integer~$i$, we can return the $i$'th face in $\Omega_v$ (ordered around $w$) in $O(\log n)$ additional time.
\end{lemma}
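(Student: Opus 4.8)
The key structural fact we rely on is Observation~\ref{obs:unique_arti}: there is a single articulation vertex $w$ such that every slide of $v$'s component making $u$ and $v$ share a face has the form $F_s=(f,w,v)$. So the task reduces to (1) locating $w$, and (2) enumerating the faces $f$ into which $w$ can receive the component $C_w(v)$ so that $u$ ends up incident to $f$. We would obtain $w$ and the ordered set of target faces by calling the Slide-Linkable$(u,v)$ primitive of Holm and Rotenberg~\cite{holm2017dynamic}, which already returns (an implicit representation of) all articulation slides $F_s=(f,\pi(v),v)$ after which $u$ and $v$ share a face, ordered around $\pi(v)=w$, in $O(\log^2 n)$ time, and which supports returning the $i$'th such slide in $O(\log n)$ time. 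The set $\Omega_v$ is, by definition, exactly this set (restricted to slides where $u\notin C_w(v)$, which is automatic since $u$ must be incident to the target face and the component $C_w(v)$ being reinserted elsewhere leaves $u$ outside it). Hence $\Omega_v$ is read off directly from the output of Slide-Linkable$(u,v)$.

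**Steps in order.** First I would run Slide-Linkable$(u,v)$ to get $w=\pi(v)$ and the implicit ordered list of candidate faces around $w$; this is the $O(\log^2 n)$ cost. Second, I would verify using our data structure that the returned object really is $\Omega_v$: the faces incident to $w$ around which the component $C_w(v)$ can be slid so that $u$ becomes incident to the resulting face are precisely the faces that are incident to $u$ in $\E(G)\setminus C_w(v)$ and to $w$; this is exactly what Holm--Rotenberg's primitive enumerates, so no extra work is needed beyond recording the pointers into their structure. Third, for the indexing query: since their output supports retrieving the $i$'th slide (ordered around $w$) in $O(\log n)$ time, and $\Omega_v$ is that same ordered set, we inherit the $O(\log n)$ per-index access. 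We would store, alongside their handle, pointers to the corners of $w$ (via $T_w$) and to the face trees $T_f$ so that, given the $i$'th slide, we can in $O(\log n)$ additional time name the face $f$ and the two corners of $w$ bounding the slot; this is the claimed "return the $i$'th face in $\Omega_v$" in $O(\log n)$ time.

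**Main obstacle.** The main subtlety is not the running time but the reduction to the undirected primitive: we must argue that the upward structure does not change *which* slides make $u$ and $v$ share a face, only which of those are legal/uplinkable (that filtering is deferred to Lemmas~\ref{lemma:legal_arti} and the $\Omega_v'$, $\Omega^*_v$ steps). Concretely, one must check that an articulation slide $F_s=(f,w,v)$ is a legitimate embedding operation in the *planar* sense exactly when it is in our setting up to the upward-planarity test — i.e., that the set of faces $f$ incident to $w$ in $\E(G)\setminus C_w(v)$ is embedding-combinatorially the same object Holm--Rotenberg enumerate — and that Observation~\ref{obs:unique_arti} indeed pins $w$ down uniquely, so there is a single such ordered family rather than several families (one per articulation vertex) that we would have to merge. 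Once $w$ is unique, the ordering "around $w$" is well defined (it is the cyclic order of corners of $w$ in $T_w$ restricted to the relevant faces), and Holm--Rotenberg's ordered output matches it; this is the point where I would spend the most care in writing the full proof. Everything else — merging $O(\log n)$ balanced subtrees, binary search in $T_w$, $T_f^*$, $T_f'$ — is routine and already covered by Property~\ref{prop:toptree}.
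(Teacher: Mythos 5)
Your approach is correct and essentially the same as the paper's: both reduce $\Omega_v$ to a single black-box call to a Holm--Rotenberg linkability primitive, relying on Observation~\ref{obs:unique_arti} that the articulation vertex $w=\pi(v)$ is unique. The only cosmetic difference is that the paper first finds $w$ explicitly, performs a cut-and-test to verify that $u$ and $v$ are separated by $w$, and then calls \textbf{Linkable}$(u,w)$ to enumerate the target faces, while you call \textbf{Slide-Linkable}$(u,v)$ directly; the two routes return the same ordered structure with the same $O(\log^2 n)$ setup and $O(\log n)$ per-index access.
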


\begin{proof}
Holm and Rotenberg~\cite{holm2017dynamic} show how to find for any pair of vertices $v$ and $u$, the articulation vertex $w$ as described in Observation~\ref{obs:unique_arti}.
We test in $O(\log^2 n)$ time whether $u$ and $v$ are both in $G \backslash \{ w \}$ using the cut operation. Given $w$, we can query Linkable$(u, w)$ to obtain all faces that $v$ can slide into such that $u$ and $v$ afterwards share a face.
Per definition of Linkable$(u, w)$, these faces are returned in a data structure $S$ where for any index $i$, we can get the $i$'th face in $S$ (ordered around $w$) in $O(\log n)$ additional time. 
\end{proof}

\begin{lemma}
\label{lemma:legal_arti}
Let $G$ be a single-source digraph and $\E(G)$ be an upward planar.
Let $F_s = (f, w, v)$ be an articulation slide.
Let $C_1$ be the component of  $G \backslash \{ w \}$ containing $v$ and $g$ be the face in $C_1$ incident to $v$. 
Let $C_2$ be the component of  $G \backslash \{ w \}$ containing $f$.
$F_s$ does not violate upward planarity if and only if:
$C_1$ does not contain the source of $G$ and
$f$ and $g$ are incident to an edge in $C_2$ directed out of $w$.
\end{lemma}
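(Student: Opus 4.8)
The plan is to characterize exactly when the articulation slide $F_s=(f,w,v)$ keeps the embedding upward planar by invoking the face-sink criterion of Theorem~\ref{thm:facesink}, since the slide changes neither the underlying graph $G$ nor its set of sources/sinks/critical vertices — it only relocates the component $C_1$ from the corner of $w$ in the face $g$ to a corner of $w$ in $f$. Consequently the only thing that can change in $\F(\E(G))$ is which face of $C_2$ the spikes and top corners of $C_1$ (near $w$) attach to, plus the local picture at $w$ itself. I would first record this reduction precisely: after $F_s$, the combinatorial embedding is upward planar iff the new face-sink graph is still a forest with the correct roots (outer face / critical vertices), and since all of $C_1$'s internal structure and all of $C_2$'s structure outside the two affected faces are untouched, it suffices to check the trees of $\F$ containing $f$ and $g$.

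Next I would argue the ``only if'' direction via the two stated obstructions. If $C_1$ contains the source $s$ of $G$, then after cutting $C_1$ off at $w$ and re-embedding it in $f$, the source $s$ still lies in $C_1$ and the vertex $w$ now has, on the $g$-side, only the edges that remained with $w$; but the essential point is that $w$ must still be reachable appropriately and, more to the point, sliding a component containing the unique source breaks the single-source/outer-face structure required by Theorem~\ref{thm:facesink} — I would spell this out by noting that the tree $\T^\ast$ is rooted at the outer face incident to $s$, and the slide either disconnects $s$'s part of $\F$ from $\T^\ast$ or forces two roots, in either case violating the theorem. For the second condition: the slide reattaches $C_1$ to $w$ inside $f$; for the resulting angles at $w$ to be consistent with an upward drawing, the corner of $w$ in $f$ into which $C_1$ is inserted and the corner of $w$ in $g$ from which it is removed must both be ``below'' $w$ in the appropriate sense, which (using Lemma~\ref{lemma:conflict}-style reasoning on the local picture) is exactly the requirement that $f$ and $g$ each be incident to an edge of $C_2$ directed out of $w$; if not, the relocated subgraph would have to hang off $w$ on a side where $w$ has no outgoing edge in $C_2$, creating either a forbidden sink corner configuration or a new root in $\F$, again contradicting Theorem~\ref{thm:facesink}.

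For the ``if'' direction I would verify that when both conditions hold, the updated face-sink graph still satisfies Theorem~\ref{thm:facesink}. Since $C_1$ lacks the source, the tree of $\F$ containing $g$ is rooted either at the outer face (if $g$ is outer, handled symmetrically) or at a critical vertex, and detaching the part of that tree corresponding to $C_1$ and reattaching it under $f$ moves a whole subtree from one place in the forest to another — it stays a forest, and the root conditions are preserved because the subtree's own root (a critical vertex of $C_1$, or $w$ itself seen as a spike/top corner) is unchanged, while the condition that $f$ has an outgoing edge of $C_2$ at $w$ guarantees the new attachment corner of $w$ in $f$ is a genuine spike/top corner rather than an illegal configuration, so no new root is created. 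I would then conclude by invoking Theorem~\ref{thm:facesink} in the forward direction.

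The main obstacle I anticipate is the bookkeeping at $w$: carefully classifying the four corners of $w$ involved (two in $g$, two in $f$, before and after) as sink/source/top corners, and showing that ``$f$ and $g$ incident to an edge in $C_2$ directed out of $w$'' is precisely the condition under which the re-embedded $C_1$ attaches at a corner that is consistent with $\F$ being a valid forest — in particular ruling out the subtle case where the slide is locally fine at $w$ but globally creates a second root because $g$ (or $f$) was the outer face or because $C_1$'s attachment turns $w$ into a critical vertex that then fails to be a root of its tree. Handling the outer-face sub-case and the symmetric roles of $f$ and $g$ cleanly is where the argument needs the most care.
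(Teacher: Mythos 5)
Your high-level frame (reduce to the face--sink criterion of Theorem~\ref{thm:facesink} and check only the trees touched by the slide) matches the paper's sufficiency argument, but the core content of the lemma is asserted rather than proved, and one of your foundational claims is false. You state up front that the slide does not change the set of critical vertices; critical vertices depend on sink corners, and sink corners at $w$ \emph{do} change under a slide in general (e.g.\ if both $C_2$-edges of $f$ at $w$ point into $w$, the slide destroys that sink corner of $f$, and if both $C_2$-edges of $g$ at $w$ point into $w$, the slide creates a new sink corner of $g$). Ruling out exactly these changes is the whole point of the second condition, so it cannot be taken as given. Concretely, the paper's proof of necessity is a four-way angle analysis at $w$: it first shows (using that $C_1$ contains no source and $G$ is acyclic) that the $C_1$-edges $e_1,e_1'$ at $w$ are directed out of $w$ and that $\pi_1$ contains no top corner of $g$; then, if both $C_2$-edges of $f$ at $w$ are incoming, a reflex angle between them contradicts upward planarity of the \emph{current} $\E(G)$ and a convex angle forces the relocated outgoing edges $e_1,e_1'$ to point below $w$ \emph{after} the slide; and symmetrically for $g$, where the obstruction concerns the \emph{vacated} corner $(e_g,e_g')$ that the slide merges into a sink corner (either a second top corner of $g$ or a reflex sink corner forcing all other edges at $w$ downward). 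Your proposal's stated mechanism (``the relocated subgraph would have to hang off $w$ on a side where $w$ has no outgoing edge in $C_2$'') only speaks to the destination face $f$; the necessity of the condition on $g$, which is about what the corner of $g$ at $w$ becomes once $C_1$ leaves, is not addressed, and the claim that the corner condition ``is exactly'' the stated edge-direction condition is precisely what needs proof.

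Two further points. For the source case, the paper's argument is sharper than yours: if the source lies in $C_1$ it must lie on $\pi_1$ and the outer face must be $g$, so after the slide the source is no longer incident to the outer face, directly violating the root condition of Theorem~\ref{thm:facesink}; your alternative (``disconnects $s$'s part of $\F$ from $\T^*$ or forces two roots'') is not the actual mechanism, since the source is not a node of $\F$. Finally, your sufficiency sketch misidentifies the relevant corners: the new corners of $f$ at $w$ are bounded by the outgoing edges $e_1,e_1'$ and hence are never spikes or top corners; what the conditions buy is that the destroyed corner of $f$ and the newly merged corner of $g$ at $w$ are \emph{not} sink corners, so no sink or top corner is gained or lost at $w$, and the only change to $\F(\E(G))$ is that the spike children of $g$ lying on $\pi_1$ are re-parented to $f$ (which preserves the forest and root conditions because $\pi_1$ carries no top corner of $g$). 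As written, your plan would not survive the cases it needs to exclude.
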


\begin{figure}[htb]
\centering
\includegraphics{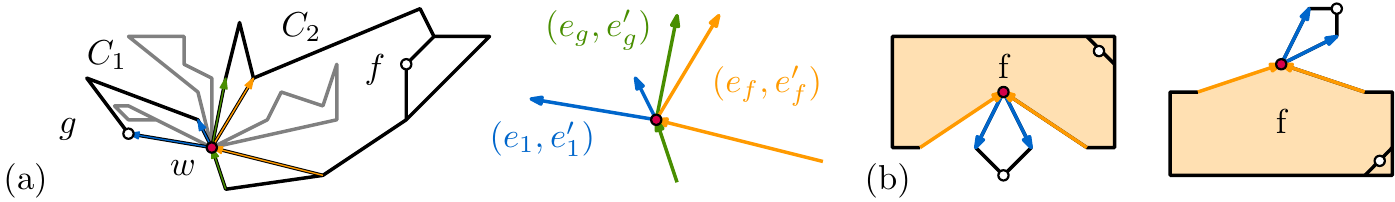}
\caption{
(a) A vertex $w$ and several flip components including $C_1$ and $C_2$ (in black).
We consider only the edges in $C_1$ and $C_2$ incident to either $f$ or $g$.
(b) If $e_f, e_f'$ are directed towards $w$, then either $\E(G)$ violates upward planarity either before or after the slide.
}
\label{fig:legal_arti}
\end{figure}
\begin{proof}
The proof is illustrated in Figure~\ref{fig:legal_arti}.
Denote by $\pi_1$ the outer cycle of $C_1$  and let it contain the unique source of $G$.
If $\E(G)$ is upward planar then the unique source of $G$ is incident to the outer face. Per definition, the outer face cannot be contained in $\pi_1$ and must thus be incident to $\pi_1$ (i.e. $g$ is the outer face).
After sliding $C_1$ into $f$, the unique source is no longer incident to $g$ which violates upward planarity. 

Let $\pi_1$ not contain the unique source. 
We show that $C_1$ may slide into $f$ if and only if $f$ and $g$ are incident to an edge in $C_2$ directed out of $w$.
Denote by $e_1, e_1'$ the edges of $\pi_1$ incident to $w$. 
These must be directed outwards of $w$: otherwise $\pi_1$ contains the unique source of $G$.
Similarly, $\pi_1$ cannot contain the top corner of $g$.
Denote by $e_f, e_f'$ the edges in $C_2$ incident to $f$ and $w$ and by $e_g, e_g'$ the edges in $C_2$ incident to $g$ and $w$. Let $f$ or $g$ not be incident to an edge of $C_2$ directed out of $w$ (Figure~\ref{fig:legal_arti}(b)):

\textbf{\boldmath Let $e_f, e_f'$ be both directed towards $w$}.
If the angle between $e_f, e_f'$ is reflex then before the slide $e_1, e_1'$ are edges directed outwards of $w$ which point below $w$ and thus $\E(G)$ was not upward planar. 
If otherwise their angle is convex then after the slide $e_1, e_1'$ are edges directed towards $w$, pointing below $w$ which violates upward planarity.

\textbf{\boldmath Let $e_g, e_g'$ be both directed towards $w$}.
If the angle between $e_g, e_g'$ is convex then before the slide $e_1, e_1'$ are edges directed outwards of $w$ which point below $w$ and thus $\E(G)$ was not upward planar. 
If otherwise their angle is reflex then after the slide $e_1, e_1'$ are edges directed towards $w$, pointing below $w$  which violates upward planarity.

\textbf{\boldmath Suppose otherwise that  $e_f, e_f'$ and $e_g, e_g'$ both contain at least one edge directed outwards of $w$.}
We show that the slide does not violate the face-sink graph:
 $e_1$ and $e_1'$ are directed outwards of $w$.
 For every other flip-component $C_3$ incident to $w$, their edges incident to $w$ and $f$ must point outwards of $w$ (else $C_3$ contains the source of $G$).
 Thus, after the slide, there are no fewer or extra sink corners incident to $w$ and $f$ (or $w$ and $g$). 
 Because $\pi_1$ does not contain a top corner of $g$, $f$ and $g$ remain incident to the same top corners they were incident to before the flip (if any).
All children of $g$ that lie on $\pi_1$ become children of $f$, and thus remain part of a valid tree in the face-sink graph.
\end{proof}

\begin{lemma}
\label{lemma:slideuplinkable}
We support the Slide-UpLinkable$(u, v)$ query in $O(\log^2 n + k)$ time.
\end{lemma}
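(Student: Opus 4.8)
The plan is to follow the three-stage decomposition that the paper has already set up for \texttt{Slide-UpLinkable}: first produce the candidate set $\Omega_v$, then carve out the upward-planarity-legal slides $\Omega_v'$, then carve out the slides $\Omega_v^*$ after which $u \tor v$ can actually be inserted. I would begin by invoking Lemma~\ref{lemma:find_arti-flips} to obtain $\Omega_v$ in $O(\log^2 n)$ time, as an ordered set around the (unique, by Observation~\ref{obs:unique_arti}) articulation vertex $w$, with $O(\log n)$-time indexed access to the $i$-th face. The key structural claim to establish is that each successive refinement $\Omega_v \supseteq \Omega_v' \supseteq \Omega_v^*$ removes only a prefix and a suffix, i.e. keeps a contiguous interval, so that all three sets are represented implicitly by a pair of indices into $S$ and the $k$ outputs can be streamed in $O(k)$ time via the Holm--Rotenberg reporting interface.

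For $\Omega_v'$ I would appeal directly to Lemma~\ref{lemma:legal_arti}: a slide $F_s = (f,w,v)$ is legal iff $C_1$ (the component of $v$ in $G\setminus w$) does not contain the source and both $f$ and $g$ are incident to an edge of $C_2$ directed out of $w$. The source condition depends only on $C_1$ and $w$, not on $f$, so it is a single $O(\log^2 n)$ test (via the cut operation, as in Lemma~\ref{lemma:find_arti-flips}) that either kills everything or nothing. The condition ``$g$ is incident to an out-edge of $w$'' likewise does not depend on $f$. The remaining condition ``$f$ is incident to an out-edge of $w$ in $C_2$'' is the only $f$-dependent one; I would argue that the faces around $w$ in $C_2$ that are flanked by an out-edge of $w$ form a contiguous arc of the rotation at $w$ (between the first and last out-edge in cyclic order, all intervening corners have an out-edge on one side), hence selecting those faces among $\Omega_v$ keeps a contiguous subinterval, locatable by two binary searches in $T_w$ (and its mirror $T_w'$) in $O(\log n)$ time.

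For $\Omega_v^*$ I would apply Lemma~\ref{lem:uplinkable} face by face in spirit, but exploit contiguity to avoid touching all faces. For each candidate face $f$ in $\Omega_v'$, after the slide $u$ and $v$ share $f$; uplinkability then reduces to conditions $(i)$--$(iii)$ of Lemma~\ref{lem:uplinkable} (plus $(iv)$ if $v<u$). Mirroring the argument in the proof of Lemma~\ref{lemma:uplinkable_query}: the faces failing $(i)$ (where $u$ is the top corner, forcing two in-edges of $u$ on $f$) form a contiguous arc of $S$; the faces failing $(iii)$ (no in-edge of $v$ on $f$) form a contiguous arc; and condition $(ii)$ can fail only when $v<u$, which, being a global property of $G$ independent of $f$, is an all-or-nothing $O(\log n)$ test handled as in Lemma~\ref{lemma:uplinkable_query} by picking a single surviving face and checking $(iv)$. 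So $\Omega_v^*$ is obtained from $\Omega_v'$ by intersecting with two contiguous arcs, again a contiguous interval found in $O(\log n)$ time with the relevant trees $T_f^*, T_f', T_u, T_v$. Finally I would run the entire symmetric procedure for slides $F_s' = (f,w',u)$ of $u$ (with $w'$ its own unique articulation vertex), obtaining a second implicit interval; the union of the two is the output, reported in $O(k)$ time.

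The main obstacle I anticipate is the contiguity argument for the $f$-dependent conditions when $\Omega_v$ is not itself a full ``interval of faces around $w$'' but only the subset of those faces that $u$ also touches after the slide: one must check that intersecting a contiguous arc of the rotation at $w$ with the Holm--Rotenberg set $S$ of slides that make $u,v$ share a face still yields something with $O(\log n)$-time indexed access and $O(k)$-time reporting — i.e. that $S$ is itself an interval in the same cyclic order, which is exactly the guarantee their \texttt{Slide-Linkable} output provides, so the intersection of two intervals in a common cyclic order is again an interval. Pinning down that all of conditions $(i),(ii),(iii),(iv)$ interact with the slide parameter $f$ only through ``prefix/suffix removal'' (and that the non-contiguous-looking one, $(ii)$, is in fact $f$-independent given $v\not<u$) is the delicate part; once that is in place the running time bookkeeping is routine.
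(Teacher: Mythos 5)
Your high-level decomposition matches the paper exactly: obtain $\Omega_v$ via Lemma~\ref{lemma:find_arti-flips} (cyclically ordered around the unique $w$), carve out the legal slides $\Omega_v'$ via Lemma~\ref{lemma:legal_arti} (source test once, $g$-test once, $f$-test by contiguity of the out-edge arc of $w$), then carve out $\Omega_v^*$ via Lemma~\ref{lem:uplinkable}. The contiguity bookkeeping for conditions $(i)$ and $(iii)$ and the interval-intersection argument with the Slide-Linkable output are also in line with the paper (in fact, $(iii)$ is not merely contiguous but slide-independent, since after sliding, the edges of $f$ at $v$ are exactly the edges of $g$ at $v$). However, there are two genuine gaps.

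First, your treatment of condition $(ii)$ is wrong. You claim that $(ii)$ ``can fail only when $v<u$'' and that this is therefore an $f$-independent all-or-nothing test, citing Lemma~\ref{lemma:uplinkable_query}. But the step in Lemma~\ref{lemma:uplinkable_query} you are leaning on (``if $S$ contains more than one element and $f\in S$ has an edge directed towards $v$, then $u\tor v$ can only be conflicted in $f$ if $v<u$'') is stated under the hypothesis that $u$ and $v$ share more than one face. After an articulation slide, $u$ and $v$ share exactly one face $f$, so that hypothesis fails and the argument does not transfer: a sink-before-source conflict can in principle occur on $\pi_b^*$ without any directed $v$-to-$u$ path. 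The paper's proof instead makes a direct all-or-nothing claim \emph{across slides}: after any slide $F_s'=(f',w,v)\in\Omega_v'$, the shortest subwalk from $v$ to $u$ along $f'$ necessarily contains the fixed subwalk $\pi_b^*$ from $v$ to $w$ along $g$, which terminates in an out-edge $e_1$ of $w$ (by Lemma~\ref{lemma:legal_arti}), so the walk cannot be a directed $v$-to-$u$ path; the only remaining possibility is a sink-before-source on $\pi_b^*$, and by Observation~\ref{obs:shortestpath} that condition is slide-independent. You need this argument, not the $|S|>1$ one.

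Second, the $u$-side is not ``the entire symmetric procedure.'' When the component being slid contains $u$ rather than $v$, the conflict analysis changes: the minimal walk from $v$ to $u$ enters the slid component through $w$ rather than leaving it through $w$, so condition $(ii)$ is no longer controlled by a fixed subwalk on $v$'s side. The paper handles this by observing that if $\Omega_u'$ is nonempty, then $u$ cannot be on a top corner of its component's outer cycle and $w$ cannot be the top corner of $f$, whence (by the conditions of Lemma~\ref{lemma:conflict}) uplinkability of $u\tor v$ after a slide in $\Omega_u'$ reduces to whether $w\tor v$ is uplinkable, which can then be tested once via the $v$-side analysis. Simply mirroring your $v$-side argument would leave this reduction, and the correctness of the $u$-side interval computation, unjustified.
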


\begin{proof}
We show that in $O(\log^2 n)$ time we can identify all slides $F_s = (f, w, v)$
such that after $F_s$, $u$ and $v$ are uplinkable.
Moreover, we show that in $O(k)$ time, we can report the first $k$ such slides.
We do a near-identical procedure for all slides $F_s^u = (f, w, u)$ to prove the lemma. 
Just as in Lemma~\ref{lemma:uplinkable_query} we first assume that $v \not < u$.
If our proposed output is not empty, we then verify whether $v < u$ in a similar fashion.

We find (by Lemma~\ref{lemma:find_arti-flips}) in $O(\log^2 n)$ time the set $\Omega_v$.
All such slides $F_s$ occur around a single articulation vertex $w$ (Observation~\ref{obs:unique_arti}) and we obtain $\Omega_v$ ordered around $w$. 
Next, we identify the subset $\Omega_v'$ of slides that do not violate upward planarity. 
Specifically, let $C_1$ be the flip component that contains $v$ and $C_2$ the flip component that contains $u$.
We first test if $C_1$ contains the unique source of $G$ as follows:
we identify in $O(\log n)$ time the pair of corners incident to $w$ that separate $C_1$ from the remainder of $G$. 
In $O(\log^2 n)$ time, we split $w$ along this corner so that $C_1$ is a separate connected component (Theorem~\ref{thm:dynamicdatastructure}). We then
use the Holm and Rotenberg~\cite{holm2017dynamic} data structure to test if the source and $v$ are still connected in $G$. 
If so, we terminate our query algorithm and conclude that our output contains none of $\Omega_v$.

Otherwise, we revert the vertex split. 
Consider a slide $F_s = (f, w, v) \in \Omega_v$ with $v$ incident to a face $g$ in $C_1$ and $u$ incident to the face $f$ in $C_2$. 
By Lemma~\ref{lemma:legal_arti}, $F_s$ violates upward planarity if and only if $f$ and $g$ are incident to an edge directed outwards of $w$ (in $C_2$).
We test if $g$ is incident to an edge (in $C_2$) directed outwards of $w$ in $O(\log n)$ time by binary searching along the radial ordering of edges incident to $w$ (Theorem~\ref{thm:dynamicdatastructure}). 
If $g$ is not,  we conclude that our output contains none of $\Omega_v$.
If $g$ is incident to such an edge, then the subset $\Omega_v' \subseteq \Omega_v$ of slides $F_s' = (f', w, v)$ where $f'$ is incident to an edge directed outwards of $w$ (in $C_2$) must be a contiguous subsequence of $\Omega_v$. We identify this sequence in $O(\log^2 n)$ additional time using binary search.

Given $\Omega'_v$, we want to find the subset $\Omega^*_v$ of slides involving $v$ that are part of our output. 
For all  slides $F_s \in \Omega'_v$ we know by Lemma~\ref{lem:uplinkable} that $u \tor v$ is uplinkable in $f$ iff:
\begin{enumerate}[$(i)$, noitemsep, nolistsep]
    \item $u$ is not incident to the top corner of $f$, 
    \item $u \tor v$ is  not conflicted in $f$, and 
    \item $f$ is incident to an edge directed towards $v$. 
\end{enumerate}
Now note that:
\begin{enumerate}[$(i)$, noitemsep, nolistsep]
\item Just as for uplinkability, the set of slides where $u$ is incident to the top corner of $f$ must be a contiguous subsequence of $\Omega'_v$ that we identify in $O(\log n)$ time.
\item Here, we claim that $u \tor v$ is conflicted for $F_s = (f, w, v)$ if and only if it is conflicted for all of $F_s' = (f', w, v)$. 
Indeed, consider the two minimal subwalks $\pi_b^*$ and $\pi_b'$ from $v$ to $w$ along the face $g$ (one clockwise along $g$, the other counterclockwise).
After all slides $F_s' = (f', w, v)$, the minimal subwalk from $v$ to $u$ must include $\pi_b^*$ (or they all must include $\pi_b'$). 
Assume $\pi_b^*$.
We showed in Lemma~\ref{lemma:legal_arti} that these subwalks must end in edges $e_1$ or $e_1' $ directed away from $w$.
Thus, $v \tor u$ cannot be conflicted in $f'$ because of a directed path from $v$ to $u$.
It follows that $v \tor u$ can only be conflicted in $f'$ because $\pi_b^*$ includes a sink corner before a source corner.
However, by Observation~\ref{obs:shortestpath}, then $v \tor u$ must be conflicted for all slides $F_s'$.
We identify $\pi_b^*$ and test if there is a sink corner before a source corner in $O(\log n)$ time, anologue to Lemma~\ref{lemma:uplinkable_query}.
    \item After a slide $F_s$, $f$ is incident to an edge directed towards $v$ if and only if $g$ was incident to an edge directed towards $v$.
We test this in $O(\log n)$ time by checking the radial ordering of edges around $v$.
\end{enumerate}
The above analysis shows that we can identify the set $\Omega^*_v$ in $O(\log n)$ time.

To find the set $\Omega^*_u$ of slides $F^u_s  = (f, w, u)$ that are part of our output, we do a near-identical procedure with a few significant changes: 
given the set $\Omega'_u$, we want to find the subset $\Omega^*_u$ of slides that are part of our output. 
By the above analysis, if $\Omega_u'$ is not empty, then $C_2$ cannot contain a top corner on its outer cycle and thus $u$ cannot be incident to a top corner. 
Moreover for any slide $F_s^u = (f, w, u) \in \Omega'_v$, the articulation vertex $w$ cannot be incident to the top corner of $f$. 
It follows from the conditions of Lemma~\ref{lemma:conflict} that after any slide $F_s^u \in \Omega'_v$, $u \tor v$ may be inserted without violating upward planarity if and only if $(w, v)$ may be inserted without violating upward planarity. 
By the proof of Lemma~\ref{lem:uplinkable}, we can identify this subset $\Omega^*_u$ in $O(\log^2 n)$ additional time. 

What remains is to verify whether $v < u$. 
If $\Omega^*_u \cup \Omega^*_v$ are not empty, we simply perform the first articulation slide and use Lemma~\ref{lem:uplinkable} to check condition $(iv)$ (and thus whether $v < u$).
We report either all or none of $\Omega^*_u \cup \Omega^*_v$ accordingly.
\end{proof}

\mypar{Twist-UpLinkable$(u, v)$.}
With the above analysis, we can almost immediately show the following:

\begin{lemma}
We support the Twist-UpLinkable$(u, v)$ query in $O(\log^2 n + k)$ time.
\end{lemma}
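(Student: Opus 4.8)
The plan is to mirror the proof of Lemma~\ref{lemma:slideuplinkable}, exploiting that an articulation twist is combinatorially a special case of the machinery developed there. As in that proof, I would first assume $v \not< u$ and lift this assumption at the very end; and I would handle twists of the form $F_t = (w, v)$, the twists $F_t' = (w', u)$ being treated symmetrically.

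First I would observe that a twist $F_t = (w,v)$ can make $u \tor v$ uplinkable only if $u$ and $v$ already share a face $f$: mirroring the pendant component $C_w(v)$ preserves the vertex set of every face (it only reverses the cyclic order along the mirrored subwalk), so $u$ and $v$ share a face after $F_t$ if and only if they do before, and it is the same face. Using Linkable$(u,v)$ we obtain in $O(\log^2 n)$ time the ordered set $S$ of faces incident to both $u$ and $v$; using the block--cut structure behind Slide-Linkable$(u,v)$ (Observation~\ref{obs:unique_arti}) we obtain the unique articulation vertex $\pi(v)$ that is closest to $v$ on the block--cut path from $v$ to $u$ (if $u$ and $v$ lie in a common block there is no such vertex and no twist $F_t = (w,v)$ is a candidate, so $\Lambda_v = \emptyset$). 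For any shared face $f$, the cut vertices $w$ with $v \in C_w(v)$, $u \notin C_w(v)$ that lie on the boundary walk of $f$ are exactly those on this block--cut path, and $\pi_f(v)$ coincides with $\pi(v)$. This yields the implicit $O(\log^2 n)$-time description of $\Lambda_v$ claimed in the roadmap, together with its canonical representative $F_t = (\pi(v), v)$.

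The core step is the structural claim that, for a fixed shared face $f$, every candidate twist $F_t = (w,v)$ has the same effect on conditions $(i)$--$(iii)$ of Lemma~\ref{lem:uplinkable}, and this effect coincides with that of the single twist $F_t = (\pi_f(v), v)$. Conditions $(i)$ and $(iii)$ are invariant under any such twist: mirroring a subgraph preserves edge orientations and angle convexity, so both the identity of the top corner of $f$ and the set of edges of $f$ directed towards $v$ are unchanged (here $u$ lies outside every $C_w(v)$, and the rotation at $v$ is reversed identically by all candidate twists since $C_{\pi_f(v)}(v)$ is nested inside every $C_w(v)$). For condition $(ii)$ I would argue, exactly as in case $(ii)$ of the proof of Lemma~\ref{lemma:slideuplinkable}, that $u \tor v$ is conflicted after one candidate twist if and only if it is conflicted after all of them: after any candidate twist the minimal subwalk $\pi_b^*$ of $f$ from $v$ towards $u$ still begins with an edge directed towards $v$ (the two edges of $C_w(v)$ incident to $w$ on $f$ cannot both be directed towards $w$ in an embedding that is upward planar before and after the twist, else some component at $w$ would carry the source below $w$), so $v \tor u$ can only be conflicted because $\pi_b^*$ meets a sink corner before a source corner; by Observation~\ref{obs:shortestpath} this is decided by a portion of the boundary of $f$ strictly between $v$ and that first relevant source/sink corner, on which every candidate twist acts identically. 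Granting the claim, we compute $\pi_f(v)$ in $O(\log n)$ time from $T_f$ and the maintained block--cut structure, conceptually perform $F_t = (\pi_f(v), v)$, and invoke Lemma~\ref{lem:uplinkable} (testing $(i)$--$(iii)$ on $f$ using $T_f^*$, $T_f'$, $T_v$, $T_v'$ as in Lemma~\ref{lemma:uplinkable_query}) in $O(\log n)$ time; since the outcome is uniform over $\Lambda_v$, this decides the $v$-side, and we output $F_t = (\pi(v), v)$ (resp. $F_t' = (\pi(u), u)$) exactly when the corresponding test succeeds. These are at most two twists, reported in $O(1)$ additional time. Finally, we lift $v \not< u$ precisely as in Lemmas~\ref{lemma:uplinkable_query} and~\ref{lemma:slideuplinkable}: if the proposed output is nonempty we perform the flip and test condition $(iv)$, reporting either all or none of the output accordingly.

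The main obstacle is the structural claim, and within it the conflictedness half of condition $(ii)$: one must rule out that a candidate twist at a cut vertex far from $v$ reverses a longer stretch of the boundary of $f$ and thereby changes the order in which sinks and sources appear along $\pi_b^*$, relative to the twist at $\pi_f(v)$. The resolution is that $\pi_f(v)$ isolates the \emph{maximal} subgraph containing $v$ inside $f$, so all candidate twists act identically on the initial segment of $\pi_b^*$ up to the first source/sink corner that matters, and Observation~\ref{obs:shortestpath} then makes the comparison to a source corner independent of the choice of $w$. The remaining ingredients are routine: that mirroring preserves the orientation- and convexity-data our data structure tracks, and that $\Lambda_v$, its representative $\pi(v)$, and each $\pi_f(v)$ are extractable in $O(\log^2 n)$ time from the structure of Theorem~\ref{thm:dynamicdatastructure} and the data structure of Holm and Rotenberg~\cite{holm2017dynamic}.
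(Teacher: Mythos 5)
Your proposal is correct and follows essentially the same route as the paper: use Linkable$(u,v)$ and the block--cut structure of Holm and Rotenberg to reduce to the (at most one) shared face $f$ and to extract $\Lambda_v$, observe that conditions $(i)$, $(iii)$ (and $(iv)$) of Lemma~\ref{lem:uplinkable} are unaffected by any candidate twist while conflictedness (condition $(ii)$) is uniform over all of $\Lambda_v$ via Observation~\ref{obs:shortestpath} and the common initial subwalk at $v$, then test a single representative twist and repeat symmetrically for $u$. The one step the paper makes explicit that you only presuppose is that twists in $\Lambda_v$ never violate upward planarity (both edges of the pendant component at $w$ incident to $f$ are directed out of $w$, so the corners at $w$ and hence the face-sink graph are combinatorially unchanged), which is needed before Lemma~\ref{lem:uplinkable} can be applied to the post-twist embedding; your parenthetical about the edges at $w$ is exactly this fact, so the omission is cosmetic rather than a genuine gap.
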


\begin{proof}
Consider the set $\Lambda_v$ of all articulation twists  $F_t = (w, v)$ where $u$ does not share a component with $v$ in the graph $G \backslash \{ w \}$.
Observe that this set is empty whenever $u$ and $v$ share more than one face (since then, $u$ and $v$ must be biconnected in $G$). 
We use Linkable$(u, v)$ in $O(\log^2 n)$ time to obtain the faces shared by $u$ and $v$ and we test in $O(1)$ if this outputs exactly one face $f$. 
If it does not, we conclude that our output is empty.

Given the unique face $f$, the set $\Lambda_v$ corresponds to all articulation points $w$ on the path from $u$ to $v$, which have two corners incident to $f$. This set is \emph{exactly} what the Linkability algorithm by Holm and Rotenberg~\cite{holm2017dynamic} identifies and thus we obtain $\Lambda_v$ in $O(\log^2 n)$ time.

We now note that articulation twists never violate update planarity: 
For all $F_t = (w, v)$, the component separated by $w$ must have two outgoing edges incident to $f$ (otherwise, the component must contain the unique source of $G$ on its border). 
It follows that any twist $F_t = (w, v)$ does not alter the corners incident to $w$ and thus, combinatorially, does not change the face sink graph. 
We conclude our argument by showing that $u \tor v$ is uplinkable after either all twists in $\Lambda_v$, or no twist in $\Lambda_v$. 
Indeed, consider the conditions of Lemma~\ref{lem:uplinkable}: 
Conditions $(i)$, $(iii)$ and $(iv)$ remain unchanged after a twist in $\Lambda_v$.
So $u \tor v$ is not uplinkable after $\Lambda_v$ if and only if $u \tor v$ is conflicted in $f$. 

Denote by $F^* = (w^*, v)$ the first articulation twist in $\Lambda_v$. 
Consider the two minimal subwalks $\pi_b^*$ and $\pi_b'$ from $v$ to $w^*$ along the face $f$ (one clockwise along $g$, the other counterclockwise).
After any twist in $\Lambda_v$, the minimal subwalk from $v$ to $u$ along $f$ will include $\pi_b^*$ (or $\pi_b'$).
Identical to Lemma~\ref{lemma:slideuplinkable}, by Observation~\ref{obs:shortestpath}, this implies that $u \tor v$ is conflicted after any flip if and only if $\pi_b^*$ contains a source before a sink.
We can test this using Theorem~\ref{thm:dynamicdatastructure} and output the result accordingly.

Finally, we remark that we can do an identical procedure for the twists $F_t^u = (w, u)$ in the corresponding set $\Lambda_u$.
\end{proof}

\newpage

\mypar{Separation-UpLinkable$(u, v)$} reports at most one separation flip $F^* = (f, g, x, y, u / v)$ after which the edge $u \tor v$ may be inserted across a face in $\E(G)$. 
First we define and identify the set $\Sigma_v$ of separation flips $F = (f_u, g_v, x, y, v)$.
Recall that $(x,y)$ is a separation pair such that cutting $(x, y)$ along the corners of $f_u$ and $g_v$, separates $u$ and $v$ in $G$. 
Moreover, $v$ is incident to $g_v$ before the operation, and shares $f_u$ with $u$ afterwards. 
The following observation follows immediately from the definition of $F$:

\begin{observation}
\label{obs:normalcase}
For any $F = (f_u, g_v, x, y, v)$ and $F' = (f_u' , g_v' , x', y', v)$ in $\Sigma_v$, $g_v = g_v'$.
\end{observation}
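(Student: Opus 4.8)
The plan is to show that the second coordinate of every flip in $\Sigma_v$ is one and the same face, which we can describe intrinsically in terms of $u$ and $v$ alone. As the phrasing around the statement suggests, this is essentially forced once we unpack what a flip in $\Sigma_v$ does; the work is in (i) identifying the face $g_v$ explicitly and (ii) checking a genericity condition.

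First I would recall the anatomy of a flip $F=(f_u,g_v,x,y,v)\in\Sigma_v$. Removing the separation pair $(x,y)$ detaches the component $C$ that contains $v$; together with $x$ and $y$ this component occupies a closed disk whose boundary splits into two $x$-to-$y$ walks, and exactly two faces of $\E(G)$ are incident to $C$ from outside, one along each walk. Because the operation cuts $x$ and $y$ along their corners at $f_u$ and at $g_v$, these two faces are exactly $f_u$ and $g_v$; moreover $v$ must lie on the walk bounding $g_v$ (so that mirroring $C$ carries $v$ onto $\partial f_u$), while $u\notin C$ and $u\in\partial f_u$ along the other walk. So $g_v$ is a face incident to $v$, incident to both $x$ and $y$, and lying on the side of $C$ opposite to $u$.

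Next I would pin $g_v$ down independently of $(x,y)$. All separation pairs occurring in flips of $\Sigma_v$ separate $u$ from $v$, and for such pairs the $v$-side components are pairwise nested, hence linearly ordered by inclusion. Take two flips with components $C\subseteq C'$ and faces $g,g'$. Since $C'\supseteq C$ while the new separation pair lies on the $u$-side of the old one, the extra part $C'\setminus C$ attaches to $C$ only along the walk that borders $f_u$ (the $u$-side) and never touches the walk bordering $g$. Consequently the face $g$ still borders $C'$ from outside along the walk opposite to $u$, i.e. $g=g'$. Propagating this along the linear order shows $g_v=g_v'$ for all flips in $\Sigma_v$; concretely, $g_v$ is the face incident to $v$ reached first when one leaves $v$ on the side away from $u$, equivalently the face along the ``$g$-wall'' of the smallest $v$-side component (the pair $(x^\ast,y^\ast)$ nearest $v$ referred to in the discussion above).

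The only place I expect to need real care is the degenerate case where $v$ is a cut vertex of its $v$-side component $C$, so that $v$ has several corners facing the exterior of $C$ and could a priori be assigned to more than one walk. In the generic case $v$ has a unique such corner, its face is $g_v$, and the nesting argument applies verbatim; in the degenerate case I would additionally invoke the requirement $u\in\partial f_u$ on the complementary walk, which rules out all but one of $v$'s exterior corners and hence again determines $g_v$ uniquely. This mirrors the analogous planar fact underlying the Separation-Linkable operation of~\cite{holm2017dynamic}.
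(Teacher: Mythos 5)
The paper offers no argument for this statement at all: it is declared to ``follow immediately from the definition'' of $\Sigma_v$, and the real justification is deferred to the Holm--Rotenberg machinery (see the proof of Lemma~\ref{lemma:find_sep-flips}, where Theorem~5 of \cite{holm2017dynamic} is said to identify \emph{the} face $g_v$ common to all flips in $\Sigma_v$). So your proposal is judged on whether it stands on its own, and there it has a genuine gap at its load-bearing step: you assert that the $v$-side components of the separation pairs occurring in $\Sigma_v$ are pairwise nested. For arbitrary separation pairs separating $u$ from $v$ this is simply false -- two pairs whose four vertices interleave on a common cycle give crossing $v$-sides -- so nestedness can only come from the additional structure of $\Sigma_v$, namely that all these pairs are cofacial with the \emph{same} two faces, i.e.\ lie on a common fundamental cycle between $u$ and $v$. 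That is essentially the content of the observation (together with Observation~\ref{obs:specialcase}) itself, so as written the argument is circular at exactly the point where work is needed.

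The subsequent step has the same problem: the claim that $C'\setminus C$ attaches to $C$ ``only along the walk bordering $f_u$'' (so that $g$ survives as the non-$u$ flanking face of $C'$) is not justified. A priori the larger component can attach at $x,y$ on the $g$-side, so that $\partial g$ ends up entirely inside $C'\cup\{x',y'\}$ and the flanking face $g'$ of $C'$ differs from $g$; and one must also exclude $g=f'$, which requires knowing that $u$ and $v$ share no face. That precondition (inherited from Separation-Linkable in \cite{holm2017dynamic}, which is only invoked when Linkable and Slide-Linkable are empty) is never stated in your argument, and without it the statement itself can fail: if $u$ and $v$ lie on a common cycle bounding two faces, the same separation pair yields two flips in $\Sigma_v$ whose second faces are the two distinct faces of that cycle. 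A sound route is either to invoke the Holm--Rotenberg structure as the paper does -- all pairs in $\Sigma_v$ lie on one fundamental cycle between $u$ and $v$, bounding the two fixed faces $f_u,g_v$ -- or to argue directly, under the no-shared-face precondition, that $g_v$ is forced to be the unique face incident to $v$ and to both separation vertices on the side of the $v$-component away from $u$; your sketch presupposes both of these facts rather than proving them.
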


\noindent
We first consider the special case where $\Sigma_v$ contains at least two elements $F$ and $F'  = (f_u', g_u', x', y', v)$ with $f_u \neq f_u'$ and we observe the following:

\begin{observation}
\label{obs:specialcase}
If $\Sigma_v$ contains at least two elements $F = (f_u, g_v, x, y, v)$ and $F' = (f_u' , g_v' , x', y', v)$ with $f_u \neq f_u'$ then:
$(x, y) = (x', y')$ and  $|\Sigma_v| = 2$.
\end{observation}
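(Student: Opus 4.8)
\textbf{Proof proposal for Observation~\ref{obs:specialcase}.}

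The plan is to argue from planarity combinatorics about separation pairs. Recall that $(x,y)$ is a separation pair of $G$ precisely when $x$ and $y$ are both incident to the same two faces, and that a separation flip $F=(f_u,g_v,x,y,v)$ cuts along the corners of $f_u$ and $g_v$, so in particular both $x$ and $y$ must be incident to $f_u$ and to $g_v$ simultaneously. First I would fix the two flips $F=(f_u,g_v,x,y,v)$ and $F'=(f_u',g_v',x',y',v)$ with $f_u\neq f_u'$, and use Observation~\ref{obs:normalcase} to record that $g_v=g_v'=:g$. So $\{x,y\}$ are both incident to $g$ and to $f_u$, while $\{x',y'\}$ are both incident to $g$ and to $f_u'$.

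The heart of the argument is to show $(x,y)=(x',y')$. Here I would exploit that all flips in $\Sigma_v$ separate $u$ from $v$ and move $v$ (currently on $g$) onto a face shared with $u$; equivalently the subgraph containing $v$ that gets cut off by $(x,y)$ (respectively $(x',y')$) is the one whose embedding gets relocated, and it contains $v$ but not $u$. Consider the component $C$ of $G\setminus\{x,y\}$ containing $v$: it is attached to the rest via only $x$ and $y$, and it has a face incident to $v$ lying inside $g$ (this is $g$ restricted to $C$'s side). Do the same for $C'$ relative to $(x',y')$. Since $C$ and $C'$ both contain $v$, are both ``on the $g$-side'', and the separation pairs $\{x,y\},\{x',y'\}$ are both sub-pairs drawn from the vertex set incident to the single face $g$, I would argue they must coincide: if $\{x,y\}\neq\{x',y'\}$ one of the two pairs, say $\{x',y'\}$, would have to lie inside $C$ (or $C'\subsetneq C$), but then performing $F'$ would not move $v$ onto a face with $u$ (as $u$ is outside $C$, and $C'$ would be strictly interior, not reaching $f_u'\neq g$) --- contradicting $F'\in\Sigma_v$. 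The clean way to phrase this: the common face $g$ has a cyclic sequence of incident vertices; both $x,y$ and $x',y'$ appear on this cycle and split $v$'s position from $u$'s; a planar face boundary can be split into exactly two arcs by any such pair, and if the two pairs differed, one pair would fail to separate $u$ from $v$ along $g$, contradicting that it witnesses a flip in $\Sigma_v$. Hence $(x,y)=(x',y')$.

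Once $(x,y)=(x',y')$, the second face $f_u$ of a flip $(f_u,g,x,y,v)$ must be a face incident to both $x$ and $y$ and distinct from $g$ — but a separation pair $(x,y)$ in a planar embedding is incident to exactly two faces, so there is exactly one choice of $f_u\neq g$. Therefore $f_u=f_u'$ after all for any two elements with the same $(x,y)$, and combined with the hypothesis $f_u\neq f_u'$ we conclude $|\Sigma_v|=2$ (the two elements being exactly the flip toward the one valid $f_u$, from each of the two sides — or more simply, there are at most two flips total and we assumed at least two, so exactly two).

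The main obstacle I anticipate is making the separation argument in the previous paragraph fully rigorous: one must carefully use that the relocated subgraph is required to contain $v$ and \emph{not} $u$ and to end up incident to the chosen face $f_u$, to rule out nested separation pairs along $g$. Everything else (``a separation pair touches exactly two faces'', ``a face boundary is split into two arcs by an incident vertex pair'') is standard planar-embedding combinatorics and I would cite the definitions given in the preliminaries rather than re-derive them.
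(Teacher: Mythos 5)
The paper states Observation~\ref{obs:specialcase} without proof, so there is no reference argument to compare against; I can only assess your proposal on its own terms, and I think it has a real gap in the second half.

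Your step establishing $|\Sigma_v|=2$ relies on the claim that ``a separation pair $(x,y)$ in a planar embedding is incident to exactly two faces.'' That is not true in general: if $G\setminus\{x,y\}$ has $k$ split components $C_1,\dots,C_k$ arranged cyclically around $x$ and $y$, then $x$ and $y$ share $k$ faces (one between each pair of consecutive components). The paper's sentence ``$(x,y)$ are a separation pair if they are both incident to the same two faces'' asserts that sharing two faces characterizes separation pairs, not that exactly two are shared. Worse, even granting your claim, your reasoning is self-defeating: you derive that $f_u$ is uniquely determined by $(x,y)$ and $g$, hence $f_u=f_u'$, which flatly contradicts the standing hypothesis $f_u\neq f_u'$. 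From a contradiction one concludes that the hypothesis never occurs (making the observation vacuous), not that $|\Sigma_v|=2$. The parenthetical rescue (``the flip toward the one valid $f_u$, from each of the two sides'' / ``there are at most two flips total'') is not substantiated; no bound of two is actually established anywhere in your argument. The situation the observation must handle is precisely one where $x$ and $y$ are incident to more than two faces, so that two distinct other faces $f_u\neq f_u'$ can each witness a valid flip with the same pair $(x,y)$; a correct argument needs to analyze that configuration and explain why no third flip (in particular, no flip with a different separation pair) can belong to $\Sigma_v$.

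The first half of your proposal, showing $(x,y)=(x',y')$, is closer in spirit but also under-argued. You assert that if $\{x,y\}\neq\{x',y'\}$ then one pair nests inside the other ($C'\subsetneq C$); but two separation pairs both lying on the boundary cycle of $g_v$ could also interleave (``cross'') or be incomparable, and you do not rule those out. You also do not rigorously justify the key step that nesting forces $f_u'$ to be a face interior to $C$ and therefore not incident to $u$ — this is true and is essentially the right idea, but as written it is a claim rather than a derivation. Note also that if you did make this nesting argument airtight, it would apparently exclude the ``normal case'' of the paper (Lemma~\ref{lemma:find_sep-flips}, Lemma~\ref{lemma:only_sep_you_need}) in which $|\Sigma_v|>2$ arises from many distinct nested pairs $(x_i,y_j)$; reconciling your argument with that case (the resolution is that in the normal case all flips share the \emph{same} $f_u$, whereas your nesting claim is being invoked under the extra hypothesis $f_u\neq f_u'$) would have to be made explicit for the proof to stand.
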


\noindent
Whenever, at query time, we identify $\Sigma_v$ and detect that $|\Sigma_v| \leq 2$, we simply try all elements of $\Sigma_v$ first applying the separation flip, and then testing for uplinkability using Lemma~\ref{lemma:uplinkable_query}. 
Henceforth, we assume that $|\Sigma_v| > 2$.
Thus, for all $F$ and $F'$ in $\Sigma_v$, the face $f_u = f_u'$ and $g_v = g_v'$. 
Next, we consider the set $\Sigma_v' \subseteq \Sigma_v$ of separation flips that do not violate upward planarity. 
We show that $\Sigma_v'$ is not empty if and only if it contains a flip $F^*$ where $F^*$ has the closest separation pair $(x^*, y^*)$ to $v$. 
Finally, we show that if there exist separation flip in $\Sigma_v$ where, after $F$, $u \tor v$ are uplinkable in $f_u$ then this must be true for $F^*$. We test the validity of $F^*$ using Theorem~\ref{thm:dynamicdatastructure} and Lemma~\ref{lemma:uplinkable_query}.

\begin{lemma}
\label{lemma:find_sep-flips}
We can identify from the set $\Sigma_v$ in $O(\log^2 n)$ an element $F^* = (f_u, g_v, x^*, y^*) \in \Sigma_v$ where $(x^*, y^*)$ is closest to $v$ in $O(\log n)$ additional time.
\end{lemma}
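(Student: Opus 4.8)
The plan is to reduce the identification of $\Sigma_v$ to the Separation-Linkable query of Holm and Rotenberg~\cite{holm2017dynamic}, and then to locate within the returned structure the flip whose separation pair is closest to $v$. First I would run (a variant of) Separation-Linkable$(u,v)$, which in $O(\log^2 n)$ time produces an implicit representation of all separation pairs $(x,y)$ whose cut, through corners of two shared faces, separates $u$ from $v$. Concretely, such separation pairs correspond to pairs of vertices on a relevant path (or along the SPQR/BC structure) between $u$ and $v$ that are both incident to the two faces $f_u$ and $g_v$; since we have already disposed of the case $|\Sigma_v|\le 2$, we are in the situation where $f_u$ and $g_v$ are uniquely determined (Observations~\ref{obs:normalcase} and~\ref{obs:specialcase}), so the set of admissible separation pairs is naturally linearly ordered along the face $g_v$ (equivalently, along the subwalk of $\partial g_v$ between the two corners of $v$). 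I would store this ordered set in the balanced/tree structure that the Holm--Rotenberg machinery already provides, so that the $i$'th separation pair in this order can be retrieved in $O(\log n)$ time.

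Next I would pin down what ``closest to $v$'' means and extract the corresponding flip. Walking along $\partial g_v$ starting from a corner of $v$, each admissible separation pair $(x,y)$ cuts off a subgraph; the pair that is closest to $v$ is the one cutting off the inclusion-minimal such subgraph containing $v$ (this is exactly the $(x^\ast,y^\ast)$ referred to in the Separation-UpLinkable discussion and in Figure~\ref{fig:finalfigure}(c)). Because the admissible pairs are totally ordered along $g_v$, ``closest to $v$'' is simply the first (or last) element of the ordered set returned above, which I can fetch in $O(\log n)$ additional time once the structure is built. From $(x^\ast,y^\ast)$ together with the fixed faces $f_u$ and $g_v$ I read off $F^\ast = (f_u, g_v, x^\ast, y^\ast, v)$ directly; the corner arguments needed to apply the flip (i.e. $c_g^{x^\ast}, c_g^{y^\ast}, c_f^{x^\ast}, c_f^{y^\ast}$) are located by binary search in $T_{g_v}$, $T_{f_u}$, $T_{x^\ast}$, $T_{y^\ast}$ in $O(\log n)$ time via the pointers maintained by Theorem~\ref{thm:dynamicdatastructure}.

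I expect the main obstacle to be arguing that the flips in $\Sigma_v$ really are in bijection with a contiguous, linearly ordered family produced by Separation-Linkable, and that this order is the ``distance from $v$ along $g_v$'' order; this requires checking that for $|\Sigma_v|>2$ the two faces $f_u,g_v$ are forced and that every separating pair incident to both of them appears in the Holm--Rotenberg output (and conversely). Once that structural correspondence is in hand, the time bounds are immediate: $O(\log^2 n)$ to invoke Separation-Linkable and to perform any auxiliary cut/test via Theorem~\ref{thm:dynamicdatastructure}, and $O(\log n)$ to return the extremal element and its corner data. A minor point to handle carefully is the degenerate case where $v$ itself is one of $x,y$, or where $u$ or $v$ is a cut vertex, so that the ``path between $u$ and $v$'' along which separation pairs live is the one used in Lemma~\ref{lemma:find_arti-flips}; in all these cases the same ordered-structure argument applies after the initial $O(\log^2 n)$ normalization.
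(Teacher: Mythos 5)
Your overall route is the same as the paper's: for $|\Sigma_v|>2$ the faces $f_u,g_v$ are forced (Observations~\ref{obs:normalcase} and~\ref{obs:specialcase}), and one opens the black box of Theorem~5 of Holm and Rotenberg~\cite{holm2017dynamic} to obtain an ordered family of separation pairs between $u$ and $v$ (vertices on a fundamental cycle, all incident to $f_u$ and $g_v$), from which the extremal, closest-to-$v$ pair $(x^*,y^*)$ is read off in $O(\log n)$ additional time.

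There is, however, a concrete gap in exactly the step you flag as ``the main obstacle'' and then leave unresolved. Separation-Linkable$(u,v)$ is only defined for pairs where Linkable$(u,v)$ and Slide-Linkable$(u,v)$ are empty, and inside the proof of Theorem~5 of~\cite{holm2017dynamic} the emptiness of the slide set $\Omega_v$ is precisely what is used to pin down the face $g_v$ along which the ordered set of separation pairs lives. So your ``variant of Separation-Linkable'' cannot be invoked as-is when $\Omega_v\neq\emptyset$, and your closing remark about cut vertices and the path of Lemma~\ref{lemma:find_arti-flips} does not supply the missing argument. The paper handles this case explicitly: when $\Omega_v$ is nonempty, the face $g$ incident to all articulation slides in $\Omega_v$ must coincide with the face $g_v$ common to all flips in $\Sigma_v$, and once $g_v$ is identified this way one can re-enter the Theorem~5 machinery to obtain the ordered set of candidate pairs and hence $(x^*,y^*)$. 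Adding this case distinction (empty versus nonempty $\Omega_v$, alongside the $|\Sigma_v|\le 2$ special case you correctly set aside) is what is needed to make your proposal complete.
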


\begin{proof}
This follows almost immediately from the proofs of Theorem 5 in \cite{holm2017dynamic}. However, to formally show this, we slightly open the black box to investigate their argument.
\begin{itemize}[noitemsep, nolistsep]
    \item If we are in the special case as by Observation~\ref{obs:specialcase}, their proof gives us exactly  the separation pair $(x, y)$ and the at most four flips after which $u$ and $v$ share a face (where at most two of these separation flips are in $\Sigma_v$).
    \item If we are not in the special case, and the set $\Omega_v$ of articulation slides involving $v$ and after which $u$ and $v$ share a face is empty, then we can immediately apply Theorem 5 in \cite{holm2017dynamic}.
    In \cite{holm2017dynamic}, Holm and Rotenberg use the fact that $\Omega_v$ is empty to identify the face $g_v$ corresponding to all separation flips in $\Sigma_v$. 
    Given $g_v$, they identify an ordered set of vertices $\Sigma$ on a fundamental cycle between $u$ and $v$ where: every pair of vertices $(x, y) \in \Sigma \times \Sigma$ (with the fundamental cycle traversing $(x, u, y, v)$ in this order) forms a separation pair that separates $u$ and $v$. 
    Since this is an ordered sequence of vertices, we obtain the closest pair $(x^*, y^*)$ in $O(\log n)$ additional time.
    \item If we are not in the special case and the set $\Omega_v$ is not empty, we cannot immediately apply their proof. Instead we identify the face $g$ incident to all articulation slides in~$\Omega_v$. 
    Now $g$ must be equal to the face $g_v$ corresponding to the separation flips in $\Sigma_v$. 
    Since we now have identified the face $g_v$ of the aforementioned proof, we again obtain the set $\Sigma \times \Sigma$, and the pair $(x^*, y^*)$. \qedhere
    \end{itemize}
\end{proof}

\begin{lemma}
\label{lemma:only_sep_you_need}
Let $G$ be a single-source digraph and $\E(G)$ be an upward planar embedding.
Let $|\Sigma_v| > 2$ and $F^* = (f_u, g_v, x^*, y^*) \in \Sigma_v$ with $(x^*, y^*)$ closest to $v$.
If $F^*$ violates upward planarity, then all separation flips in $\Sigma_v$ violate upward planarity. 
\end{lemma}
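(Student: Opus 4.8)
The plan is to analyze the structure of the set $\Sigma_v$ when $|\Sigma_v| > 2$, using that all its flips share the same pair of faces $f_u$ and $g_v$ and differ only in the separation pair $(x,y)$ they cut along. Since any two separation pairs in $\Sigma_v$ both separate $u$ from $v$ and both have corners on $f_u$ and $g_v$, the pairs $(x,y)$ all lie on a common fundamental cycle through $u$ and $v$, as in the analysis of Holm and Rotenberg~\cite{holm2017dynamic}; they are linearly ordered by how close they are to $v$, and $(x^*,y^*)$ is the innermost one. The key observation is that the subgraph $C_1^*$ that $F^*$ mirrors (the component of the separation pair containing $v$) is contained in the subgraph $C_1$ mirrored by any other $F = (f_u,g_v,x,y,v)$. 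Thus the outer boundary of $C_1^*$ — the part of the fundamental cycle together with the walk along $g_v$ from $x^*$ to $y^*$ through $v$ — is a sub-walk of the corresponding outer boundary for $C_1$.

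First I would make precise which combinatorial event can make a separation flip violate upward planarity. By Theorem~\ref{thm:facesink}, a flip is illegal exactly when, after the flip, either $\F(\E(G))$ stops being a forest whose roots are the outer face or critical vertices, or some edge fails to be drawable $y$-monotone. Following the bookkeeping in the proof of Theorem~\ref{thm:dynamicdatastructure} (paragraph ``Separation-flip''), the only corners whose type can change are the at most four corners at $x,y$ where the cuts happen, plus the spikes/top corners that move between $f_u$ and $g_v$; mirroring $C_1$ does not by itself alter the face-sink graph's combinatorics. So illegality of $F^*$ is caused either by one of the new corners at $x^*$ or $y^*$ being ``wrong'' (e.g. producing two edges below $x^*$, or creating a face with no top corner / two top corners), or by the outer cycle $\pi_1^*$ of $C_1^*$ carrying the unique source of $G$, or by $\pi_1^*$ containing a top corner that would have to be placed too low — exactly the kinds of obstructions enumerated in Lemma~\ref{lemma:legal_arti} for the articulation case, which the separation flip generalizes.

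Next I would argue monotonicity: each of these obstructions, once present for the innermost flip $F^*$, is inherited by every larger flip $F$. If $\pi_1^*$ contains the source of $G$, then since $\pi_1^* $ is contained in $\pi_1$ (the outer cycle of the larger mirrored component $C_1$), $\pi_1$ also contains the source, so $F$ is illegal for the same reason. If $\pi_1^*$ contains a top corner of $g_v$, then that top corner also lies on $\pi_1$, and mirroring $C_1$ moves it into $f_u$ just as mirroring $C_1^*$ did, so the same $y$-monotonicity violation occurs. The remaining case is an obstruction localized at $x^*$ or $y^*$: here I would use that, for a strictly larger flip $F=(f_u,g_v,x,y,v)$, the vertex $x^*$ (say) lies in the interior of the mirrored subgraph $C_1$, so the edges incident to $x^*$ and their cyclic order — hence the offending corner at $x^*$ — are simply carried along unchanged by the mirror of $C_1$ (only with orientation reversed, which does not fix the defect of being ``two edges pointing below $x^*$''), and meanwhile the analogous argument at $x,y$ shows that whatever happens at $x,y$ after $F$ is no better than after $F^*$. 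Concretely, the face incident to the reflex/convex angle at $x^*$ that made $F^*$ illegal is reproduced verbatim after $F$; so $F$ is illegal too.

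The main obstacle I anticipate is the last case — making rigorous that an obstruction ``at $x^*$'' for $F^*$ really does recur after an arbitrary larger flip $F$, because the roles of $x^*$ and $x$ are not symmetric and one must track carefully how the $y$-monotone feasibility at $x^*$ is affected by being mirrored twice (once inside $C_1$, not at all by $F^*$'s own cut) versus once. I expect to handle this by splitting on whether the defect of $F^*$ is ``pre-existing'' (i.e. $\E(G)$ was already on the boundary of upward planarity in a way the flip exposes, as in the ``before the slide $\E(G)$ was not upward planar'' branches of Lemma~\ref{lemma:legal_arti}) or ``created'' by the flip, and observing that in the first sub-case the defect is independent of which flip in $\Sigma_v$ we perform, while in the second sub-case the created corner at $x^*$ or $y^*$ has the same local structure for every $F\in\Sigma_v$ because $\pi_1^*\subseteq\pi_1$ pins down the incident edges and their directions. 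Combining the three cases gives that every flip in $\Sigma_v$ violates upward planarity, as claimed.
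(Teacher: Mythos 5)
Your plan gets the framing right up to a point: you correctly identify that $|\Sigma_v|>2$ forces all flips to share $f_u,g_v$, that the candidate separation pairs lie on one fundamental cycle and are nested, so $C_1^*\subseteq C_1$ and the outer boundary $\pi_1^*$ of $C_1^*$ is a subwalk of $\pi_1$, and you correctly observe that if $\pi_1^*$ carries the source or carries the top corner of $g_v$, then $\pi_1$ does too, so those obstructions are inherited monotonically. These parts match the paper.

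The gap is in your third case, which is the heart of the lemma, and the claim you use to handle it is false. You assert that an obstruction ``at $x^*$ or $y^*$'' created by $F^*$ ``is reproduced verbatim after $F$'' because $\pi_1^*\subseteq\pi_1$ pins down the local structure. But after the larger flip $F=(f_u,g_v,x,y,v)$, the vertices $x^*,y^*$ lie strictly in the interior of the mirrored component $C_1$; the corners at $x^*,y^*$ revert to their pre-flip configuration (merely mirrored), and the cuts that $F$ actually makes are at $x,y$, which are different vertices. So the defective corner that $F^*$ creates at $y^*$ is precisely \emph{not} created by $F$, and ``the same local structure for every $F\in\Sigma_v$'' does not hold. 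The paper instead first rules out $x^*$ as the location of any sink-corner change (both boundary edges $e_u^*,e_v^*$ at $x^*$ are necessarily outgoing, since otherwise a $\pi$-subwalk would contain the source), reducing to a changed sink corner at $y^*$; it then splits into four cases on the directions of the four edges $e_u,e_u',e_v,e_v'$ at $y^*$, and for each case shows that if the alternative flip $F'$ were legal, the walk $\pi_v'$ (from $v$ to $y'$ along $g_v$) would have to contain a vertex $z$ with two outgoing edges on $\pi_v'$, forcing source-to-$z$ and source-to-$x^*$ paths that together with $\pi_v$ enclose $g_v$ and contradict $y'$ being incident to $g_v$. That enclosure argument is the missing mechanism; ``same local structure'' cannot substitute for it. Finally, the paper also needs a closing step you omit: the above shows $F^*$ cannot legally \emph{lose} a sink corner, and the case where $F^*$ \emph{gains} one is reduced to it by considering the inverse flip from the post-$F^*$ embedding.
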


\begin{proof}
Since $|\Sigma_v| > 2$ it must be that all separation flips in $\Sigma_v$ involve the same faces $f_u$ and $g_v$ (Observation~\ref{obs:specialcase}).
Per definition, $u$ is incident to $f_u$ and $v$ is incident to $g_v$ before the flip.
We denote by $F' \in \Sigma_v$ any flip in $\Sigma_v$ which is not $F^*$.
By Observations~\ref{obs:normalcase} and \ref{obs:specialcase}, $F' = (f_u, g_v, x', y', v)$ for some separation pair $(x', y')$. 

\mypar{The proof setting.}
We first consider the proof setting (Figure~\ref{fig:newsink_setting}). 
Let $(x^*, y^*)$ separate $G$ in a flip component $C_1$ containing $u$ and a flip component $C_2$ containing $v$. 
We denote by $\pi$ the walk bounding $C_2$.
By $\pi_v$ we denote the subwalk of $\pi$ incident to $g_v$ and by $\pi_u$ its complement (incident to $f_u$). 
For the alterate flip $F'$, we define $\pi'$, $\pi_v'$, and $\pi_u'$ anologously.
Since $(x^*, y^*)$ is the closest pair to $v$, it must be that $\pi_v \subseteq \pi_v'$ and $\pi_u \subseteq \pi_v'$. 

It follows that the flip $F^*$ or alternative flip $F'$ cause $\pi_u$ to become incident to $g_v$ and $\pi_v$ to become incident to $f_u$. 
We note that both $F^*$ and $F'$ immediately violate upward planarity whenever $\pi_u$ or $\pi_v$ contain in their interior:
 the unique sink of $G$ (as this sink must remain incident to the outer face), or
 the top corner of $f_u$, respectively $g_v$ (since whenever the path contains the top corner, it must also contain a source in its interior. However, the digraph has one unique source which must remain incident to the outer face). 
Moreover, there can be no sink corners incident to $x^*$. Indeed, the first edge $e_v^*$ on $\pi_v$ and the first edge $e_u^*$ on $\pi_u$ must both be outgoing edges (else, one of $\pi_v$ or $\pi_u$ must contain the unique source).  Thus, there are never any sink corners of $f_u$ or $g_v$ incident to $x^*$.

\begin{figure}[hb]
\centering
\includegraphics[]{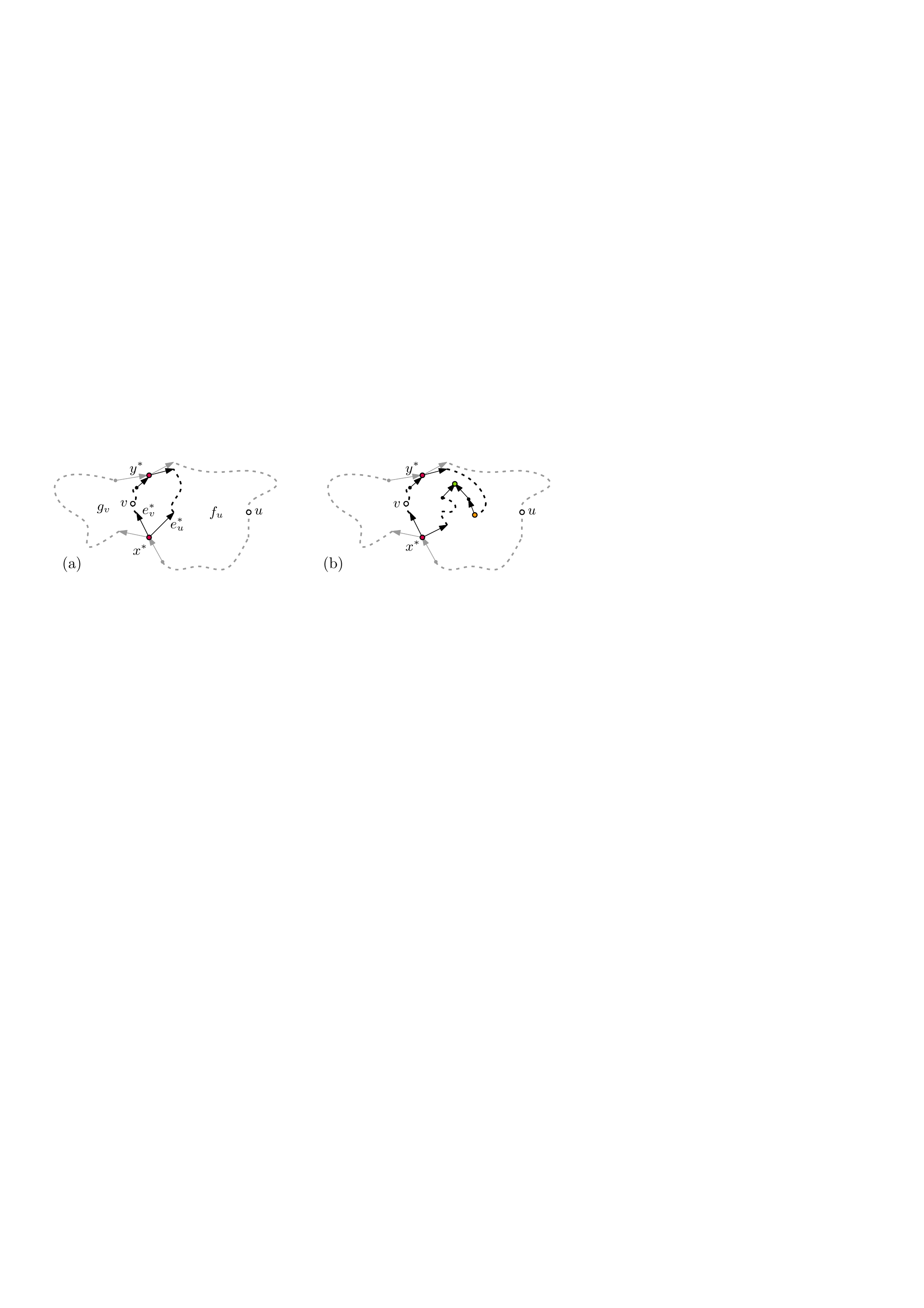}
\caption{
(a) A $u$ and $v$ in white, $(x^*, y^*)$ in red, and the faces $f_u$ and $g_v$. The paths $\pi_u$ and $\pi_v$ are shown in black.
The edges $e_u^*$ and $e_v^*$ must point outwards of $x^*$.
(b)  If $\pi_u$ contains the top corner of $f_u$ in its interior (green) then $\pi_u$ must contain a sink of $G$ (orange).
}
\label{fig:newsink_setting}
\end{figure}

\mypar{Assuming $F^*$ violates upward planarity.}
After $F^*$, all children of $f_u$ in the face-sink graph that lie on  $\pi_u$, become children of $g_v$. Similarly, all children of $g_v$ on $\pi_v$, become children of $f_u$. 
Since $f_u$ and $g_v$ keep their respective parents in the face sink graph, this operation cannot invalidate the face-sink graph (i.e., all these displaced children remain part of a valid tree in $\mathcal{F}(\E(G))$).

So suppose that after $F^*$, the face-sink graph is invalid. Then it must be that $F^*$ either adds or removes at least sink corner to $\E(G)$ (either disconnecting a tree in the face-sink graph, or adding a new invalid tree).
From the above analysis, this new or removed sink corner must be incident to $y^*$. 
Denote by $e_u$ and $e_v$ the last edges of $\pi_u$ and $\pi_v$, respectively. 
Denote by $e'_u$ (and $e'_v$) the other edges incident to both $y^*$ and $f_u$ (or $g_v)$.
If $F^*$ changes the number of sink corners incident to $y^*$ then 
$e_u$ is directed towards $y^*$ whilst $e_v$ is directed from $y^*$ (or vice versa).
Indeed, if they are both outgoing or incoming edges then the number of sink corners incident to $y^*$ must remain the same after $F^*$.
At least $e'_u$ or $e'_v$ must be directed towards $y^*$. 
Suppose that after $F^*$, $\mathcal{F}(\E(G))$ loses a sink corner $c^*$.
We now make a case distinction
on where whether one of $e_u'$ and $e_v'$ is not directed towards $y^*$ (Figure~\ref{fig:newsink_cases}(a + b))
or both are (Figure~\ref{fig:newsink_cases}(c + d)).

\mypar{The high-level argument.}
In each of these four cases, we will show that $F^*$ violates upward planarity when $\mathcal{F}(\E(G))$ loses a sink corner incident to $y^*$.
What remains, is to show that $F' = (f_u, g_v, x', y', v)$ also must violate upward planarity. 
If $y' = y^*$, then $F'$ must also violate upward planarity because it loses a sink corner incident to $y^*$. 
So assume that $y'$ succeeds $y$ on a path from $v$ to $u$. 
Per definition of the separation flip, $y'$ must be incident to both $f_u$ and $g_v$.
For the sake of contradiction, we assume that $F'$ does not violate upward planarity.
We show that this implies that the walk $\pi_v'$ must include a vertex $z$ which (on $\pi_v'$) has two outgoing edges. Since $\pi_v'$ cannot be the source of $G$, there must be a path  $\pi_1$ from the source of $G$ to $z$.
Similarly, there must be a path $\pi_2$ from the source of $G$ to $x^*$. 
However, the paths $\pi_1, \pi_2$ and $\pi_v$ now enclose $g_v$ (Figure~\ref{fig:newsink_cases}); this contradicts that $y'$ is incident to $g_v$. 
We are now ready for case analysis for when the flip $F^*$ removes a corner from $\mathcal{F}(\E(G))$:

\textbf{\boldmath Case (a): $e_u$ and $e_u'$ are directed towards $y^*$ and $e_v$ and $e_v'$ are not. }
The sink corner between $e_u$ and $e_u'$ must be a top corner.
Indeed, if the angle between $e_u$ and $e_u'$ is reflex, the edges $e_u'$ and $e_v'$ must be pointing downwards.
After performing $F^*$, $f_u$ is no longer incident to a top corner. This makes $f_u$ the root of a tree in the face-sink graph. The only roots of the face-sink graph can be a critical vertex.  $f_u$ is not the outer face because it was incident to a top corner. Thus, $F^*$ violates upward planarity. 
        
What remains is to show that $F'$ violates upward planarity. 
Assuming $F'$ does not violate upward planarity implies $F'$ makes the top corner of $f_u$ incident to $g_v$ instead. 
This in turn implies that $\pi_v'$ contains the top corner of $g_v$. Thus, $\pi_v'$ must contain a vertex $z$ where the edges of $\pi_v'$ point outwards of $z$.
Via the above argument, this is a contradiction.

\textbf{\boldmath Case (b):  $e_v$ and $e_v'$ are directed towards $y^*$ and $e_u$ and $e_u'$ are not.}
The flip $F^*$ violates upward planarity via the same reasoning as in case (a) where now $g_v$ is without a top corner.
So suppose for the sake of contradiction that $F'$ does not violate upward planarity.
Since $e_v'$ is directed towards $y^*$, subpath of $\pi_v'$ from $y^*$ to $y'$ must include edges which point upwards. 
This implies that $\pi_v'$ must include upward pointing edges and thus at least one vertex $z$ with two edges in $\pi_v'$ that are directed outwards of $z$.
Just as in case~(a), this implies that $g_v$ cannot be incident to $y'$ which is a contradiction. 

\textbf{\boldmath Case (c): $e_u, e_u', e_v'$ are directed towards $y^*$ and $e_v$ is not.}
The corner incident to $e_u$ and $e_u'$ must be the top corner of $f_u$: performing the flip would make $f_u$ an invalid root of a tree in the face-sink graph. 
If $F'$ does not violate upward planarity, $\pi_v'$ must contain the top corner of $g_v$ which introduces the same contradiction as case (a). 

\textbf{\boldmath Case (d): $e_v, e_u', e_v'$ are directed towards $y^*$ and $e_u$ is not.}
The corner incident to $e_v$ and $e_v'$ must be the top corner of $g_v$: performing the flip would make $g_v$ an invalid root of a tree in the face-sink graph. 
For any alternate flip $F'$, $\pi_v'$ must contain additional upward edges to reach $y'$, which implies the existence of at least one vertex $z$ with two edges in $\pi_v'$ that are directed outwards of $z$. 

\mypar{Concluding the argument.}
We showed that if $F^*$ causes $\mathcal{F}(\E(G))$ to lose a sink corner, then $F^*$ and any alternative flip $F'$ must violate upward planarity. 
We claim this implies that  
$F^*$ can never cause $\mathcal{F}(\E(G))$ to gain a sink corner. 
Indeed, suppose for the sake of contradiction that $F^*$ causes $\mathcal{F}(\E(G))$ to gain a sink corner without violating upward planarity.
Then undoing $F^*$ is a separation flip in an upward planar combinatorial embedding $\E(G)'$ that causes $\mathcal{F}(\E(G)')$ to lose a sink corner. However, our above analysis shows that the resulting embedding (which is the original $\E(G)$) then is not upward planar which is a contradiction. 
This concludes our argument.
\begin{figure}[t]
\centering
\includegraphics[page = 2]{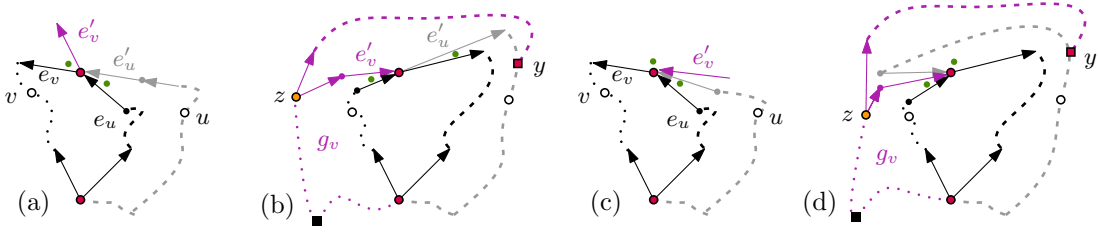}
\caption{
The vertices $u$ and $v$ are white, $(x^*, y^*)$ are round and red. 
We show in all four cases that $F^*$ around $(x^*, y^*)$ violates upward planarity.
For any other separation flip $F'$ including $y$, the purple walk must include a vertex $z$ with two outgoing edges.
}
\label{fig:newsink_cases}
\end{figure}
\end{proof}

\begin{lemma}
\label{lemma:really_only_one}
Let $G$ be a single-source digraph and $\E(G)$ be an upward planar embedding.
Let $(v, u)$ be not uplinkable in $G$ and let $|\Sigma_v| > 2$.
Let $(x^*, y^*)$ be the pair of vertices, closest to $v$, that correspond to a flip $F^* = (f_u, g_v, x, y, v)$ in $\Sigma_v$. 
If after $F^*$,  $u \tor v$ (or $v \tor u$) is not uplinkable, then after all separation flips in $\Sigma_v$ $u \tor v$ (or $v \tor u$) are not uplinkable.
\end{lemma}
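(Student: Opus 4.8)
The plan is to reduce, via Lemma~\ref{lemma:only_sep_you_need}, to the situation in which every flip in $\Sigma_v$ keeps $\E(G)$ upward planar, and then to compare the three insertability conditions of Lemma~\ref{lem:uplinkable} for the face $f_u$ across all flips in $\Sigma_v$. I argue the claim for the direction $u\tor v$; the direction $v\tor u$ follows by the symmetric argument, interchanging $u$ and $v$.

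First I dispose of a degenerate case: if $F^*$ violates upward planarity then, by Lemma~\ref{lemma:only_sep_you_need}, so does every $F\in\Sigma_v$, hence no flip in $\Sigma_v$ yields a valid upward embedding and the statement holds vacuously. So assume $F^*$ keeps $\E(G)$ upward planar; then, by the concluding argument of Lemma~\ref{lemma:only_sep_you_need}, so does every $F=(f_u,g_v,x',y',v)\in\Sigma_v$, no such flip creates or destroys a sink corner of $\E(G)$, and $f_u,g_v$ retain their parents in $\F(\E(G))$. Since $|\Sigma_v|>2$, all flips in $\Sigma_v$ involve exactly the faces $f_u$ and $g_v$ (Observations~\ref{obs:normalcase} and~\ref{obs:specialcase}). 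Before any flip $u$ and $v$ share no face, and after any $F\in\Sigma_v$ the mirrored component (which never contains $u$) brings $v$ onto $\partial f_u$ while leaving $u$'s incidences untouched; hence after each $F\in\Sigma_v$ the vertices $u$ and $v$ share only the face $f_u$, so ``$u\tor v$ uplinkable after $F$'' is equivalent to ``$u\tor v$ may be inserted across $f_u$ after $F$''. Whether $v<u$ in $G$ is unaffected by flips; if $v<u$, then $u\tor v$ is never insertable and the claim holds, so assume $v\not<u$, in which case by Lemma~\ref{lem:uplinkable} it suffices to show that conditions $(i)$--$(iii)$ of that lemma hold for $f_u$ after $F^*$ if and only if they hold for $f_u$ after every $F\in\Sigma_v$.

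Conditions $(i)$ and $(iii)$ are preserved by locality. The top-corner vertex of $f_u$ is the same after $F^*$ and after any $F\in\Sigma_v$, since these faces keep the same parent in $\F(\E(G))$; so $(i)$ has the same truth value, and by the proof setting of Lemma~\ref{lemma:only_sep_you_need} that vertex lies strictly inside the subwalk $\beta$ of $\partial f_u$ coming from $u$'s flip component (it is never at $x^*$ or $y^*$, where $f_u$ has no sink corner, nor in the interior of the subwalk coming from $v$'s component). For $(iii)$, the edges of $\partial f_u$ at $v$ are determined by the local shape at $v$ of the boundary walk of $v$'s flip component; after $F^*$ that walk is $\pi_v$, and after $F$ it is $\pi_v'$, which contains $\pi_v$ as a contiguous interior segment (Lemma~\ref{lemma:only_sep_you_need}) and hence agrees with it in a neighbourhood of $v$. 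So $f_u$ is incident to an edge directed towards $v$ after $F^*$ if and only if after $F$.

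The crux is condition $(ii)$. By Observation~\ref{obs:shortestpath} and Lemma~\ref{lemma:conflict}, once $(i)$ and $(iii)$ hold, $u\tor v$ is conflicted in $f_u$ exactly when the first sink-or-source corner met while reading the minimal subwalk $\pi_b$ from $v$ towards $u$ along the non-top side of $\partial f_u$ is a sink corner. Fix $(x^*,y^*)$ closest to $v$ and let $R$ be the ``ring'' component that $F$ detaches from $u$'s side; up to swapping the labels $x^*$ and $y^*$, the avoided top corner lies along $\beta$ on the $x^*$ side of $u$, so $\pi_b$ runs through $y^*$. After $F^*$ this walk is $\pi_b^\ast=P\cdot\rho_1\cdot Q$, where $P$ runs from $v$ to $y^*$ inside $v$'s flip component, $\rho_1$ is the boundary walk of $R$ from $y^*$ to $y'$ that bounds $f_u$ before the flip, and $Q$ runs from $y'$ to $u$ inside $u$'s flip component; after any $F\in\Sigma_v$ it is $\pi_b'=P\cdot\rho_2\cdot Q$ with the same $P$ (since $\pi_v'\supseteq\pi_v$) and the same $Q$ (a subwalk of $\beta$), but with $\rho_1$ replaced by the other boundary walk $\rho_2$ of $R$ from $y^*$ to $y'$. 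If $P$ carries a sink-or-source corner in its interior, that corner is the first one of both $\pi_b^\ast$ and $\pi_b'$ and conflictedness agrees; otherwise $P$ is a monotone directed path, and by the reasoning in the proof of Lemma~\ref{lem:uplinkable} (applicable here, since $(i)$--$(iii)$ hold for $f_u$ after $F^*$ and after $F$) its edge at $v$ points towards $v$, so $P$ is a directed path from $y^*$ to $v$ whose edge at $y^*$ points out of $y^*$. It then remains to show that the corner of $\pi_b$ at $y^*$ -- between this edge and the first edge of $\rho_1$, respectively of $\rho_2$ -- is a source corner after $F^*$ as well as after $F$, which makes it the first sink-or-source corner in both cases and hence leaves $u\tor v$ non-conflicted in $f_u$ after both. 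I would establish this by combining two facts already in hand: the mirror only reflects the fixed drawing of $R$ left-to-right, so $y$-monotonicity pins down on which side of $y^*$ the first edges of $\rho_1$ and of $\rho_2$ lie; and no flip in $\Sigma_v$ creates or destroys a sink corner (Lemma~\ref{lemma:only_sep_you_need}), so $y^*$ carries no sink corner of $f_u$ before or after any flip, which -- through the case analysis (cases $(a)$--$(d)$) of Lemma~\ref{lemma:only_sep_you_need} specialized to $y^*$ -- forces the first edge of $\rho_1$, and of $\rho_2$, to point out of $y^*$; both junction corners are then source corners. Matching condition $(ii)$ across $\Sigma_v$ this way, together with the matching of $(i)$ and $(iii)$, proves the lemma. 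I expect the ring-swap comparison to be the main obstacle, which is why I route it through the established structural facts about $R$ and about sink-corner preservation rather than re-deriving the shapes of $\rho_1$ and $\rho_2$ from scratch.
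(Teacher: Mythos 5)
Your proof correctly identifies the overall strategy — reduce via Lemma~\ref{lemma:only_sep_you_need} and then check conditions $(i)$--$(iii)$ of Lemma~\ref{lem:uplinkable} across all flips in $\Sigma_v$. For the $u\tor v$ direction, your treatment of $(i)$ and $(iii)$ matches the paper's (the paper phrases it as: the top corner of $f_u$ lies in $u$'s flip component, and the edges of $f_u$ at $v$ lie in $C^*\subseteq C$). For condition $(ii)$ you propose a walk decomposition $P\cdot\rho\cdot Q$ with a case analysis on corner types at the junction $y^*$; the paper instead goes more directly: a conflict after $F^*$ is due either to a directed boundary path (persists for any $F$) or to a sink corner on $\pi_b^*$ preceding any source corner, and by Observation~\ref{obs:shortestpath} that sink corner lies in $C^*\subseteq C$, so it persists. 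Your route may ultimately close, but you explicitly leave the junction-corner step in plan form (``I would establish this by combining two facts\ldots''), and it silently invokes the stronger claim that \emph{no} flip in $\Sigma_v$ creates or destroys a sink corner, which Lemma~\ref{lemma:only_sep_you_need}'s concluding argument establishes only for $F^*$; extending it to all $F\in\Sigma_v$ is plausible but not automatic and needs to be argued.

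The genuine gap is the $v\tor u$ direction. You dismiss it as ``the symmetric argument, interchanging $u$ and $v$'', but it is not symmetric: all flips under consideration live in $\Sigma_v$ (they mirror $v$'s component, leaving $u$'s incidences alone), so swapping the roles of $u$ and $v$ would instead concern $\Sigma_u$, which is a different set of flips. Concretely, for $v\tor u$ the relevant obstruction is no longer a sink corner inside $C^*$ but a sink corner $c_s$ in $G\setminus C^*$, and condition $(iii)$ now concerns edges at $u$ rather than at $v$. The paper therefore gives a separate three-point argument for $v\tor u$, and — crucially — it has to treat the sub-case $u\in(x,y)$ specially, resolving it by noting that if $u$ itself is one of the separation-pair vertices, then $u$ and $v$ already share $g_v$ before the flip, and insertability of $v\tor u$ across $f_u$ after $F$ would contradict the lemma's hypothesis that $(v,u)$ is not uplinkable in the current embedding. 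This is exactly where the hypothesis ``$(v,u)$ not uplinkable'' is used, and your proposal never invokes it. Without that case, the $v\tor u$ claim is unproven.

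A smaller point: you assert that after each $F\in\Sigma_v$ the vertices $u$ and $v$ share only $f_u$, so insertability after $F$ is equivalent to insertability across $f_u$. This is true when $u\notin\{x,y\}$, but when $u\in\{x,y\}$ the vertices $u$ and $v$ already share $g_v$ (and continue to after $F$), which is precisely the special case you are missing.
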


\begin{proof}
Let $F = (f_u', g_v', x, y, v)$ be a separation flip in $\Sigma_v$ which does not violate  upward planarity and $F \neq F^*$.
If $|\Sigma_v| > 2$ then by Observation~\ref{obs:normalcase} and \ref{obs:specialcase}: $(f_u, g_v) = (f_u', g_v' )$.
By Lemma~\ref{lem:uplinkable}, the edge $u \tor v$ may be inserted across $f_u$ if and only if:
\begin{enumerate}[$(i)$, noitemsep, nolistsep]
    \item $u$ is not incident to the top corner of $f_u$,
    \item  $u \tor v$ is  not conflicted in $f_u$, and
    \item $f_u$ is incident to an edge directed towards $v$. 
\end{enumerate}
Now note that:
\begin{enumerate}[$(i)$, noitemsep, nolistsep]
    \item the top corner of $f_u$ must lie in the flip component containing $u$ for both $F$ and $F^*$,
    \item 
    Denote by $C^*$ the component that is mirrored by the separation flip $(x^*, y^*)$ i.e. the component containing $v$.
    Denote for any alternate flip $F' = (f_u, g_v, x, y, v)$ by $C'$ its component mirrorred by the separation flip.
    Per definition of minimal, $C^* \subset C$. 
    
    Suppose that $v \to u$ is in conflict after $F^*$ because $f_u$ then has a directed path towards $u$. Then this must also be true for $F'$. 
    Otherwise, just as in Lemma~\ref{lemma:slideuplinkable}, $v \to u$ must be in conflict after $F^*$ only due to a sink corner in $C^*$.
    Specifically, the shortest path $\pi_b^*$ from $v$ to $u$ along $f$ must include a sink corner before a source corner, and this sink corner must lie in $C^*$ (Observation~\ref{obs:shortestpath}). 
    It immediately follows from $C^* \subset C$ that for all other separation flips $F'$, $u \to v$ must also be conflicted in $f$.
    \item all edges incident to $f_u$ and $v$ must lie in $C^*$ and the argument follows.
\end{enumerate}
We proved that $u \tor v$ may be inserted after $F$, if and only if it may be inserted after $F^*$.

Next, we show the same for the edge $v \tor u$.
\begin{enumerate}[$(i)$, noitemsep, nolistsep]
    \item By the proof of Lemma~\ref{lemma:only_sep_you_need}, the vertex $v$ cannot be incident to the top corner of $g_v$ before the flip (and $f_u$ after the flip).
    \item If $v \tor v$ is conflicted after $F^*$, they can either be conflicted due to a directed path incident to $f$ (which remains immutable) or due to a sink corner $c_s$ in the graph $G \backslash C^*$. 
    Any other flip $F' = (f_u, g_v, x, y, v)$ must keep this sink corner $c_s$ intact unless $u \in (x, y)$. 
    \item If $u \not \in (x, y)$ then $F'$ does not change the edges incident to $u$.  It follows that if after $F$ we violate Condition $(iii)$ then the same is true for $F'$ unless $u \in (x, y)$.
\end{enumerate}
It immediately follows that if $u \not \in (x, y)$ then $v \tor u$ may be inserted after $F$, if and only if it may be inserted after $F^*$.

Now suppose for the sake of contradiction that $u \in (x, y)$. Moreover, suppose that $v \tor u$ may be inserted after $F$. 
The definition of separation flip $u \in (x, y)$ implies that before $F$, $u$ also shares $g_v$ with the vertex $v$.
Moreover, if $v \tor u$ may be inserted across $f_u$ after $F$ then by Lemma~\ref{lem:uplinkable}, $v \tor u$ may be inserted across $g_v$ before $F$. However, this contradicts our assumption that before the flip, $v \tor u$ was not uplinkable. 
\end{proof}

\begin{lemma}
Let $u \tor v$ be not uplinkable in the current embedding $\E(G)$. Then we can answer Separation-UpLinkable$(u, v)$ in $O(\log^2 n)$ time.
\end{lemma}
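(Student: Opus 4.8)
The plan is to assemble the pieces already developed into a single $O(\log^2 n)$ procedure, mirroring the structure of the earlier query proofs. First I would handle the degenerate cases: if $u$ and $v$ are already uplinkable we are not called; otherwise, invoke Linkable$(u,v)$ in $O(\log^2 n)$ time to find all shared faces, and also compute (via the subroutine from Lemma~\ref{lemma:find_arti-flips}) the articulation slides and (via Lemma~\ref{lemma:slideuplinkable}) the twists; but for this query we only need separation flips, so after checking these cheap preconditions we move to $\Sigma_v$ and $\Sigma_u$. We run the identification routine of Lemma~\ref{lemma:find_sep-flips} which in $O(\log^2 n)$ time either detects the special case $|\Sigma_v|\le 2$ (Observation~\ref{obs:specialcase}), in which case we simply try each of the $O(1)$ flips: for each, apply the separation flip via Theorem~\ref{thm:dynamicdatastructure} in $O(\log^2 n)$, test uplinkability by Lemma~\ref{lemma:uplinkable_query}, and undo the flip; or it returns the unique flip $F^* = (f_u, g_v, x^*, y^*, v)$ whose separation pair is closest to $v$.

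In the main case $|\Sigma_v| > 2$, by Observations~\ref{obs:normalcase} and~\ref{obs:specialcase} every flip in $\Sigma_v$ uses the same two faces $f_u, g_v$, so the only freedom is the choice of separation pair. By Lemma~\ref{lemma:only_sep_you_need}, if $F^*$ violates upward planarity then so does every flip in $\Sigma_v$; and by Lemma~\ref{lemma:really_only_one}, if $F^*$ is legal but after $F^*$ the edge $u\tor v$ (resp.\ $v\tor u$) fails to be uplinkable, then it fails to be uplinkable after every flip in $\Sigma_v$. Consequently it suffices to test $F^*$ alone: apply $F^*$ using Theorem~\ref{thm:dynamicdatastructure} in $O(\log^2 n)$, then run the test of Lemma~\ref{lemma:uplinkable_query} (checking conditions $(i)$--$(iv)$ of Lemma~\ref{lem:uplinkable} on the face $f_u$), for both orientations $u\tor v$ and $v\tor u$ since we are not assuming $v\not< u$; finally undo $F^*$ to restore $\E(G)$. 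If this succeeds we output $F^*$; otherwise $\Sigma_v$ contributes nothing.

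We then repeat the entire procedure symmetrically for $\Sigma_u$, i.e.\ separation flips $F' = (f_v', g_u', x', y', u)$ that move $u$ into the face containing $v$, obtaining a candidate $F^{*}_u$ and testing it the same way. If either $\Sigma_v$ or $\Sigma_u$ yields a valid flip we report one such; if neither does, and the cheap special-case checks also failed, we report that no single separation flip makes $u$ and $v$ uplinkable. Every step costs $O(\log^2 n)$ and there is a constant number of them, so the total is $O(\log^2 n)$. The main obstacle is not any new combinatorial argument — those are discharged by Lemmas~\ref{lemma:find_sep-flips}, \ref{lemma:only_sep_you_need}, and~\ref{lemma:really_only_one} — but rather the bookkeeping of correctly interfacing with the Holm--Rotenberg Separation-Linkable machinery to extract $\Sigma_v$ (and $\Sigma_u$) and the closest pair $(x^*,y^*)$, and ensuring that the temporary application and subsequent undo of $F^*$ via Theorem~\ref{thm:dynamicdatastructure} leaves the data structure (including the face-sink forest and all auxiliary trees) in exactly its prior state.
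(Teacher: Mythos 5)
Your proposal is correct and follows essentially the same route as the paper: identify the closest flip $F^*$ via Lemma~\ref{lemma:find_sep-flips} (handling the $|\Sigma_v|\le 2$ case by brute force), apply it with Theorem~\ref{thm:dynamicdatastructure} to test upward planarity, test uplinkability via Lemma~\ref{lem:uplinkable}/Lemma~\ref{lemma:uplinkable_query}, appeal to Lemmas~\ref{lemma:only_sep_you_need} and~\ref{lemma:really_only_one} to rule out all other flips in $\Sigma_v$, and repeat symmetrically for $\Sigma_u$. The only differences are cosmetic (e.g.\ explicitly mentioning the undo of $F^*$ and the extra Linkable/slide/twist precondition checks), which do not change the argument or the $O(\log^2 n)$ bound.
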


\begin{proof}
We test whether there exists a separation flip $F = (f_u, g_v, x, y, v) \in \Sigma_v$ such that after the separation flip $u \tor v$ become uplinkable.
By Lemma~\ref{lemma:find_sep-flips}, we can find the separation flip $F^* \in \Sigma_v$ closest to $v$ that flips $v$ into a face $f_u$ incident to $u$. 
We use Theorem~\ref{thm:dynamicdatastructure} test whether performing $F^*$ would violate upward planarity in $O(\log^2 n)$ time (i.e. perform the separation flip, check if it got rejected). 
If $F^*$ violates upward planarity, we know by Lemma~\ref{lemma:only_sep_you_need} that there can be no separation flip involving $v$ (but not $u$) that does not violate upward planarity. 
If $F^*$ does not violate upward planarity, we check if the conditions of Lemma~\ref{lem:uplinkable} are met in $O(\log n)$ additional time.
If so, we report $F^*$. Otherwise, by Lemma~\ref{lemma:really_only_one} we know that there cannot be a separation flip in $\Sigma_v$ after which $u \tor v$ becomes uplinkable and we terminate our search.

Then, we test whether there exists a separation flip in the symmetrically-defined $\Sigma_u$ such that after the separation flip $u \tor v$ become uplinkable.
We do this in an identical way as above.
Observe that for this case, our assumption that $u \tor v$ is not uplinkable in $\E(G)$ allows us to apply Lemma~\ref{lemma:really_only_one}. This concludes the proof.
\end{proof}

\mypar{One-Flip-UpLinkable$(u, v)$:} can now immediately be done in $O(\log^2 n)$ by the above analysis. 
Indeed, by Holm and Rotenberg~\cite{holm2017dynamic} and our observations, we identify the unique projected articulation points $\pi(v)$ and $\pi(u)$ of $\Omega_v$ and $\Omega_u$ (if any).
Using Linkable$(\pi(u), \pi(v))$ we find all faces $\Delta$ shared between $\pi(v)$ and $\pi(v)$. We then filter from $\Delta$, in an identical manner as above, the faces $f$ where sliding and/or twisting $u$ and $v$ into $f$ makes $u \tor v$ uplinkable. 
For including separation flips, we find the unique faces $f_u$ and $f_v$ of $\Sigma_v$ and do the same consideration for slides, twists, or the separation flip that includes $u$.

\section{Extended preliminaries: Top trees.}
\label{ap:toptrees}
The face-sink graph is a collection of trees $\T^*, \T_1, \ldots, \T_m$.
We observe that every planar embedding $\E(G)$ induces a planar embedding of each tree $\T$. 
For each of these trees $\T$, we will maintain an embedding-sensitive decomposition of $\T$ which is known as a \emph{embedding-sensitive top tree} $\e(\T)$~\cite{alstrup2005maintaining, tarjan2005self}. 
Formally, for any connected subgraph $S$ of $\T$, we define the \emph{boundary vertices} of $S$ as the vertices in $S$ that are incident to an edge in $\T\setminus S$.  A \emph{cluster} is then a connected subgraph of $\T$ with at most $2$ boundary vertices.
We say that a cluster with one boundary vertex is a \emph{point cluster}, and with two boundary vertices is a \emph{path cluster}.
A top tree $\e(\T)$ is a hierarchical decomposition of $\T$ (with depth $O(\log n)$) into point and path clusters that is structured as follows: 
the leaves of $\e(\T)$ are the path and point clusters for each edge $u \tor v$ in $\T$ (where a leaf in $e(\T)$ is a point cluster if and only if the corresponding edge $u \tor v$ is a leaf in $\T$).
Each inner node $\nu \in \e(\T)$ \emph{merges} a constant number of \emph{child} clusters sharing a single vertex, into a new point or path cluster. The vertex set of $\nu$ merges those corresponding to its children. 
Furthermore, embedding-respecting top trees only allow merges of neighbouring clusters according to the edge-ordering of the embedding of $\T$ (to obtain such an embedding, simply fix a left-to-right ordering of all children of nodes in $\T$)  
Such embedding-respecting top trees have the following properties~\cite{holm2017dynamic}:
\begin{property}
\label{prop:toptree}
Any embedding-respecting top trees over trees with at most $n$ vertices support the following operations in $O(\log n)$ time:
\begin{itemize}[noitemsep, nolistsep]
    \item \textbf{\boldmath Insert/delete$(\e(T), \T, \T')$} updates $\e(T)$ after a insertion or deletion on $\T$.
    \item \textbf{\boldmath Separate$(\e(\T), v, v_1, v_2)$} constructs the top trees $\e(\T_1)$ $\e(\T_2)$ of the trees $\T_1$ and $\T_2$ that are obtained by separating $v$ into two nodes $v'$ and $v^*$ where $v'$ receives the children of $v$ from $v_1$ to $v_2$ and $v^*$ receives the complement (this is equal to splitting $v$ along two corners in the embedding). 
    
        \item \textbf{\boldmath Join$(\e(\T_1), \e(\T_2), u, v)$} constructs the top tree $\e(\T)$ from the tree  $\T$ that is obtained by joining $\T_1$ and $T_2$ through vertices $u \in \T_1$ and $v \in \T_2$.
        \item \textbf{\boldmath Invert$(\e(\T), v, u, w)$} updates $\e(\T)$ after selecting for a vertex $v$, a subsequence of children $(u, \ldots w)$ and inverting their left-to-right order.
\end{itemize}
\end{property}
 
\bibliographystyle{plain}
\bibliography{refs}

\end{document}